\newcommand{\val}{\mbox{\rm val}}
\newcommand{\cost}{\mbox{\rm R}}
\newcommand{\RR}{\mbox{\rm TB}}
\newcommand{\Red}{\mbox{\rm Red}}
\newcommand{\PromValExtErg}{\mbox{\sc PromValErg}}
\newcommand{\PTIME}{\mbox{\sf PTIME}}
\newcommand{\PSPACE}{\mbox{\sf PSPACE}}
\newcommand{\PPAD}{\mbox{\sf PPAD}}
\newcommand{\FNP}{\mbox{\sf FNP}}
\newcommand{\NP}{\mbox{\sf NP}}
\newcommand{\coNP}{\mbox{\sf coNP}}
\newcommand{\FPTAS}{\mbox{\sf FPTAS}}
\newcommand{\Proc}{\mathsf{Proc}}
\newcommand{\D}{{\mathcal D}}
\newcommand{\ProbDist}{\mathsf{ProbDist}}
\newcommand{\Exp}{\mathbb{E}}
\newcommand{\Z}{\ensuremath{{\rm \mathbb Z}}}
\newcommand{\N}{\ensuremath{{\rm \mathbb N}}}
\newcommand{\var}{\mathsf{var}}
\newcommand{\ExpRew}{\mathsf{ExpRew}}
\newcommand{\act}{A}
\newcommand{\mov}{\Gamma}
\newcommand{\trans}{\delta}
\newcommand{\stra}{\sigma}
\newcommand{\bigstra}{\Sigma}
\newcommand\distr{{\mathcal D}}
\newcommand\pat{\pi}
\newcommand\pats{\Pi}
\newcommand{\game}{G}
\newcommand{\supp}{\mathrm{Supp}}
\newcommand{\cala}{{\mathcal A}}
\newcommand{\Avg}{\mathsf{Avg}}
\newcommand{\LimInfAvg}{\mathsf{LimInfAvg}}
\newcommand{\LimSupAvg}{\mathsf{LimSupAvg}}
\def\set#1{\{ #1 \}}
\newcommand{\ov}{\overline}
\newcommand{\UH}{\underline{H}}
\newcommand{\OH}{\overline{H}}
\def\@comment{\let\do\@makeother \dospecials\catcode`\^^M=10\def\par{}}
\def\begincomment{\@comment\@xcomment}
\newenvironment{comment}{\begincomment}{}
 \newtheorem{theorem}{Theorem}
 \newtheorem{corollary}[theorem]{Corollary}
 \newtheorem{lemma}[theorem]{Lemma}
 \newtheorem{remark}[theorem]{Remark}
\newcommand{\TrueSat}{\mbox{\rm TrueSat}}
\newcommand{\Sat}{\mbox{\rm Sat}}
\newcommand{\VarHoffmanKarp}{\mbox{\sc VarHoffmanKarp}}
\newcommand{\poly}{\mbox{\rm POLY}}
\newcommand{\BestStrategyProg}{\mbox{\rm BestStrategyProg}}
\author{
Krishnendu Chatterjee\thanks{IST Austria. Email: {\tt krish.chat@ist.ac.at}}
\and Rasmus Ibsen-Jensen\thanks{Department of Computer Science, Aarhus University, Denmark. E-mail:
{\tt rij@cs.au.dk}
}}
\begin{document}
\title{The Complexity of Ergodic Mean-payoff Games\thanks{
The first author was supported by 
FWF Grant No P 23499-N23,  FWF NFN Grant No S11407-N23 (RiSE), ERC Start grant (279307: Graph Games), and Microsoft faculty fellows award. Work of the second author supported by the Sino-Danish Center for the Theory of Interactive Computation,
funded by the Danish National Research Foundation and the National
Science Foundation of China (under the grant 61061130540). The second author acknowledge support from the Center for research in
the Foundations of Electronic Markets (CFEM), supported by the Danish
Strategic Research Council.}}
\date{}
\maketitle

\begin{abstract}
We study  two-player (zero-sum) concurrent mean-payoff games 
played on a finite-state graph.
We focus on the important sub-class of ergodic games where all states are 
visited infinitely often with probability~1.
The algorithmic study of ergodic games was initiated in a seminal work 
of Hoffman and Karp in 1966, but all basic complexity questions have remained 
unresolved.
Our main results for ergodic games are as follows: We establish
(1)~an optimal exponential bound on the patience of stationary 
strategies (where patience of a distribution is the inverse of the smallest
positive probability and represents a complexity measure of a stationary 
strategy);
(2)~the approximation problem lies in $\FNP$;
(3)~the approximation problem is at least as hard as the decision problem 
for simple stochastic games (for which $\NP \cap \coNP$ is the long-standing
best known bound).
We present a variant of the strategy-iteration algorithm by Hoffman and Karp; 
show that both our algorithm and the classical value-iteration algorithm can 
approximate the value in exponential time; and 
identify a subclass where the value-iteration algorithm is a $\FPTAS$.
We also show that the exact value can be expressed in the existential
theory of the reals, and establish square-root sum hardness for a related
class of games.
\end{abstract}

\smallskip\noindent{\bf Keywords:} {\em Concurrent games; Mean-payoff objectives;
Ergodic games; Approximation complexity.}

\section{Introduction}

\noindent{\bf Concurrent games.}
Concurrent games are played over finite-state graphs by two players 
(Player~1 and Player~2) for an infinite number of rounds.
In every round, both players simultaneously choose moves (or actions), 
and the current state and the joint moves determine a probability 
distribution over the successor states. 
The outcome of the game (or a \emph{play}) is an infinite sequence of states
and action pairs.
Concurrent games were introduced in a seminal work by Shapley~\cite{Sha53}, and 
they are the most well-studied game models in stochastic graph games, with many important special cases.

\smallskip\noindent{\bf Mean-payoff (limit-average) objectives.}
The most fundamental objective for concurrent games is the \emph{limit-average}
(or mean-payoff) objective, where a reward is associated to every transition 
and the payoff of a play is the limit-inferior (or limit-superior) average 
of the rewards of the play. 
The original work of Shapley~\cite{Sha53} considered \emph{discounted} sum 
objectives (or games that stop with probability~1); and the class of 
concurrent games with limit-average objectives (or games that have zero stop 
probabilities) was introduced by Gillette in~\cite{Gil57}.
The Player-1 \emph{value} $\val(s)$ of the game at a state $s$ is the 
supremum value of the expectation that Player~1 can guarantee for the 
limit-average objective against all strategies of Player~2.
The games are zero-sum, so the objective of Player~2 is the opposite.
The study of concurrent mean-payoff games and its sub-classes have 
received huge attention over the last decades, both for mathematical
results as well as algorithmic studies.
Some key celebrated results are as follows:
(1)~the existence of values (or determinacy or equivalence of switching of 
strategy quantifiers for the players as in von-Neumann's min-max theorem) for 
concurrent discounted games was established in~\cite{Sha53};
(2)~the result of Blackwell and Ferguson established existence of values for
the celebrated game of Big-Match~\cite{BF68}; and
(3)~developing on the results of~\cite{BF68} and Bewley and Kohlberg on Puisuex 
series~\cite{BK76} the existence of values for concurrent mean-payoff games was 
established by Mertens and Neyman~\cite{MN81}.

\smallskip\noindent{\bf Sub-classes.} The general class of concurrent mean-payoff
games is notoriously difficult for algorithmic analysis. 
The current best known solution for general concurrent mean-payoff games 
is achieved by a reduction to the theory of the reals over addition and 
multiplication with three quantifier alternations~\cite{CMH08} (also 
see~\cite{HKLMT11} for a better reduction for constant state spaces).
The strategies that are required in general for concurrent mean-payoff games 
are infinite-memory strategies that depend in a complex way on the history 
of the game~\cite{MN81,BF68}, 
and analysis of such strategies make the algorithmic study complicated.
Hence several sub-classes of concurrent mean-payoff games have been studied 
algorithmically both in terms of restrictions of the graph structure and
restrictions of the objective.
The three prominent restrictions in terms of the graph structure are as follows:
(1)~\emph{Ergodic games (aka irreducible games)} where every state is visited infinitely often almost-surely.
(2)~\emph{Turn-based stochastic games}, where in each state at most one player can choose 
between multiple moves.
(3)~\emph{Deterministic games}, where the transition functions are deterministic.
The most well-studied restriction in terms of objective is the \emph{reachability} 
objectives.
A reachability objective consists of a set $U$ of \emph{terminal} states 
(absorbing or sink states that are states with only self-loops),
such that the set $U$ is exactly the set of states where out-going transitions 
are assigned reward~1 and all other transitions are assigned reward~0.
For all these sub-classes, except deterministic mean-payoff games (that is ergodic mean-payoff games, concurrent reachability 
games, and turn-based stochastic mean-payoff games) \emph{stationary} strategies are 
sufficient, where a stationary strategy is independent of the past history of 
the game and depends only on the current state.

\begin{wrapfigure}{r}{0.4\linewidth}
\vspace{-12mm}
\begin{center}
\begin{tikzpicture}[node distance=3cm]
\matrix (v1) [label=above:$t:$,minimum height=1.5em,minimum width=1.5em,matrix of math nodes,nodes in empty cells, left delimiter={.},right delimiter={.}]
{
\\
};
\draw[black] (v1-1-1.north west) -- (v1-1-1.north east);
\draw[black] (v1-1-1.south west) -- (v1-1-1.south east);
\draw[black] (v1-1-1.north west) -- (v1-1-1.south west);
\draw[black] (v1-1-1.north east) -- (v1-1-1.south east);
\matrix (s) [label=above left:$s:$,left of=v1,minimum height=1.5em,minimum width=1.5em,matrix of math nodes,nodes in empty cells, left delimiter={.},right delimiter={.}]
{
&\\
&\\
};
\draw[black] (s-1-1.north west) -- (s-1-2.north east);
\draw[black] (s-1-1.south west) -- (s-1-2.south east);
\draw[black] (s-2-1.south west) -- (s-2-2.south east);
\draw[black] (s-1-1.north west) -- (s-2-1.south west);
\draw[black] (s-1-2.north west) -- (s-2-2.south west);
\draw[black] (s-1-2.north east) -- (s-2-2.south east);

\node[anchor=south east, left=1pt] (s-0-0) at (s-1-1.north west) {};
\foreach[count=\i] \v in {\(A\),\(B\)}{
  \node (s-\i-0) at (s-0-0 |- s-\i-\i) {};
  \path (s-\i-0.north) -- (s-\i-0.south) node [midway, left] { $a_{\i}$ };
}

\node[anchor=north east, below=1pt] (s-3-3) at (s-2-2.south east) {};
\foreach[count=\i] \v in {\(A\),\(B\)}{
  \node (s-\i-3) at (s-\i-\i |- s-3-3) {};
  \path (s-\i-3.east) -- (s-\i-3.west) node [midway, below] { $b_{\i}$ };
}

\draw[->,dashed](s-1-1.center) .. controls +(45:1) and +(90:1).. node[midway,above] (x) {2} (s-1-1.north);
\draw[->,dashed,out=45,in=130](s-1-1.center) 
 to node[midway,above] (x) {2} (v1);
\draw[->,out=65,in=155](s-1-2.center) to node[midway,below] (x) {1}  (v1);
\draw[->,out=-65,in=-155](s-2-2.center) to node[midway,above] (x) {2}  (v1);
\draw[->,out=-45,in=-130](s-2-1.center) to node[midway,below] (x) {1} (v1);
\draw[->](v1.center) to node[above,pos=0.55] (x) {2}(s);
\end{tikzpicture}
\vspace{-6mm}
\caption{Example game $G$.
\label{fig:ergodic sqrt intro}
}
\end{center}
\vspace{-6mm}
\end{wrapfigure}
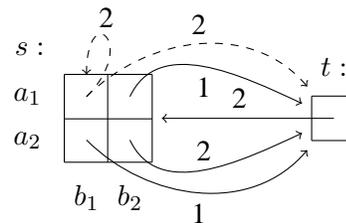

\smallskip\noindent{\bf An example.}
Consider the ergodic mean-payoff game shown in Figure~\ref{fig:ergodic sqrt intro}.
All transitions other than the dashed edges have probability~1, and each 
dashed edge has probability~$1/2$.
The transitions are annotated with the rewards.
The stationary optimal strategy for both players is to play the first action
($a_1$ and $b_1$ for Player~1 and Player~2, respectively)  
with probability $4-2\cdot \sqrt{3}$ in state $s$, and this ensures that the value is $\sqrt{3}$.

\smallskip\noindent{\bf Previous results.}
The decision problem of whether the value of the game at a state is at least a 
given threshold for turn-based stochastic reachability games (and 
also turn-based mean-payoff games with deterministic transition function) 
lie in $\NP \cap \coNP$~\cite{Con92,ZP96}.
They are among the rare and intriguing combinatorial problems that lie 
in $\NP \cap  \coNP$, but not known to be in $\PTIME$.
The existence of polynomial-time algorithms for the above decision 
questions are long-standing open problems.
The algorithmic solution for turn-based games that is most efficient in 
practice is the \emph{strategy-iteration} algorithm, where the 
algorithm iterates over local improvement of strategies which is
then established to converge to a globally optimal 
strategy.
For ergodic games, Hoffman and Karp~\cite{HK} presented a strategy-iteration
algorithm and also established that stationary strategies are sufficient
for such games.
For concurrent reachability games, again stationary strategies are 
sufficient (for $\epsilon$-optimal strategies, for all 
$\epsilon>0$)~\cite{Eve57,dAM01}; the decision problem is in $\PSPACE$ and 
\emph{square-root sum} hard~\cite{EY}.\footnote{The square-root sum problem 
is an important problem from computational geometry, where given a set of 
natural numbers $n_1,n_2,\ldots,n_k$, 
the question is whether the sum of the square roots exceed an integer $b$. 
The square root sum problem is not known to be in $\NP$.}

\smallskip\noindent{\bf Key intriguing complexity questions.}
There are several key intriguing open questions related to the 
complexity of the various sub-classes of concurrent mean-payoff games.
Some of them are as follows:
(1)~Does there exist a sub-class of concurrent mean-payoff games where the 
approximation problem is simpler than the exact decision problem, e.g., 
the decision problem is square-root sum hard, but the approximation problem can 
be solved in $\FNP$?
(2)~There is no convergence result associated with the two classical algorithms,
namely the strategy-iteration algorithm of  Hoffman and Karp, 
and the value-iteration algorithm, for ergodic games; and is it possible to 
establish a convergence for them for approximating the values of ergodic games.
(3)~The complexity of a stationary strategy is described by its \emph{patience}
which is the inverse of the minimum non-zero probability assigned to a move~\cite{Eve57},
and there is no bound known for the patience of stationary strategies for 
ergodic games.

\smallskip\noindent{\bf Our results.} 
The study of the ergodic games was initiated in the seminal work of 
Hoffman and Karp~\cite{HK}, and most of the complexity questions
(related to computational-, strategy-, and algorithmic-complexity) 
have remained open. 
In this work we focus on the complexity of simple generalizations of 
ergodic games (that subsume ergodic games).
Ergodic games form a very important sub-class of concurrent games
subsuming the special cases of uni-chain Markov decision processes
and uni-chain turn-based stochastic games (that have been studied in great 
depth in the literature with numerous applications, see~\cite{FV97,Puterman}).
We consider generalizations of ergodic games called 
\emph{sure} ergodic games where all plays are guaranteed to reach an ergodic 
component (a sub-game that is ergodic); and 
\emph{almost-sure} ergodic games where with probability~1 an ergodic component 
is reached.
Every ergodic game is sure ergodic, and every sure ergodic game is almost-sure
ergodic.
Intuitively the generalizations allow us to consider that after a finite prefix
an ergodic component is reached.

\begin{enumerate}
\item \emph{(Strategy and approximation complexity).}
We show that for almost-sure ergodic games the optimal bound on patience 
required for $\epsilon$-optimal stationary strategies, for $\epsilon>0$, 
is exponential (we establish the upper bound for almost-sure ergodic games, 
and the lower bound for ergodic games).
We then show that the approximation problem for {\em turn-based} stochastic 
ergodic mean-payoff games is at least as hard as solving the decision 
problem for turn-based stochastic reachability games (aka simple stochastic
games); and finally show that the approximation problem 
belongs to $\FNP$ for almost-sure ergodic games.
Observe that our results imply that improving our $\FNP$-bound
for the approximation problem to polynomial time would require solving 
the long-standing open question of whether the decision problem of 
turn-based stochastic reachability games can be solved in polynomial time.

\item \emph{(Algorithm).} We present a variant of the Hoffman-Karp algorithm and 
show that for all $\epsilon$-approximation (for $\epsilon>0$) our algorithm converges 
with in exponential number of iterations for almost-sure ergodic games.
Again our result is optimal, since even for turn-based stochastic reachability
games the strategy-iteration algorithms require exponential iterations~\cite{Fr11,Fe10}.
We analyze the value-iteration algorithm for ergodic games and show that for all 
$\epsilon>0$, the value-iteration algorithm requires at most 
$O(\UH \cdot W \cdot \epsilon^{-1} \cdot \log (\epsilon^{-1}))$ iterations,
where $\UH$ is the upper bound on the expected hitting time of state pairs that 
Player~1 can ensure and $W$ is the maximal reward value.
We show that $\UH$ is at most $n\cdot (\delta_{\min})^{-n}$, where $n$ is the number
of states of the game, and $\delta_{\min}$ the smallest positive transition 
probability.
Thus our result establishes an exponential upper bound for the value-iteration 
algorithm for approximation.
This result is in sharp contrast to concurrent reachability games where the 
value-iteration algorithm requires double exponentially many steps~\cite{HIM11}.
Observe that  we have a polynomial-time approximation scheme if $\UH$ is 
polynomial and the numbers $W$ and $\epsilon$ are represented in unary.
Thus we identify a subclass of ergodic games where the value-iteration 
algorithm is polynomial (see Remark~\ref{remark:val-iter} for further details).

\item \emph{(Exact complexity).} We show that the exact decision problem for 
almost-sure ergodic games can be expressed in the existential theory of the reals (in contrast to 
general concurrent mean-payoff games where quantifier alternations are required).
Finally, we show that the exact decision problem for sure ergodic games is 
square-root sum hard.

\end{enumerate} 

\noindent{\bf Technical contribution and remarks.} 
Our main result is establishing the optimal bound of exponential patience for
$\epsilon$-optimal stationary strategies, for $\epsilon>0$, in almost-sure ergodic games. 
Our result is in sharp contrast to the optimal bound of double-exponential 
patience for concurrent reachability games~\cite{HKM09}, and also the 
double-exponential iterations required by the strategy-iteration and 
the value-iteration algorithms for concurrent reachability games~\cite{HIM11}.
Our upper bound on the exponential patience is achieved by a coupling 
argument.
While coupling argument is a well-established tool in probability theory, 
to the best of our knowledge the argument has not been used for concurrent 
mean-payoff games before.
Our lower bound example constructs a family of ergodic mean-payoff games where 
exponential patience is required. 
Our results provide a complete picture for almost-sure and sure ergodic games 
(subsuming ergodic games) in terms of strategy complexity, computational complexity,
and algorithmic complexity; and present answers to some of the key intriguing 
open questions related to the computational complexity of concurrent mean-payoff games.

\smallskip\noindent{\bf Comparison with results for Shapley games.}
For Shapley (concurrent discounted) games, the exact decision problem is 
square-root sum hard~\cite{EY10}, and the fact that the approximation 
problem is in $\FNP$ is straight-forward to prove\footnote{The basic argument 
is to show that for $\epsilon$-approximation, for $\epsilon>0$, in discounted games,
the players need to play optimally only for exponentially many steps, and hence 
a strategy with exponential patience for $\epsilon$-approximation can be constructed.
For details, see~\cite[Lemma~6, Section~1.10]{I13}: we thank Peter Bro Miltersen
for this argument.}.
The more interesting and challenging question is whether the approximation problem can 
be solved in $\PPAD$. 
The $\PPAD$ complexity for the approximation problem for Shapley games was 
established in~\cite{EY10}; and the $\PPAD$ complexity arguments use the 
existence of unique (Banach) fixpoint (due to contraction mapping) and the 
fact that weak approximation implies strong approximation.
A $\PPAD$ complexity result for the class of ergodic games (in particular, 
whether weak approximation implies strong approximation) is a subject for 
future work.
Another interesting direction of future work would be to extend our results 
for concurrent games where the values of all states are very close together; and 
for this class of games existence of near optimal stationary strategies was established in~\cite{TechRpt}.

\newcommand{\rand}{r}
\section{Definitions}

In this section we present the definitions of game structures, strategies, 
mean-payoff function, values, and other basic notions.

\smallskip\noindent{\bf Probability distributions.}
For a finite set~$A$, a {\em probability distribution\/} on $A$ is a
function $\trans\!:A\to[0,1]$ such that $\sum_{a \in A} \trans(a) = 1$.
We denote the set of probability distributions on $A$ by $\distr(A)$. 
Given a distribution $\trans \in \distr(A)$, we denote by $\supp(\trans) = 
\{x\in A \mid \trans(x) > 0\}$ the {\em support\/} of the distribution 
$\trans$. We denote by $\rand$ the number of \emph{random} states where the 
transition function is not deterministic, i.e., 
$\rand=|\set{s \in S \mid \exists a_1 \in \Gamma_1(s),
a_2 \in \Gamma_2(s). |\supp(\trans(s,a_1,a_2))| \geq 2}|$.

\smallskip\noindent{\bf Concurrent game structures.} 
A {\em concurrent stochastic game structure\/} 
$\game =  (S, \act,\mov_1, \mov_2, \trans)$ has the 
following components.

\begin{compactitem}

\item A finite state space $S$ and a finite set $\act$ of actions (or moves).

\item Two move assignments $\mov_1, \mov_2 \!: S\to 2^{\act}
	\setminus \emptyset$.  For $i \in \{1,2\}$, assignment
	$\mov_i$ associates with each state $s \in S$ the non-empty
	set $\mov_i(s) \subseteq \act$ of moves available to Player~$i$
	at state $s$.  

\item A probabilistic transition function
	$\trans\!:S\times\act\times\act \to \distr(S)$, which
	associates with every state $s \in S$ and moves $a_1 \in
	\mov_1(s)$ and $a_2 \in \mov_2(s)$, a probability
	distribution $\trans(s,a_1,a_2) \in \distr(S)$ for the
	successor state.
\end{compactitem}
We denote by $\trans_{\min}$ the minimum non-zero transition 
probability, i.e., $\trans_{\min}=\min_{s,t \in S} \min_{a_1 \in \mov_1(s),a_2\in \mov_2(s)}
\set{\trans(s,a_1,a_2)(t) \mid \trans(s,a_1,a_2)(t)>0}$.
We denote by $n$ the number of states (i.e., $n=|S|$), and by 
$m$ the maximal number of actions available for a player at a state 
(i.e., $m=\max_{s\in S} \max\set{|\mov_1(s)|,|\mov_2(s)|}$). We denote by $\rand$ the number of \emph{random} states where the 
transition function is not deterministic, i.e., 
$\rand=|\set{s \in S \mid \exists a_1 \in \Gamma_1(s),
a_2 \in \Gamma_2(s). |\supp(\trans(s,a_1,a_2))| \geq 2}|$.

\medskip\noindent{\bf Plays.}
At every state $s\in S$, Player~1 chooses a move $a_1\in\mov_1(s)$,
and simultaneously and independently
Player~2 chooses a move $a_2\in\mov_2(s)$.  
The game then proceeds to the successor state $t$ with probability
$\trans(s,a_1,a_2)(t)$, for all $t \in S$. 
A {\em path\/} or a {\em play\/} of $\game$ is an infinite sequence
$\pat =\big( (s_0,a^0_1, a^0_2), (s_1, a^1_1, a^1_2), (s_2,a_1^2,a_2^2)\ldots\big)$ of states and action pairs such that for all 
$k\ge 0$ we have (i)~$a^k_1 \in \mov_1(s_k)$ and $a^k_2 \in \mov_2(s_k)$; and 
(ii)~$s_{k+1} \in \supp(\trans(s_k,a^k_1,a^k_2))$.
We denote by $\pats$ the set of all paths.

\smallskip\noindent{\bf Strategies.}
A {\em strategy\/} for a player is a recipe that describes how to 
extend prefixes of a play.
Formally, a strategy for Player~$i\in\{1,2\}$ is a mapping 
$\stra_i\!:(S\times \act \times \act)^* \times S \to\distr(\act)$ that associates with every 
finite sequence $x \in (S\times \act \times \act)^*$  of state and action pairs, and the 
current state $s$ in $S$, representing the past history of the game, 
a probability distribution $\stra_i(x \cdot s)$ used to select
the next move. 
The strategy $\stra_i$ can prescribe only moves that are available to Player~$i$;
that is, for all sequences $x\in (S \times \act \times \act)^*$ and states $s\in S$, we require that
$\supp(\stra_i(x\cdot s)) \subseteq \mov_i(s)$.  
We denote by $\bigstra_i$ the set of all strategies for Player $i\in\{1,2\}$.
Once the starting state $s$ and the strategies $\stra_1$ and $\stra_2$
for the two players have been chosen, 
then we have a random walk $\pat_s^{\stra_1,\stra_2}$ for which 
the probabilities of events are uniquely defined~\cite{VardiP85}, where an {\em
event\/} $\cala\subseteq\pats$ is a measurable set of
paths.
For an event $\cala\subseteq\pats$, we denote by $\Pr_s^{\stra_1,\stra_2}(\cala)$ 
the probability that a path belongs to $\cala$ when the game starts from $s$ and 
the players use the strategies $\stra_1$ and~$\stra_2$;
and denote $\Exp_{s}^{\stra_1,\stra_2}[\cdot]$ as the associated 
expectation measure.
We consider in particular stationary and positional strategies. 
%
A strategy $\sigma_i$ is \emph{stationary} (or memoryless) if it is independent of the history 
but only depends on the current state, i.e., for all $x,x'\in(S\times A\times A)^*$ and all $s\in S$, 
we have $\sigma_i(x\cdot s)=\sigma_i(x'\cdot s)$, and thus can be expressed as a function 
$\stra_i: S \to \distr(\act)$. 
For stationary strategies, the complexity of the strategy is described by the 
\emph{patience} of the strategy, which is the inverse of the minimum non-zero 
probability assigned to an action~\cite{Eve57}. 
Formally, for a stationary strategy  $\stra_i:S \to \distr(\act)$ for Player~$i$, 
the patience is  
$\max_{s \in S} \max_{a \in \mov_i(s)} \set{ \frac{1}{\stra_i(s)(a)} \mid \stra_i(s)(a)>0}$.
A strategy is \emph{pure (deterministic)} if it does not use randomization, i.e., for any history there is always
some unique action $a$ that is played with probability~1.
A pure stationary strategy $\stra_i$ is also called a {\em positional} strategy, and 
represented as a function $\stra_i: S \to \act$.
We call a pair of strategies $(\stra_1,\stra_2) \in \bigstra_1 \times \bigstra_2$ 
a \emph{strategy profile}.

\smallskip\noindent{\bf The mean-payoff function.}
In this work we consider maximizing \emph{limit-average} (or mean-payoff) functions
for Player~1, and the objective of Player~2 is opposite (i.e., the games are zero-sum). 
We consider concurrent games with a reward function $\cost: S \times \act \times 
\act \to [0,1]$ that assigns a reward value $0\leq \cost(s,a_1,a_2)\leq 1$ 
for all $s\in S$, $a_1 \in \mov_1(s)$, and $a_2 \in \mov_2(s)$.
For a path $\pat= \big((s_0, a^0_1, a^0_2), (s_1, a^1_1,a^1_2), \ldots\big)$, 
the average for $T$ steps is $\Avg_T(\pat)= \frac{1}{T} \cdot \sum_{i=0}^{T-1} \cost(s_i,a^i_1,a^i_2)$,
and the limit-inferior average (resp. limit-superior average) is defined as 
follows:
$\LimInfAvg(\pat)= \lim\inf_{T \to \infty} \Avg_T$ 
(resp. $\LimSupAvg(\pat)= \lim\sup_{T \to \infty} \Avg_T$).
For brevity we denote concurrent games with mean-payoff functions as CMPGs (concurrent 
mean-payoff games).

\smallskip\noindent{\bf Values and $\epsilon$-optimal strategies.}
Given a CMPG $G$ and a reward function $\cost$, the \emph{lower value} 
$\underline{v}_s$ (resp. the \emph{upper value} $\ov{v}_s$) at a state $s$ 
is defined as follows:
\[
\underline{v}_s = \sup_{\stra_1 \in \bigstra_1 } \inf_{\stra_2 \in \bigstra_2} 
\Exp_s^{\stra_1,\stra_2}[\LimInfAvg];
\qquad 
\ov{v}_s = \inf_{\stra_2 \in \bigstra_2} \sup_{\stra_1 \in \bigstra_1} 
\Exp_s^{\stra_1,\stra_2}[\LimSupAvg].
\]
The celebrated result of Mertens and Neyman~\cite{MN81} shows that the upper and lower
value coincide and gives the \emph{value} of the game denoted as $v_s$.
For $\epsilon\geq 0$, a strategy $\stra_1$ for Player~1 is \emph{$\epsilon$-optimal} 
if we have $v_s - \epsilon \leq \inf_{\stra_2\in \bigstra_2} \Exp_s^{\stra_1,\stra_2}[\LimInfAvg]$.
An \emph{optimal} strategy is a $0$-optimal strategy.

\smallskip\noindent{\bf Game classes.} We consider the following special classes of 
CMPGs.
\begin{enumerate}
\item {\em Variants of ergodic CMPGs.} Given a CMPG $G$, a set $C$ of states in $G$ is called an \emph{ergodic 
component}, if for all states $s,t \in C$, for all strategy profiles 
$(\stra_1,\stra_2)$, if we start at $s$, then $t$ is visited infinitely 
often with probability~1 in the random walk $\pat_s^{\stra_1,\stra_2}$.
A CMPG is \emph{ergodic} if the set $S$ of states is an ergodic component.
A CMPG is \emph{sure ergodic} if for all strategy profiles $(\stra_1,\stra_2)$
and for all start states $s$, ergodic components are reached certainly (all plays
reach some ergodic component).
A CMPG is \emph{almost-sure ergodic} if for all strategy profiles $(\stra_1,\stra_2)$
and for all start states $s$, ergodic components are reached with probability~1.
Observe that every ergodic CMPG is also a sure ergodic CMPG, and every 
sure ergodic CMPG  is also an almost-sure ergodic CMPG.

\item {\em Turn-based stochastic games, MDPs and SSGs.} 
A game structure $\game$ is {\em turn-based stochastic\/} if at every
state at most one player can choose among multiple moves; that is, for
every state $s \in S$ there exists at most one $i \in \{1,2\}$ with
$|\mov_i(s)| > 1$. 
A game structure is a Player-2 \emph{Markov decision process (MDP)} if for all 
$s \in S$ we have $|\mov_1(s)|=1$, i.e., only Player~2 has choice of 
actions in the game, and Player-1 MDPs are defined analogously.
A \emph{simple stochastic game (SSG)}~\cite{Con92} is an almost-sure ergodic turn-based 
stochastic game with two ergodic components, where both the ergodic
components (called terminal states) are a single \emph{absorbing state} 
(an absorbing state has only a self-loop transition); 
one terminal state ($\top$) has reward~1 and the other terminal 
state ($\bot$) has reward~0; and all positive transition probabilities 
are either~$\frac{1}{2}$ or~1.
The almost-sure reachability property to the ergodic components
for SSGs is referred to as the \emph{stopping} property~\cite{Con92}.
\end{enumerate}

\begin{remark}
The results of Hoffman and Karp~\cite{HK} established that for ergodic 
CMPGs \emph{optimal stationary} strategies exist (for both players).
Moreover, for an ergodic CMPG the value for every state is the same,
which is called the value of the game.
We argue that the result for existence of optimal stationary strategies also 
extends to almost-sure ergodic CMPGs.
Consider an almost-sure ergodic CMPG $G$. Notice first that in the 
ergodic components, there exist optimal stationary strategies, as 
shown by Hoffman and Karp~\cite{HK}. Notice also that eventually 
some ergodic component is reached with probability~1 after a finite number 
of steps, and therefore that we can ignore the rewards of the finite prefix 
(since mean-payoff functions are independent of finite prefixes). 
Hence, we get an almost-sure reachability game, in the states which are not 
in the ergodic components, by considering any ergodic component $C$ to be a 
terminal with reward equal to the value of $C$. 
In such games it is easy to see that there exist optimal stationary strategies. 
\end{remark}

\noindent{\bf Value and the approximation problem.} 
Given a CMPG 
$G$, a state $s$ of $G$, and a rational
threshold $\lambda$, the \emph{value} problem is the decision problem that 
asks whether $v_s$ is at most $\lambda$.
Given a CMPG 
$G$, a state $s$ of $G$,  and a tolerance 
$\epsilon>0$,  the \emph{approximation} problem asks to compute an interval
of length $\epsilon$ such that the value $v_s$ lies in the interval.
We present the formal definition of the decision version of the 
approximation problem in Section~\ref{subsec:approx}.
In the following sections we consider the value problem and the 
approximation problem for almost-sure ergodic, sure ergodic, and 
ergodic games.

\section{Complexity of Approximation for Almost-sure Ergodic Games}\label{sec:approx}
In this section we present three results for almost-sure ergodic games:
(1)~First we establish (in Section~\ref{subsec:strategy}) an optimal 
exponential bound on the patience of $\epsilon$-optimal stationary strategies, 
for all $\epsilon>0$. 
(2)~Second we show (in Section~\ref{subsec:ssg hard}) that the approximation 
problem (even for turn-based stochastic ergodic mean-payoff games) is at least 
as hard as solving the value problem for SSGs.
(3)~Finally, we show (in Section~\ref{subsec:approx}) that the approximation 
problem lies in $\FNP$.



\subsection{Strategy complexity}\label{subsec:strategy}
In this section we present results related to $\epsilon$-optimal 
stationary strategies for almost-sure ergodic CMPGs, that on one hand 
establishes an optimal exponential bound for patience, and on the other hand 
is used to establish the complexity of approximation of values in the 
following subsection.
The results of this section is also used in the algorithmic analysis
in Section~\ref{sec:strategy iteration ergodic}.
We start with the notion of $q$-rounded strategies.

\smallskip\noindent{\bf The classes of $q$-rounded distributions and strategies.}
For $q \in \N$, a distribution $d$ over a finite set $Z$ is a {\em $q$-rounded distribution} if for all $z\in Z$ we have that $d(z)=\frac{p}{q}$ for some number $p\in \N$. A stationary strategy $\sigma$ is a {\em $q$-rounded strategy}, if for all states $s$ the distribution $\sigma(s)$ is a $q$-rounded distribution.

\smallskip\noindent{\bf Patience.} 
Observe that the patience of a $q$-rounded strategy is at most $q$.
We show that for almost-sure ergodic CMPGs for all $\epsilon>0$ there are 
$q$-rounded $\epsilon$-optimal strategies, where $q$ is as follows:   
\[
\left\lceil  4\cdot \epsilon^{-1}\cdot m \cdot n^2\cdot (\delta_{\min})^{-\rand}\right\rceil\enspace .
\]
This immediately implies an exponential upper bound on the patience.
We start with a lemma related to the probability of reaching states that 
are guaranteed to be reached with positive probability.


\begin{lemma}\label{lemm:prob to reach}
Given a CMPG $G$, let $s$ be a state in $G$, and $T$ be a set of states
such that for all strategy profiles the set $T$ is reachable (with positive 
probability) from $s$.
For all strategy profiles the probability to reach $T$ from $s$ in $n$ steps 
is at least $(\delta_{\min})^{\rand}$ (where $\rand$ is the number of random 
states).
\end{lemma}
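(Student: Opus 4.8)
The plan is to replace the probabilistic hypothesis by an equivalent purely combinatorial (strategy-independent) condition and then to build a forced, rank-decreasing descent to $T$ that treats deterministic transitions as ``free'' and charges each random transition exactly one factor of $\delta_{\min}$. First I would characterize the hypothesis graph-theoretically. Let $W$ be the largest set of states disjoint from $T$ that is closed under some joint choice of moves, i.e.\ for every $w \in W$ there are $a_1 \in \mov_1(w)$, $a_2 \in \mov_2(w)$ with $\supp(\trans(w,a_1,a_2)) \subseteq W$. A standard support argument shows that a profile avoids $T$ from $w$ with probability $1$ exactly when $w \in W$: the positive-probability successors of a reachable state are reachable, so the reachable set witnesses membership in $W$, and conversely the positional profile keeping the support inside $W$ avoids $T$. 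Hence ``$T$ is reachable with positive probability under every profile from $s$'' is equivalent to $s \in V := S \setminus W$, and the property I will use is that for every $w \in V \setminus T$ and every move pair $(a_1,a_2)$ the support $\supp(\trans(w,a_1,a_2))$ meets $V$.

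Next I would stratify $V$ by a rank that only counts random states. Starting from $T$, I alternately (i)~take the \emph{forced deterministic attractor} --- repeatedly adjoin any state all of whose move-pair supports already lie in the current set (such a state moves into the set with probability $1$ whatever is played); and (ii)~perform one \emph{random layer} --- adjoin any state all of whose move-pair supports meet the current set (such a state moves into the set with probability at least $\delta_{\min}$ whatever is played). The crucial observation is that a state newly added by a random layer but not already captured by the deterministic attractor must be a random state: if a deterministic state's supports are singletons meeting the set, those singletons lie in the set, so the deterministic attractor would have taken it already. Thus each growing random layer consumes at least one fresh random state, and since there are only $\rand$ of them the construction stabilizes after at most $\rand$ layers; the fixpoint/safe-region argument above identifies the limit with exactly $V$. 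Assigning to each $w \in V$ the pair (number of random layers used, deterministic-attractor sub-rank), ordered lexicographically, gives a potential $\Phi$ with $\Phi(w)=0 \Leftrightarrow w \in T$, whose first coordinate is at most $\rand$, and which strictly decreases along every forced successor.

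Finally I would run the descent against an \emph{arbitrary} profile. At a state $w \in V \setminus T$, no matter the (possibly randomized, history-dependent) move distribution played, with probability $1$ (a deterministic-attractor step) or at least $\delta_{\min}$ (a random-layer step) the next state has strictly smaller $\Phi$. A straightforward induction on $\Phi$ then shows that from $s$ the walk reaches $T$ through strictly $\Phi$-decreasing states with probability at least $(\delta_{\min})^{\rand}$; strict decrease forces the visited states to be distinct, so $T$ is reached within $n$ steps, and only the at most $\rand$ random-layer steps contribute a $\delta_{\min}$ factor while deterministic steps contribute $1$.

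The main obstacle --- and the reason the naive ``take a shortest path and multiply the edge probabilities'' argument fails --- is that strategies may be history-dependent and randomized, so no single path can be fixed in advance (indeed a profile can be forced to reach $T$ only after revisiting states). This is exactly what the attractor formulation overcomes: the rank-decreasing property holds \emph{per state and for every move pair}, so the descent succeeds whatever the opponent plays, and the separation between deterministic (free) and random (charged) steps is what yields the exponent $\rand$ rather than the crude $n$.
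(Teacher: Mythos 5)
Your proof is correct, and it reaches the bound by a genuinely different, more self-contained route than the paper's. The paper's proof builds an auxiliary turn-based \emph{deterministic} game in which one player controls the combined move pairs $(a_1,a_2)$ of both original players and the opponent resolves each probabilistic transition by picking a successor in its support; it then cites positional determinacy of turn-based deterministic reachability games to obtain a single positional strategy for the ``probabilistic'' player that forces $T$ against every behaviour of the combined players --- necessarily along plays visiting each state at most once, since otherwise the combined players could cycle forever --- and concludes that, under any strategy profile in the original game, the random transitions actually follow this positional strategy's choices with probability at least $\delta_{\min}$ at each of the at most $\rand$ random states encountered (deterministic transitions comply automatically, their supports being singletons). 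Your argument never forms the auxiliary game and never invokes determinacy: the alternation of forced deterministic attractors with random layers is in effect a from-scratch proof of exactly the attractor fact that underlies the cited determinacy result, refined so that only random states can be consumed by the ``charged'' layers, and the positional choice the paper extracts from the theorem appears in your proof as ``move to a state of smaller $\Phi$''. What the paper's route buys is brevity, since the combinatorial core is delegated to a known theorem; what yours buys is self-containedness, an explicit least-fixpoint characterization of the set of states satisfying the hypothesis (your identification of the limit with $V$), and a transparent account of why the exponent is $\rand$ rather than $n$: deterministic states are absorbed into attractor sub-ranks at no cost, and each multiplicative factor $\delta_{\min}$ is paid at a distinct random state.
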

\begin{proof}
The basic idea of the proof is to consider a turn-based deterministic game where 
one player is Player~1 and Player~2 combined, and the opponent makes the choice for the 
probabilistic transitions.
(The formal description of the turn-based deterministic game is as follows: 
$(S \cup (S\times A_1 \times A_2), (A_1 \times A_2) \cup S \cup \set{\bot}, \ov{\Gamma}_1,\ov{\Gamma}_2,\ov{\trans})$;
where for all $s \in S$ and $a_1 \in \Gamma_1(s)$ and $a_2 \in \Gamma_2(s)$ we have 
$\ov{\Gamma}_1(s)=\set{(a_1,a_2) \mid a_1 \in \Gamma_1(s), a_2 \in \Gamma_2(s)}$ and $\ov{\Gamma}_1((s,a_1,a_2))=\set{\bot}$;
$\ov{\Gamma}_2((s,a_1,a_2))=\supp(\trans(s,a_1,a_2))$ and $\ov{\Gamma}_2(s)=\set{\bot}$.
The transition function is as follows: for all $s \in S$ and $a_1 \in \Gamma_1(s)$ and $a_2 \in \Gamma_2(s)$ we have 
$\ov{\trans}(s,(a_1,a_2),\bot)((s,a_1,a_2))=1$ and $\ov{\trans}((s,a_1,a_2),\bot,t)(t)=1$.)
In the turn-based deterministic game, against any strategy of the combined 
players, there is a positional strategy of the player making the probabilistic choices
such that $T$ is reached after being in each state at most once certainly (by positional determinacy for 
turn-based deterministic reachability games~\cite{Thomas97}), as otherwise there would exist a positional 
strategy profile such that $T$ is never reached. 
The probability that exactly the choices made by the positional strategy of the probabilistic player 
in the turn-based deterministic game is executed once in each state in the original game is 
at least $(\delta_{\min})^{\rand}$.
Hence the desired result follows.
\end{proof}

\smallskip\noindent{\bf Variation distance.} 
We use a coupling argument in our proofs and this requires the definition 
of variation distance of two probability distributions.
Given a finite set $Z$, and two distributions $d_1$ and $d_2$ over $Z$, 
the {\em variation distance} of the distributions is 
\[
\var(d_1,d_2)=\frac{1}{2}\cdot \sum_{z\in Z} |d_1(z)-d_2(z)| \enspace .
\]

\smallskip\noindent{\bf Coupling and coupling lemma.}
Let $Z$ be a finite set. For distributions $d_1$ and $d_2$ over the finite set $Z$, 
a {\em coupling} $\omega$ is a distribution over $Z\times Z$, such that for all $z\in Z$ we have 
$\sum_{z'\in Z} \omega(z,z')=d_1(z)$ and also for all $z'\in Z$ we have 
$\sum_{z\in Z} \omega(z,z')=d_2(z')$. 
We only use the second part of coupling lemma~\cite{aldous} which is stated as follows:
\begin{itemize}
\item {\bf (Coupling lemma).}
For a pair of distributions 
$d_1$ and $d_2$, there exists a coupling $\omega$ of $d_1$ and $d_2$, such that for a random variable 
$(X,Y)$ from the distribution $\omega$, we have that $\var(d_1,d_2)=\Pr[X\neq Y]$.
\end{itemize}

We now show that in almost-sure ergodic CMPGs strategies that play actions with probabilities ``close'' 
to what is played by an optimal strategy also achieve values that are ``close'' to the 
values achieved by the optimal strategy. 

\begin{lemma}\label{lemm:approx optimal strategy}
Consider an almost-sure ergodic CMPG and let $\epsilon> 0$ be a real number.
Let $\sigma_1$ be an optimal stationary strategy for Player~1. 
Let $\sigma'_1$ be a stationary strategy for Player~1 s.t. $\sigma'_1(s)(a)\in [\sigma_1(s)(a)-\frac{1}{q};\sigma_1(s)(a)+\frac{1}{q}]$, 
where $q=4\cdot \epsilon^{-1}\cdot m \cdot n^2\cdot (\delta_{\min})^{-\rand}$, for all states $s$ and actions $a\in \Gamma_1(s)$. 
Then the strategy $\sigma'_1$ is an $\epsilon$-optimal strategy. 
\end{lemma}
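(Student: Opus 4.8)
The plan is to fix an arbitrary Player~2 strategy, reduce the comparison to one between two Markov chains, and then relate their mean-payoffs by a coupling whose decoupling rate is governed by the closeness of $\sigma_1'$ to $\sigma_1$ and whose re-synchronization time is governed by Lemma~\ref{lemm:prob to reach}.

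First I would reduce Player~2 to stationary strategies. Since $\sigma_1'$ is stationary, against it Player~2 faces a mean-payoff Markov decision process (minimizing $\LimInfAvg$), for which an optimal pure stationary strategy $\sigma_2$ exists; hence $\inf_{\sigma_2'}\Exp_s^{\sigma_1',\sigma_2'}[\LimInfAvg]=\Exp_s^{\sigma_1',\sigma_2}[\LimInfAvg]$ for this fixed stationary $\sigma_2$. As $\sigma_1$ is optimal, $\Exp_s^{\sigma_1,\sigma_2}[\LimInfAvg]\ge \inf_{\sigma_2'}\Exp_s^{\sigma_1,\sigma_2'}[\LimInfAvg]=v_s$, so it suffices to prove $\Exp_s^{\sigma_1',\sigma_2}[\LimInfAvg]\ge \Exp_s^{\sigma_1,\sigma_2}[\LimInfAvg]-\epsilon$. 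Now both profiles are stationary and induce Markov chains $M=(\sigma_1,\sigma_2)$ and $M'=(\sigma_1',\sigma_2)$; by the ergodic theorem for finite Markov chains $\Exp_s[\LimInfAvg]$ equals the Ces\`aro mean-payoff of the chain, so I may work with $\lim_{T}\Exp_s[\Avg_T]$.

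Next I would bound the one-step variation distance between the chains. Since only Player~1's action distribution changes and $|\sigma_1(s)(a)-\sigma_1'(s)(a)|\le 1/q$ for at most $m$ actions, the induced transition rows satisfy $\var(M(s,\cdot),M'(s,\cdot))\le \var(\sigma_1(s),\sigma_1'(s))\le \tfrac{m}{2q}=:\beta$, and by the choice of $q$ we get $\beta\le \tfrac{\epsilon(\delta_{\min})^{\rand}}{8n^2}$. I then run the two chains under a single coupling: whenever both are in the same state $s$ I use the coupling lemma to make their Player~1 actions agree with probability $\ge 1-\var(\sigma_1(s),\sigma_1'(s))\ge 1-\beta$; when the actions agree the rewards and the successor distributions coincide, so the chains stay synchronized and accumulate identical reward. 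Thus the chains desynchronize at rate at most $\beta$ per synchronized step, and while synchronized their running averages are equal. To control the long-run fraction of desynchronized steps, I would use Lemma~\ref{lemm:prob to reach} (applied to a target reachable under all profiles) to get an $O(n(\delta_{\min})^{-\rand})$ bound $H$ on the expected re-synchronization time and on the time to reach a common ergodic component; a renewal estimate then bounds the Ces\`aro-average desynchronization probability by $O(\beta\cdot H)$, and adding the at-most-$O(\beta H)$ probability of the ``bad'' event that the two copies split into different ergodic components (on which the reward difference is at most $1$, as rewards lie in $[0,1]$) yields, after tracking constants, $|\lim_T\Exp_s^{M}[\Avg_T]-\lim_T\Exp_s^{M'}[\Avg_T]|\le \epsilon$. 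Combined with the reduction this gives $\Exp_s^{\sigma_1',\sigma_2}[\LimInfAvg]\ge v_s-\epsilon$, so $\sigma_1'$ is $\epsilon$-optimal.

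The main obstacle is that $\LimInfAvg$ is a tail functional whereas coupling naturally controls only finite horizons; I resolve this by passing to the Ces\`aro mean-payoff (valid by ergodicity of a stationary profile) and bounding the long-run desynchronized fraction through the hitting-time estimate of Lemma~\ref{lemm:prob to reach}. Two secondary subtleties are the possibility that the perturbation sends the two copies into different ergodic components, a rare event of probability $O(\beta H)$ handled as above, and periodicity, which could obstruct exact re-synchronization; the latter is handled either by a lazy/regenerative coupling or by reading the same bound off a standard Markov-chain perturbation estimate $\|\mu-\mu'\|_1=O(H\beta)$ for the stationary distributions.
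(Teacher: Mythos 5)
Your skeleton coincides with the paper's: fix a positional best response $\sigma_2$ to $\sigma_1'$, reduce to comparing the two induced stationary chains, bound the per-step variation distance by $m/(2q)$, invoke Lemma~\ref{lemm:prob to reach} for the hitting-time bound $H\leq n\cdot(\delta_{\min})^{-\rand}$, and treat separately the transient phase where the two walks may fall into different ergodic components. The genuine divergence---and the genuine gap---is your re-synchronization mechanism. Your renewal estimate needs the two chains, advancing \emph{in lockstep}, to meet again in a common state, and you propose to bound this meeting time by Lemma~\ref{lemm:prob to reach}. That lemma bounds the time for \emph{one} walk to hit a \emph{fixed} target; it says nothing about two simultaneously moving walks meeting. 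Meeting is genuinely problematic here: the two chains have different kernels (the supports of $\sigma_1(s)$ and $\sigma_1'(s)$ may even differ), ergodic components may be periodic (a deterministic two-cycle satisfies the paper's definition of an ergodic component), and whether the product chain can hit the diagonal then depends on compatibility of the two chains' cyclic structures---none of which the lemma addresses. You flag the periodicity issue yourself, but the paper's coupling is built precisely to sidestep it: after a decoupling at $(X^i,Y^j)$, the $Y$-walk is allowed to take \emph{extra} steps until it hits the already-realized state $X^{i+1}$, which is a fixed target, so Lemma~\ref{lemm:prob to reach} applies verbatim and bounds the expected lag by $H$. The bookkeeping is then done through the index ratio $\Exp[j/i]\leq 1+\frac{m}{2q}\cdot H=1+\frac{\epsilon}{8\cdot n}$ and the induced bound on state-frequency differences, which is what replaces your renewal estimate; no aperiodicity or meeting is ever needed.

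Your two fallback patches are not equally good. The stationary-distribution perturbation route is sound: bounds of the form $\|\mu-\mu'\|_1\leq C\cdot H\cdot \max_s \var\big(P(s,\cdot),P'(s,\cdot)\big)$ (with condition number a maximal mean first-passage time) are standard, and the constants fit inside the $\epsilon/(8n)$ slack---provided you also add a term you currently omit in that route, namely that the per-state expected rewards differ by up to $m/q$ because $\sigma_1$ and $\sigma_1'$ mix differently even at a common state (in your coupling accounting this is absorbed into the decoupling probability, but a bound on occupation measures alone does not see it). The cost is that you import an external theorem whose proof is of essentially the same difficulty as the paper's lag coupling. The lazy-coupling patch is shakier: laziness must be defined so that the mean-payoff is preserved (repeat the reward on hold steps), and the meeting time of the two lazy walks is a hitting time of the diagonal in the \emph{product} chain, which Lemma~\ref{lemm:prob to reach} does not control and for which naive bounds can be quadratically worse than $H$, destroying the specific value of $q$ in the statement. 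So, as written, your primary argument breaks at the meeting step; it is repaired by adopting the paper's catch-up coupling, or by fully committing to the perturbation theorem, but not by laziness alone.
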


\begin{proof}
First observe that we can consider $\epsilon\leq 1$, because as the rewards are in the 
interval $[0,1]$ any strategy is an $\epsilon$-optimal strategy for $\epsilon\geq 1$.
The proof is split up in two parts, and the second part uses the first. 
The first part is related to plays starting in an ergodic component;
and the second part is the other case.
In both cases we show that $\sigma'_1$ guarantees a mean-payoff within $\epsilon$ of the mean-payoff guaranteed by $\sigma_1$, 
thus implying the statement. 
Let $\sigma_2$ be a positional best response strategy against $\sigma_1'$.
Our proof is based on a novel \emph{coupling} argument. 
The precise nature of the coupling argument is different in the two parts, 
but both use the following:
For any state $s$, it is clear that the variation distance between $\sigma'_1(s)$ and $\sigma_1(s)$ 
is at most $\frac{|\Gamma_1(s)|}{2\cdot q}$, by definition of $\sigma'_1(s)$. 
For a state $s$, let $d_1^s$ be the distribution over states defined as follows: 
for $t \in S$ we have $d_1^s(t)=\sum_{a_1 \in \Gamma_1(s)} \sum_{a_2 \in \Gamma_2(s)} \trans(s,a_1,a_2)(t) \cdot \sigma_1(s)(a_1) \cdot \sigma_2(s)(a_2)$. 
Define $d_2^s$ similarly using $\sigma_1'(s)$ instead of $\sigma_1(s)$. 
Then $d_1^s$ and $d_2^s$ also have a variation distance of at most $\frac{ |\Gamma_1(s)|}{2\cdot q} \leq \frac{m}{2\cdot q}$.
Let $s_0$ be the start state, and $P=\pat_{s_0}^{\sigma_1,\sigma_2}$ be the random walk from $s_0$, 
where Player~1 follows $\sigma_1$ and Player~2 follows $\sigma_2$.
Also let $P'=\pat_{s_0}^{\sigma_1',\sigma_2}$ be the similar defined walk, except that Player~1 
follows $\sigma'_1$ instead of $\sigma_1$. 
Let $X^i$ be the random variable indicating 
the $i$-th state of $P$, and let $Y^i$ be the similar defined random variable in $P'$ instead of $P$. 

\smallskip\noindent{\bf The state $s_0$ is in an ergodic component.} 
Consider first the case where $s_0$ is part of an ergodic component. 
Irrespective of the strategy profile, all states of the ergodic component 
are visited infinitely often almost-surely (by definition of an ergodic component). Hence, we can apply 
Lemma~\ref{lemm:prob to reach} and obtain that we require at most
 $n\cdot (\delta_{\min})^{\rand}=\frac{\epsilon\cdot q}{4\cdot n\cdot m}$ 
steps in expectation to get from one state of the component to any other state of the component.

\noindent{\em Coupling argument.}
We now construct a coupling argument. We define the coupling using induction. 
First observe that $X^0=Y^0=s_0$ (the starting state).
For $i,j \in \N$, let $a_{i,j}\geq 0$ be the smallest number such that 
$X^{i+1}=Y^{j+1+a_{i,j}}$. 
By the preceding we know that $a_{i,j}$ exists for all $i,j$ with probability~1 and $a_{i,j}\leq \frac{\epsilon\cdot q}{4\cdot n\cdot m}$ in expectation. 
The coupling is done as follows: (1)~(Base case): Couple $X^0$ and $Y^0$. We have that $X^0=Y^0$;
(2)~(Inductive case): (i)~if $X^i$ is coupled to $Y^j$ and $X^i=Y^j=s_i$, then also couple $X^{i+1}$ and $Y^{j+1}$
such that $\Pr[X^{i+1}\neq Y^{j+1}] = \var(d_1^{s_i},d_2^{s_i})$ (using coupling lemma);
(ii)~if $X^i$ is coupled to $Y^j$, but $X^i\neq Y^j$, then 
$X^{i+1}=Y^{j+1+a_{i,j}}=s_{i+1}$ and $X^{i+1}$ is coupled to $Y^{j+1+a_{i,j}}$, and 
we couple $X^{i+2}$ and $Y^{j+2+a_{i,j}}$ 
such that $\Pr[X^{i+2}\neq Y^{j+2+a_{i,j}}] = \var(d_1^{s_{i+1}},d_2^{s_{i+1}})$ (using coupling lemma).
Notice that all $X^i$ are coupled to some $Y^j$ almost-surely; and moreover in expectation $\frac{j}{i}$
is bounded as follows:
\[
\frac{j}{i}\leq 1+\frac{m}{2\cdot q}\cdot \frac{\epsilon\cdot q}{4\cdot n\cdot m}=  1+\frac{\epsilon}{8\cdot n}.
\] 
The expression can be understood as follows: consider $X^i$ being coupled to $Y^j$. 
With probability at most $\frac{m}{2\cdot q}$ they differ. 
In that case $X^{i+1}$ is coupled to $Y^{j+1+a_{i,j}}$. 
Otherwise $X^{i+1}$ is coupled to $Y^{j+1}$. By using our bound on $a_{i,j}$ we get the desired expression. 
For a state $s$, let $f_s$ (resp. $f_s'$) denote the limit-average frequency of $s$ 
given $\sigma_1$ (resp. $\sigma_1'$) and $\sigma_2$.
Then it follows easily that for every state $s$,  we have $|f_s-f_s'| \leq \frac{\epsilon}{8\cdot n}$.
The formal argument is as follows: for every state $s$, consider the reward function $\cost_s$ that assigns reward~1 to
all transitions from $s$ and~0 otherwise; and then it is clear that the difference of the mean-payoffs of $P$ and $P'$ 
is maximized if the mean-payoff of $P$ is $1$ under $\cost_s$ and the rewards of the steps of $P'$ that are not coupled to $P$ are $0$. 
In that case the mean-payoff of $P'$ under $\cost_s$ is at least $\frac{1}{1+\frac{\epsilon}{8\cdot n}}>1-\frac{\epsilon}{8\cdot n}$ (since $1>1-\left(\frac{\epsilon}{8\cdot n}\right)^2=(1+\frac{\epsilon}{8\cdot n})(1-\frac{\epsilon}{8\cdot n})$) in expectation and
 thus the difference between the mean-payoff of $P$ and the mean-payoff of $P'$ under $\cost_s$ is at most $\frac{\epsilon}{8\cdot n}$ in expectation. 
The mean-payoff value if Player~1 follows a stationary strategy $\sigma_1^1$ and Player~2 follows a stationary strategy $\sigma_2^1$, 
such that the frequencies of the states encountered is $f_s^1$, is  $\sum_{s \in S} \sum_{a_1 \in \mov_1(s)} \sum_{a_2 \in \mov_2(s)} 
f_s^1 \cdot \sigma_1^1(s)(a_1) \cdot \sigma_2^1(s)(a_2) \cdot \cost(s,a_1,a_2)$. 
Thus the differences in mean-payoff value when Player~1 follows $\sigma_1$ (resp. $\sigma_1'$) and Player~2 follows the positional strategy 
$\sigma_2$, which plays action $a_2^s$ in state $s$, 
is 
\[
\sum_{s \in S} \sum_{a_1 \in \mov_1(s)} \big(f_s \cdot \sigma_1(s)(a_1)-f_s' \cdot \sigma_1'(s)(a_1)\big)  \cdot \cost(s,a_1,a_2^s)
\]
Since $|f_s-f_s'|\leq \frac{\epsilon}{8\cdot n}$  (by the preceding argument) 
and $|\sigma_1(s)(a_1)-\sigma_1'(s)(a_1)|\leq \frac{1}{q}$ for all $s\in S$ and $a_1\in \mov_1(s)$ (by definition), 
we have the following inequality
\begin{align*}
& \sum_{s \in S} \sum_{a_1 \in \mov_1(s)} 
\big(f_s \cdot \sigma_1(s)(a_1)-f_s' \cdot \sigma_1'(s)(a_1)\big) \cdot   \cost(s,a_1,a_2^s) \\[2ex]
\leq & \quad
\sum_{s \in S} \sum_{a_1 \in \mov_1(s)} 
|f_s \cdot \sigma_1(s)(a_1)-\big(f_s- \frac{\epsilon}{8\cdot n}\big) \cdot \big(\sigma_1(s)(a_1)-\frac{1}{q}\big)|\\[2ex]
= & \quad \sum_{s \in S} \sum_{a _1\in \mov_1(s)} 
|\frac{\epsilon}{8\cdot n}\cdot \sigma_1(s)(a_1)+f_s\cdot\frac{1}{q}-\frac{\epsilon}{8\cdot n\cdot q}| \\[2ex]
\leq & \quad \sum_{s \in S} 
(\frac{\epsilon}{8\cdot n}+\frac{f_s \cdot m}{q} + \frac{\epsilon \cdot m}{8 \cdot n \cdot q}) \\[2ex]
= & \quad \frac{\epsilon}{8} + \frac{m}{q} + \frac{\epsilon \cdot m}{8 \cdot q} \leq \frac{\epsilon}{8} + \frac{\epsilon}{4} + \frac{\epsilon}{8} \quad = \quad \frac{\epsilon}{2}
\end{align*}
The first inequality uses that $\cost(s,a_1,a_2^s)\leq 1$ and the preceding comments on the differences. 
The second inequality uses that (a)~when we sum over $\sigma_1(s)(a_1)$ for all $a_1$, for a fixed $s\in S$, we get $1$;
(b)~$|\Gamma_1(s)| \leq m$.
The following equality uses that $\sum_{s\in S} f_s=1$ since they represent frequencies.
Finally since $4 \cdot m\cdot n\cdot \epsilon^{-1}\leq q$, $\epsilon \leq 1$, and $n \geq 1$ we have
$\frac{m}{q} \leq \frac{\epsilon}{4}$ and $\frac{\epsilon \cdot m}{8 \cdot q} \leq \frac{\epsilon}{32} \leq \frac{\epsilon}{8}$.
The desired inequality is established.


\medskip\noindent{\bf The state $s_0$ is not in an ergodic component.} 
Now consider the case where the start state $s_0$ is not part of  an ergodic component. 
We divide the walks $P$ and $P'$ into two parts. The part inside some ergodic component 
and the part outside all ergodic components. If $P$ and $P'$ ends up in the same ergodic component, 
then the mean-payoff differs by at most $\frac{\epsilon}{2}$ in expectation, by the first part. 
For any pair of strategies the random walk defined from them almost-surely reaches some ergodic component (since we 
consider almost-sure ergodic CMPGs).   
Hence, we can apply Lemma~\ref{lemm:prob to reach} and see that we require at most 
$n\cdot (\delta_{\min})^{\rand}=\frac{\epsilon\cdot q}{4\cdot n\cdot m}$ steps in expectation 
before we reach an ergodic component. 

\noindent{\em Coupling argument.}
To find the probability that they end up in the same component we again make a coupling argument. 
Notice that $X^0=Y^0=s_0$. We now make the coupling using induction. 
(1)~(Base case): Make a coupling between $X^1$ and $Y^1$, such that $\Pr[X^1\neq Y^1]=\var(d^{s_0}_1,d^{s_0}_2)\leq \frac{|\Gamma_1(s_0)|}{2\cdot q}
\leq  \frac{m}{2\cdot q}$ (such a coupling exists by the coupling lemma).
(2)~(Inductive case): Also, if there is a coupling between $X^i$ and $Y^i$ and $X^i=Y^i=s_i$, then also make a coupling between  $X^{i+1}$ and $Y^{i+1}$, such that $\Pr[X^{i+1}\neq Y^{i+1}]=\var(d^{s_i}_1,d^{s_i}_2)\leq \frac{|\Gamma_1(s_i)|}{2\cdot q} \leq  \frac{m}{2\cdot q}$ (such a coupling exists by the coupling lemma).
Let $\ell$ be the smallest number such that $X^{\ell}$ is some state in an ergodic component. In expectation, $\ell$ is at most $\frac{\epsilon\cdot q}{4\cdot n\cdot m}$. The probability that $X^i\neq Y^i$ for some $0\leq i \leq \ell$ is by union bound  at most  $\frac{m }{2\cdot q}\cdot \frac{\epsilon\cdot q}{4\cdot n\cdot m}\leq \frac{\epsilon}{8\cdot n}\leq \frac{\epsilon}{2}$ in expectation. If that is not the case, then $P$ and $P'$ do end up in the same ergodic component. 
In the worst case, the component the walk  $P$ ends up in has value $1$ and the component that the walk $P'$ ends up in (if they differ) has value $0$. 
Therefore, with probability at most $\frac{\epsilon}{2}$ the walk $P'$ ends up in an ergodic component of value 0 (and hence has mean-payoff 0); and otherwise it ends up in the same component as $P$ does 
and thus gets the same mean-payoff as $P$, except for at most $\frac{\epsilon}{2}$, as we established in the first part. 
Thus $P'$ must ensure the same mean-payoff as $P$ except for $\frac{2\epsilon}{2}=\epsilon$. We therefore get that $\sigma'_1$ is an $\epsilon$-optimal strategy (since $\sigma_1$ is optimal).
\end{proof}

We show that for every integer $q'\geq \ell$, for every distribution over $\ell$ elements, there exists a $q'$-rounded distribution ``close'' to it. 
Together with Lemma~\ref{lemm:approx optimal strategy} it shows the existence of $q'$-rounded $\epsilon$-optimal strategies, 
for every integer $q'$ greater than the $q$ defined in Lemma~\ref{lemm:approx optimal strategy}.

\begin{lemma}\label{lemm:round distribution}
Let $d_1$ be a distribution over a finite set $Z$ of size $\ell$. 
Then for all integers $q\geq \ell$ there exists a $q$-rounded distribution $d_2$ over $Z$, 
such that $|d_1(z)-d_2(z)|<\frac{1}{q}$.
\end{lemma}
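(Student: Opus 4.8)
The plan is to reduce the statement to a standard integer-rounding (largest-remainder / apportionment) argument. Scale everything by $q$: for each $z \in Z$ set $x_z = q \cdot d_1(z)$, so that $x_z \geq 0$ and $\sum_{z \in Z} x_z = q \cdot \sum_{z\in Z} d_1(z) = q$, an integer. Producing a $q$-rounded distribution $d_2$ with $|d_1(z)-d_2(z)| < \frac{1}{q}$ is then equivalent to producing non-negative integers $p_z$ with $\sum_{z} p_z = q$ and $|p_z - x_z| < 1$, upon setting $d_2(z) = \frac{p_z}{q}$.

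First I would take floors: let $p_z^0 = \lfloor x_z \rfloor$ and let $k = q - \sum_z p_z^0 = \sum_z (x_z - \lfloor x_z\rfloor)$ be the total deficit. Let $t$ be the number of coordinates with nonzero fractional part. Since the nonzero fractional parts each lie in $(0,1)$ and there are $t$ of them, their sum $k$ satisfies $k < t$ when $t \ge 1$, and $k = 0 = t$ when $t = 0$; in either case $k$ is a non-negative integer with $k \le t$. Hence there are enough coordinates with nonzero fractional part to raise exactly $k$ of them (for instance the $k$ coordinates with the largest remainders) from $\lfloor x_z \rfloor$ to $\lceil x_z\rceil$, leaving all other coordinates at $\lfloor x_z \rfloor$. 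Call the resulting integers $p_z$.

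Then I would verify the two required properties. By construction $\sum_z p_z = \sum_z p_z^0 + k = q$, and every $p_z \in \{\lfloor x_z\rfloor, \lceil x_z \rceil\}$ is a non-negative integer, so $d_2(z) = \frac{p_z}{q}$ defines a $q$-rounded distribution. For the distance bound, if $x_z$ is an integer then $p_z = x_z$ and the difference is $0$; otherwise both $x_z - \lfloor x_z\rfloor$ and $\lceil x_z\rceil - x_z$ lie in $(0,1)$, so $|p_z - x_z| < 1$ regardless of which of the two values $p_z$ takes. Dividing by $q$ gives $|d_1(z) - d_2(z)| = \frac{|x_z - p_z|}{q} < \frac{1}{q}$, as required.

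I do not expect any genuine obstacle here: the only point needing care is the bookkeeping that the deficit $k$ is both non-negative and no larger than the number $t$ of coordinates available to round up, which is exactly the inequality $k \le t$ established above and is what lets the largest-remainder rounding go through. The strictness of the final inequality is immediate from the fact that nonzero fractional parts are strictly below $1$. I note that the hypothesis $q \ge \ell$ is not actually needed for this construction, but it is the regime in which the lemma will be applied together with Lemma~\ref{lemm:approx optimal strategy}.
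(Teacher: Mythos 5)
Your proof is correct, but it follows a genuinely different route from the paper's. The paper proceeds by a case analysis on whether some element has $d_1(z)\in[\frac{1}{q},1-\frac{1}{q}]$: in the degenerate case it uses $q\geq\ell$ to argue that a single element with mass above $1-\frac{1}{q}$ exists and rounds it to a point mass; in the main case it orders the elements so that a ``middle'' element comes last, then rounds the others one by one, choosing floor or ceiling of $q\cdot d_1(z^k)$ according to the sign of the accumulated signed error, maintaining the invariant that this running error stays in $(-\frac{1}{q},\frac{1}{q})$, and finally lets the last element absorb the residual error (its ``middle'' position is what keeps it inside $[0,1]$). You instead use the classical largest-remainder (Hamilton apportionment) argument: scale by $q$, take floors, observe the deficit $k=\sum_z(x_z-\lfloor x_z\rfloor)$ is a non-negative integer bounded by the number $t$ of coordinates with nonzero fractional part, and promote $k$ such coordinates to their ceilings. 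Your bookkeeping is sound ($k\leq t$, each $p_z\in\{\lfloor x_z\rfloor,\lceil x_z\rceil\}$, strict inequality since nonzero fractional parts lie strictly in $(0,1)$), and in fact any choice of $k$ coordinates with nonzero fractional part works, not only the largest remainders. What your approach buys: it is shorter, needs no case analysis, no ordering trick, and, as you correctly note, no hypothesis relating $q$ to $\ell$, so it proves the slightly stronger statement that the lemma holds for every positive integer $q$; the paper's running-error scheme genuinely needs $q\geq\ell$ in its degenerate case and needs the last element to be ``middle'' so that the residual assignment stays in $[0,1]$, a complication your global deficit count sidesteps entirely. Since the lemma is only invoked with $q'\geq q\geq m\geq|\Gamma_1(s)|$ (Corollary~\ref{cor:q-rounded strategies}), the extra generality is not exploited, but it does no harm.
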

\begin{proof}
WLOG we consider that $\ell\geq 2$ (since the unique distribution over a singleton set clearly have the desired properties for all integers $q\geq 1$).
Given distribution $d_1$ we construct a witness distribution $d_2$.
There are two cases. Either (i)~there is an element $z\in Z$ such that $\frac{1}{q}\leq d_1(z)\leq 1-\frac{1}{q}$, or (ii)~no such element exists. 
\begin{itemize}
\item We first consider case~(ii), i.e., there exists no element $z$ such that $\frac{1}{q}\leq d_1(z)\leq 1-\frac{1}{q}$.
Consider an element $z^*\in Z$ such that $1-\frac{1}{q}<d_1(z^*)$. 
Precisely only one such element exists in this case since not all $\ell$ elements can have probability strictly less than $\frac{1}{q}\leq \frac{1}{\ell}$,
and no more than one element can have probability strictly  more than $1-\frac{1}{q}\geq \frac{1}{2}$. 
Then let $d_2(z^*)=1$ and $d_2(z)=0$ for all other elements in $Z$. 
This clearly ensures that $|d_1(z)-d_2(z)|<\frac{1}{q}$ for all $z\in Z$ and that $d_2$ is a $q$-rounded distribution.

\item
Now we consider case (i). Let $z^{\ell}$ be an arbitrary element in $Z$ such that $\frac{1}{q}\leq d_1(z^{\ell})\leq 1-\frac{1}{q}$. Let $\{z^1,\dots,z^{\ell-1}\}$ be an arbitrary ordering of the remaining elements. 
We now construct $d_2$ iteratively such that in step $k$ we have assigned probability to $\{z^1,\dots, z^k\}$. 
We establish the following \emph{iterative property}: 
in step $k$ we have that  $\sum_{c=1}^k (d_1(z^c)-d_2(z^c))\in (-\frac{1}{q};\frac{1}{q})$. 
The iteration stops when $k=\ell-1$, and then we assign $d_2(z^{\ell})$ the probability $1-\sum_{c=1}^{\ell-1}d_2(z^c)$. 
For all $1\leq k\leq \ell-1$, the iterative definition of $d_2(z^k)$ is as follows:
\[
d_2(z^k)=
\begin{cases}  
\displaystyle 
\frac{\lfloor q\cdot d_1(z^k)\rfloor}{q} & \text{if $\sum_{c=1}^{k-1} (d_1(z^c)-d_2(z^c))<0$}\\[2ex]
\displaystyle 
\frac{\lceil q\cdot d_1(z^k)\rceil}{q} & \text{if $\sum_{c=1}^{k-1} (d_1(z^c)-d_2(z^c))\geq 0$}
\end{cases}\]
We use the standard convention that the empty sum is~0.
For $1\leq k\leq \ell-1$, observe that (a)~$|d_1(z^k)-d_2(z^k)|< \frac{1}{q}$; and 
(b)~since $d_1(z^k) \in [0,1]$ also $d_2(z^k)$ is in $[0;1]$. 
Moreover, there exists an integer $p$ such that $d_2(z^k)=\frac{p}{q}$.
We have that 
\[
d_1(z^k)-\frac{1}{q}<\frac{\lfloor q\cdot d_1(z^k)\rfloor}{q}\leq d_1(z^k)\leq \frac{\lceil q\cdot d_1(z^k)\rceil}{q} <d_1(z^k)+\frac{1}{q} \qquad (\ddag).
\] 
Thus, if the sum $\sum_{c=1}^{k-1} (d_1(z^c)-d_2(z^c))$ is negative, then we have that 
\[\frac{-1}{q}< \sum_{c=1}^{k-1} (d_1(z^c)-d_2(z^c))\leq \sum_{c=1}^{k} (d_1(z^c)-d_2(z^c))<\sum_{c=1}^{k-1} (d_1(z^c)-d_2(z^c))+\frac{1}{q}<\frac{1}{q}\enspace ,\] 
where the first inequality is the iterative property (by induction for $k-1$); 
the second inequality follows because in this case we have $d_2(z^k)=\displaystyle\frac{\lfloor q\cdot d_1(z^k)\rfloor}{q}\leq d_1(z^k)$ by ($\ddag$);
the third inequality follows since $d_1(z^k)-d_2(z^k) =d_1(z^k) - \displaystyle\frac{\lfloor q\cdot d_1(z^k)\rfloor}{q} < \frac{1}{q}$ by ($\ddag$); 
the final inequality follows since $\sum_{c=1}^{k-1} (d_1(z^c)-d_2(z^c))$ is negative. 
Symmetrically, if the sum $\sum_{c=1}^{k-1} (d_1(z^c)-d_2(z^c))$ is not negative, then we have that 
\[\frac{1}{q}> \sum_{c=1}^{k-1} (d_1(z^c)-d_2(z^c))\geq \sum_{c=1}^{k} (d_1(z^c)-d_2(z^c))>\sum_{c=1}^{k-1} (d_1(z^c)-d_2(z^c))-\frac{1}{q}\geq \frac{-1}{q}\enspace ,\]
using the iterative property (by induction) and the inequalities of ($\ddag$) as in the previous case.
Thus, in either case, we have that $\frac{-1}{q}<\sum_{c=1}^{k} (d_1(z^c)-d_2(z^c))<\frac{1}{q}$, establishing the iterative property
by induction.

Finally we need to consider $z^{\ell}$. 
First, we show that $|d_1(z^{\ell})-d_2(z^{\ell})|< \frac{1}{q}$. 
We have that \[d_2(z^{\ell})=1-\sum_{c=1}^{\ell-1}d_2(z^c)=\sum_{c=1}^{\ell}(d_1(z^c))-\sum_{c=1}^{\ell-1}(d_2(z^c))=d_1(z^{\ell})+\sum_{c=1}^{\ell-1} (d_1(z^c)-d_2(z^c))\enspace .\]
Hence $|d_1(z^{\ell})-d_2(z^{\ell})|< \frac{1}{q}$, by our iterative property. This also ensures that $d_2(z^{\ell})\in [0;1]$, since $d_1(z^{\ell})\in [\frac{1}{q};1-\frac{1}{q}]$, by definition. Thus, $d_2$ is a distribution over $Z$ (since it is clear that $\sum_{z\in Z} d_2(z)=1$, because of the definition of $d_2(z^{\ell})$ and we have shown for all $z\in Z$ that $d_2(z)\in [0;1]$). Since we have ensured that for each $z\in (Z\setminus\set{z^{\ell}})$ that $d_2(z)=\frac{p}{q}$ for some integer $p$, it follows that $d_2(z^{\ell})=\frac{p'}{q}$ for some integer $p'$ (since $q$ is an integer). 
This implies that $d_2$ is a $q$-rounded distribution. We also have $|d_1(z)-d_2(z)|<\frac{1}{q}$ for all $z\in Z$ (by ($\ddag$)) and thus all the desired properties have been established.
\end{itemize}
This completes the proof.
\end{proof}

\begin{corollary}\label{cor:q-rounded strategies}
For all almost-sure ergodic CMPGs, for all $\epsilon>0$, there exists an $\epsilon$-optimal, 
$q'$-rounded strategy $\sigma_1$ for Player~1, for all integers $q'\geq q$, where \[q=4\cdot \epsilon^{-1}\cdot m \cdot n^2\cdot (\delta_{\min})^{-\rand}\enspace .\]
\end{corollary}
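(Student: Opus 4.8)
The plan is to assemble the corollary directly from the two preceding lemmas: Lemma~\ref{lemm:round distribution} supplies, at each state, a $q'$-rounded distribution close to the optimal one, and Lemma~\ref{lemm:approx optimal strategy} certifies that any stationary strategy built from such close distributions is $\epsilon$-optimal. The only real work is to check that the quantitative hypotheses of the two lemmas line up.

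First I would invoke the existence of an optimal stationary strategy $\sigma_1$ for Player~1 in every almost-sure ergodic CMPG, established in the remark following the results of Hoffman and Karp~\cite{HK}. For each state $s$, the distribution $\sigma_1(s)$ is supported on the action set $\Gamma_1(s)$, whose size $\ell=|\Gamma_1(s)|$ is at most $m$. Fixing an arbitrary integer $q'\geq q$, I would apply Lemma~\ref{lemm:round distribution} to $\sigma_1(s)$ at each state separately. This is legitimate because $q'\geq q\geq m\geq \ell$, where $q\geq m$ follows from $\epsilon^{-1},\, n^2,\, (\delta_{\min})^{-\rand}\geq 1$ (taking $\epsilon\leq 1$ without loss of generality, since any strategy is $\epsilon$-optimal when $\epsilon\geq 1$). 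The lemma then yields a $q'$-rounded distribution $d^s$ over $\Gamma_1(s)$ satisfying $|\sigma_1(s)(a)-d^s(a)|<\frac{1}{q'}$ for every action $a\in\Gamma_1(s)$.

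Defining $\sigma_1'(s):=d^s$ at every state produces, by construction, a $q'$-rounded stationary strategy (hence of patience at most $q'$). To finish I would verify the hypothesis of Lemma~\ref{lemm:approx optimal strategy}: since $q'\geq q$ we have $\frac{1}{q'}\leq \frac{1}{q}$, so $\sigma_1'(s)(a)\in[\sigma_1(s)(a)-\frac{1}{q};\,\sigma_1(s)(a)+\frac{1}{q}]$ for all states $s$ and actions $a\in\Gamma_1(s)$. Lemma~\ref{lemm:approx optimal strategy} then immediately gives that $\sigma_1'$ is $\epsilon$-optimal, which is exactly the claim.

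I do not expect a genuine obstacle, since the substance is carried entirely by the two lemmas and the corollary amounts to bookkeeping. The single point requiring care is the chain $q'\geq q\geq m\geq|\Gamma_1(s)|$, which does double duty: it guarantees that the rounding lemma is applicable (that lemma requires $q'\geq\ell$) and that the rounded strategy stays within the $\frac{1}{q}$-tolerance demanded by the approximation lemma (whose closeness is measured against the \emph{specific} $q$ of its statement rather than against $q'$).
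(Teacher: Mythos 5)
Your proposal is correct and follows essentially the same route as the paper's own proof: take an optimal stationary strategy (which exists for almost-sure ergodic CMPGs), round it state-by-state via Lemma~\ref{lemm:round distribution} using $q'\geq q\geq m\geq|\Gamma_1(s)|$, and conclude $\epsilon$-optimality from Lemma~\ref{lemm:approx optimal strategy} since $\frac{1}{q'}\leq\frac{1}{q}$. Your extra justification of $q\geq m$ (via the WLOG reduction to $\epsilon\leq 1$) is a detail the paper leaves implicit, and it is handled correctly.
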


\begin{proof}
Notice that the $q$ defined here is the same $q$ as is defined in Lemma~\ref{lemm:approx optimal strategy}. 
Let the integer $q'\geq q$ be given. Consider an almost-sure ergodic CMPG $G$. 
Let $\sigma_1'$ be a optimal stationary strategy in $G$ for Player~1. For each state $s$, pick a $q'$-rounded distribution $d^s$ over $\Gamma_1(s)$, such that $|\sigma_1'(s)(a_1)-d^s(a_1)|<\frac{1}{q'}\leq \frac{1}{q}$ for all $a_1\in \Gamma_1(s)$. Such a distribution exists by Lemma~\ref{lemm:round distribution}, since $q'\geq q\geq m\geq |\Gamma_1(s)|$. Let the strategy $\sigma_1$ be defined as follows: $\sigma_1(s)=d^s$ for each state $s\in S$. Hence $\sigma_1$ is a $q'$-rounded strategy. By Lemma~\ref{lemm:approx optimal strategy}, the strategy $\sigma_1$ is also an $\epsilon$-optimal strategy.
\end{proof}

\smallskip\noindent{\bf Exponential lower bound on patience.}
We now present a family of ergodic CMPGs where the lower bound
on patience is exponential in $\rand$.
We present the lower bound on a special class of ergodic CMPGs,
namely, skew-symmetric ergodic CMPGs which we define below.

\smallskip\noindent{\em Skew-symmetric CMPGs.}
A CMPG $G$ is {\em skew-symmetric}\footnote{For the special case of matrix games (that is; the case where $n=1$), this definition of skew-symmetry exactly corresponds to the notion of skew-symmetry for such.}, 
if there is a  bijective map $f:S \to S$, where $f(f(s))=s$, (for all $s$ we use $\ov{s}$ to denote $f(s)$) 
where the following holds: 
For each state $s$, there is a  bijective map $f_1^s:\Gamma_1(s)\to \Gamma_2(\ov{s})$ 
(for all $i\in \Gamma_1(s)$ we use $\ov{i}$ to denote $f_1^s(i)$) and a bijective map $f_2^s:\Gamma_2(s)\to\Gamma_1(\ov{s})$ (similarly to the first map, for all $j\in \Gamma_2(s)$ we use $\ov{j}$ to denote $f_2^s(j)$), 
such that for all $i\in \Gamma_1(s)$ and all $j\in \Gamma_2(s)$, the following 
conditions hold: (1)~we have $\cost(s,i,j)=1-\cost(\ov{s},\ov{j},\ov{i})$; 
(2)~for all $s'$ such that $\trans(s,i,j)(s')>0$, we have $\trans(\ov{s},\ov{j},\ov{i})(\ov{s}')=\trans(s,i,j)(s')$; 
and (3)~we have $f_2^{\ov{s}}(f_1^s(i))=i$ and that $f_1^{\ov{s}}(f_2^s(j))=j$.

\begin{lemma}\label{lemm:skew}
Consider a skew-symmetric CMPG $G$. Then for all $s$ we have $v_s=1-v_{\ov{s}}$.
\end{lemma}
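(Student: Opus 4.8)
The plan is to exhibit a reward-complementing, player-swapping symmetry of the game and to push it through the $\sup$--$\inf$ definitions of the lower and upper values, invoking the existence of values (Mertens--Neyman~\cite{MN81}) only at the very end. Concretely, I will build an involution $\Phi$ on plays together with an involution on strategies that turns a Player~1 strategy at $s$ into a Player~2 strategy at $\ov{s}$ (and vice versa), under which $\LimInfAvg$ becomes $1-\LimSupAvg$. Granting the key measure-preservation fact, the value identity follows by a short manipulation of suprema and infima, so the target is to reduce $v_s = 1 - v_{\ov s}$ to the single relation $\underline{v}_{\ov s} = 1 - \ov{v}_s$ and then use that the lower and upper values coincide.

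For the construction, given a play $\pat = \big((s_0,i_0,j_0),(s_1,i_1,j_1),\dots\big)$ with $i_k \in \mov_1(s_k)$ and $j_k \in \mov_2(s_k)$, I set $\Phi(\pat) = \big((\ov{s_0}, f_2^{s_0}(j_0), f_1^{s_0}(i_0)),(\ov{s_1}, f_2^{s_1}(j_1), f_1^{s_1}(i_1)), \dots\big)$; that is, each state $s_k$ is replaced by $\ov{s_k}$, Player~1 now plays the image $f_2^{s_k}(j_k)\in\mov_1(\ov{s_k})$ of the old Player~2 move, and Player~2 plays the image $f_1^{s_k}(i_k)\in\mov_2(\ov{s_k})$ of the old Player~1 move. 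Condition~(3), namely $f_2^{\ov s}\circ f_1^s = \mathrm{id}$ and $f_1^{\ov s}\circ f_2^s=\mathrm{id}$, together with $f(f(s))=s$, guarantees that $\Phi$ is a well-defined (measurable, coordinate-wise) involution on $\pats$. The same bijections define an involution on strategies: given $\stra_1 \in \bigstra_1$ I define $\ov{\stra_1} \in \bigstra_2$ by prescribing, after the mirrored history $\Phi(x\cdot s)$, the distribution assigning probability $\stra_1(x\cdot s)(i)$ to the Player~2 action $f_1^s(i)$; symmetrically I map $\bigstra_2$ into $\bigstra_1$ via $f_2^s$, and these two maps are mutually inverse by condition~(3).

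The crux is the measure-preservation claim: for all $\stra_1\in\bigstra_1$, $\stra_2\in\bigstra_2$ and every measurable $\cala\subseteq\pats$,
\[
\Pr_s^{\stra_1,\stra_2}(\cala) \;=\; \Pr_{\ov s}^{\ov{\stra_2},\,\ov{\stra_1}}\big(\Phi(\cala)\big),
\]
which I verify on cylinder sets and extend to the generated $\sigma$-algebra by the standard extension argument. Step by step, the probability of the one-step extension from $s_k$ playing $(i_k,j_k)$ and moving to $s_{k+1}$ is $\stra_1(\cdot)(i_k)\,\stra_2(\cdot)(j_k)\,\trans(s_k,i_k,j_k)(s_{k+1})$; under $\Phi$ this matches the extension from $\ov{s_k}$ where Player~1 (using $\ov{\stra_2}$) plays $f_2^{s_k}(j_k)$ and Player~2 (using $\ov{\stra_1}$) plays $f_1^{s_k}(i_k)$ and the walk moves to $\ov{s_{k+1}}$, since the two action probabilities agree by the definition of $\ov{\stra_1},\ov{\stra_2}$ and the transition probabilities agree by condition~(2). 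This is the main obstacle: one must keep the role-swap bookkeeping exactly straight, so that the mirrored Player~1 strategy is $\ov{\stra_2}$ and the mirrored Player~2 strategy is $\ov{\stra_1}$, and not the reverse. Finally condition~(1) gives $\Avg_T(\Phi(\pat)) = 1 - \Avg_T(\pat)$ for every $T$, hence $\LimInfAvg(\Phi(\pat)) = 1 - \LimSupAvg(\pat)$.

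Putting these together, the pushforward identity yields, for all $\stra_1,\stra_2$,
\[
\Exp_{\ov s}^{\ov{\stra_2},\,\ov{\stra_1}}[\LimInfAvg] \;=\; \Exp_s^{\stra_1,\stra_2}\big[1-\LimSupAvg\big] \;=\; 1 - \Exp_s^{\stra_1,\stra_2}[\LimSupAvg].
\]
Since $\stra_1\mapsto \ov{\stra_1}$ and $\stra_2\mapsto\ov{\stra_2}$ are bijections $\bigstra_1\to\bigstra_2$ and $\bigstra_2\to\bigstra_1$, I reparametrise the supremum over Player~1 strategies at $\ov s$ by $\ov{\stra_2}$ and the infimum over Player~2 strategies at $\ov s$ by $\ov{\stra_1}$ to obtain
\[
\underline{v}_{\ov s} = \sup_{\stra_2}\inf_{\stra_1}\big(1 - \Exp_s^{\stra_1,\stra_2}[\LimSupAvg]\big) = 1 - \inf_{\stra_2}\sup_{\stra_1}\Exp_s^{\stra_1,\stra_2}[\LimSupAvg] = 1 - \ov{v}_s.
\]
By the Mertens--Neyman theorem the lower and upper values coincide, so $v_{\ov s} = \underline{v}_{\ov s} = 1 - \ov{v}_s = 1 - v_s$, which is exactly the claimed identity $v_s = 1 - v_{\ov s}$.
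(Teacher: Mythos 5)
Your proof is correct, and it rests on the same underlying symmetry as the paper's: mirror states via $f$, swap the players' actions via $f_1^s,f_2^s$, and use condition~(1) to complement rewards. The difference is in how far the argument is carried. The paper's proof mirrors only \emph{stationary} strategies: it observes that the profile $(\sigma_1,\ov{\sigma}_1)$ is self-mirrored, so the walk from $s$ under it corresponds, state-by-state and with rewards complemented, to the walk from $\ov{s}$ under the same profile, and then asserts the value identity. You instead build the mirroring involution $\Phi$ on plays and on \emph{arbitrary} (history-dependent) strategies, verify the pushforward identity $\Pr_s^{\stra_1,\stra_2}(\cala)=\Pr_{\ov{s}}^{\ov{\stra_2},\ov{\stra_1}}(\Phi(\cala))$ on cylinder sets, and close with the reparametrization $\underline{v}_{\ov{s}}=1-\ov{v}_s$ plus Mertens--Neyman determinacy. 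This buys something real: the lemma is stated for all skew-symmetric CMPGs, not just ergodic ones, and in general concurrent mean-payoff games stationary strategies need not achieve (or even approximate) the value, so the paper's stationary-strategy sketch implicitly leans on stationary sufficiency (harmless for the ergodic corollary it feeds, but a gap for the lemma as stated). Your version needs no such assumption, and the sup--inf bookkeeping (that the mirrored Player~1 strategy is $\ov{\stra_2}$ and the mirrored Player~2 strategy is $\ov{\stra_1}$) is exactly the part the paper leaves implicit; you got it right.
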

\begin{proof}
Let $s$ be a state.
For a stationary strategy $\sigma_k$ for Player~$k$, $k \in \set{1,2}$, 
let $\ov{\sigma}_k$ be a stationary strategy for the other player defined as follows: For each state $s$ and action $i\in \Gamma_k(s)$, 
let $\ov{\sigma}_k(\ov{s})(\ov{i})=\sigma_k(s)(i)$.
For a stationary strategy $\sigma_1$ for Player~1, consider the stationary strategy profile $(\sigma_1,\ov{\sigma}_1)$.
For the random walk $P=\pat_s^{\sigma_1,\ov{\sigma}_1}$, 
where the players follows $(\sigma_1,\ov{\sigma}_1)$, starting in $s$ corresponds to the random walk 
$\ov{P}=\pat_{\ov{s}}^{\sigma_1,\ov{\sigma}_1}$, 
where the players follows $(\sigma_1,\ov{\sigma}_1)$, starting in $\ov{s}$, 
in the obvious way (that is: if $P$ is in state $s_i$ in the $i$-th step and the reward is $\lambda$, 
then $P'$ is in $\ov{s}_i$, in the $i$-th step and the reward is $1-\lambda$). The two random walks, $P$ and $P'$, are equally likely. 
This implies that $v_s=1-v_{\ov{s}}$.
\end{proof}

\begin{corollary}\label{coro:skew}
For all skew-symmetric ergodic CMPGs the value is $\frac{1}{2}$.
\end{corollary}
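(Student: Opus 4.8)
The plan is to combine Lemma~\ref{lemm:skew} with the fact, noted in the Remark following the definitions (and due to Hoffman and Karp~\cite{HK}), that in an ergodic CMPG \emph{every} state has the same value, which is the value of the game. These two facts together force the common value to be the unique fixed point of $x \mapsto 1-x$.

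Concretely, I would first invoke ergodicity: since $G$ is an ergodic CMPG, by the Remark there is a single number $v$ with $v_s = v$ for all $s \in S$. In particular, for the involution $f$ witnessing skew-symmetry, both $s$ and its image $\ov{s}=f(s)$ satisfy $v_s = v_{\ov{s}} = v$. Next I would apply Lemma~\ref{lemm:skew}, which gives $v_s = 1 - v_{\ov{s}}$ for every state $s$. Substituting the common value into this identity yields $v = 1 - v$, i.e. $2v = 1$, and hence $v = \tfrac{1}{2}$.

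There is essentially no substantive obstacle here: the corollary is an immediate arithmetic consequence of the two cited results, and the only point requiring any care is making explicit why ergodicity licenses the step $v_s = v_{\ov{s}}$. Skew-symmetry alone only relates a state to its mirror image; without the equal-value property of ergodic games one could not conclude that the two sides of $v_s = 1 - v_{\ov{s}}$ coincide. I would therefore state the two invoked facts side by side and present the one-line computation $v = 1-v \Rightarrow v = \tfrac{1}{2}$ as the conclusion, emphasizing that the equality $v_s = v_{\ov{s}}$ is exactly where ergodicity (as opposed to mere skew-symmetry) is used.
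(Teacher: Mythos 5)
Your proposal is correct and is exactly the argument the paper intends: the corollary is stated there without proof as an immediate consequence of Lemma~\ref{lemm:skew} combined with the fact (from the Remark on Hoffman--Karp) that all states of an ergodic CMPG share a common value, giving $v = 1-v$ and hence $v = \tfrac{1}{2}$. Your explicit remark on where ergodicity (as opposed to skew-symmetry alone) enters is precisely the right point of care, and nothing further is needed.
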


\noindent{\em Family $G^k_{\eta}$.}
We  now provide a lower bound for patience of $\epsilon$-optimal strategies in skew-symmetric 
ergodic CMPGs. 
More precisely, we give a family of games $\set{G^k_{\eta} \mid k\geq 2\vee 0<\eta<\frac{1}{4\cdot k+4}}$, 
such that $G^k_{\eta}$ consists of $2\cdot k+5$ states and such that $\delta_{\min}$ for $G^k_{\eta}$ is $\eta$. 
The game $G^k_{\eta}$ is such that all $\frac{1}{48}$-optimal stationary strategies require patience at least 
$\frac{1}{2\cdot \eta^{k/2}}$. 

\smallskip\noindent{\em Construction of the family $G^k_{\eta}$.}
For a given $k\geq 2$ and $\eta$, such that $0<\eta<\frac{1}{4\cdot k+4}$, let the game $G_{\eta}^k$ be as follows: 
The game consists of $2\cdot k+5$ states, $S=\set{a,b,\ov{b},c,\ov{c}, s_1,\ov{s}_1,s_2,\ov{s}_2,\dots,s_k,\ov{s}_k}$. 
For $s\in (S\setminus\set{c,\ov{c}})$, we have that $|\Gamma_1(s)|=|\Gamma_2(s)|=1$. 
For $s'\in \set{c,\ov{c}}$, we have that 
$|\Gamma_1(s')|=|\Gamma_2(s')|=2$, and let $\Gamma_1(s')=\set{i^{s'}_1,i^{s'}_2}$ 
and $\Gamma_2(s')=\set{j^{s'}_1,j^{s'}_2}$.
For $y\geq 2$ we have that $s_y$ (resp. $\ov{s}_y$) has a transition to $s_k$ (resp. $\ov{s}_k$) of probability $1-\eta$;  
to $s_{y-1}$ (resp. $\ov{s}_{y-1}$), where $s_0=\ov{s}_0=a$, with probability $\eta$; and also the reward of the transition is~$0$ (resp.~$1$).  
The state $b$ (resp. $\ov{b}$) is deterministic and has a transition to $a$ of reward~$0$ (resp.~$1$). 
The transition function at state $c$ is deterministic, and thus for each pair $(i,j)$ of actions we define the unique successor of $c$. 
\begin{enumerate}
\item For $(i^c_1,j^c_1)$ and $(i^c_2,j^c_2)$  the successor is $\ov{b}$. 
\item For $(i^c_1,j^c_2)$ the successor is $b$. 
\item For $(i^c_2,j^c_1)$ the successor is $s_k$. 
\end{enumerate}
The reward of the transitions from $c$ is~$0$. 
Intuitively, the transitions and rewards from $\ov{c}$ are defined from skew-symmetry.
Formally, we have: 
\begin{enumerate}
\item For $(i^{\ov{c}}_1,j^{\ov{c}}_1)$ and $(i^{\ov{c}}_2,j^{\ov{c}}_2)$  the successor is $b$. 
\item For $(i^{\ov{c}}_2,j^{\ov{c}}_1)$ the successor is $\ov{b}$. 
\item For $(i^{\ov{c}}_1,j^{\ov{c}}_2)$ the successor is $\ov{s}_k$. 
\end{enumerate}
The reward of the transitions from $\ov{c}$ is~$1$. 
There is a transition from $a$ to each other state. The probability to go to $c$ and the probability to go to $\ov{c}$ are both $\frac{1}{4}$. 
For each other state $s'$ (other than $c$, $\ov{c}$ and $a$), the probability to go to $s'$ from $a$ is $\frac{1}{4\cdot k+4}$. 
The transitions from $a$ have reward $\frac{1}{2}$. There is an illustration of $G^k_{\eta}$ in Figure~\ref{fig:ergodic skew-symmetric}.

\begin{lemma}
For any given $k$ and $\eta$, such that $0<\eta<\frac{1}{4\cdot k+4}$, 
the CMPG $G^k_{\eta}$ is both skew-symmetric and ergodic.
Thus $G^k_{\eta}$ has value $\frac{1}{2}$.
\end{lemma}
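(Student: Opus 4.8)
The plan is to verify the two structural properties separately and then invoke Corollary~\ref{coro:skew}. For skew-symmetry I would first exhibit the involution $f:S\to S$ given by $f(a)=a$, $f(b)=\ov{b}$, $f(c)=\ov{c}$, and $f(s_y)=\ov{s}_y$ for all $1\le y\le k$ (together with the reverse assignments), which clearly satisfies $f(f(s))=s$. For the single-action states the action bijections $f_1^s,f_2^s$ are forced, and I would check the three defining conditions directly: condition~(1) holds because every transition out of an unbarred state has reward $0$ (or $\frac{1}{2}$ at $a$) while the transition out of the $f$-image has reward $1$ (or $\frac{1}{2}$), so $\cost(s,i,j)=1-\cost(\ov{s},\ov{j},\ov{i})$; condition~(2) holds because the transition probabilities were defined as the $f$-mirror of one another (for example $a$ sends mass $\frac{1}{4}$ to each of $c,\ov{c}$ and $\frac{1}{4k+4}$ to every other non-$a$ state, which is symmetric under $f$, and $s_y\to s_k,s_{y-1}$ mirrors $\ov{s}_y\to\ov{s}_k,\ov{s}_{y-1}$).

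The genuinely fiddly part is the pair of choice states $c,\ov{c}$, and this is the step I expect to require the most care. There I would set $f_1^c(i^c_1)=j^{\ov{c}}_1$, $f_1^c(i^c_2)=j^{\ov{c}}_2$ and $f_2^c(j^c_1)=i^{\ov{c}}_1$, $f_2^c(j^c_2)=i^{\ov{c}}_2$, taking the maps at $\ov{c}$ to be the corresponding inverses so that condition~(3), namely $f_2^{\ov{s}}\circ f_1^s=\mathrm{id}$ and $f_1^{\ov{s}}\circ f_2^s=\mathrm{id}$, holds by construction. With this choice I would check the four action pairs at $c$ against condition~(2): for instance $(i^c_1,j^c_1)\to\ov{b}$ must be mirrored by $(\ov{j^c_1},\ov{i^c_1})=(i^{\ov{c}}_1,j^{\ov{c}}_1)\to b=f(\ov{b})$, which matches the prescribed transitions at $\ov{c}$, and similarly for the remaining three pairs; the all-zero rewards at $c$ against the all-one rewards at $\ov{c}$ give condition~(1). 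This is routine case-checking, but it is exactly where an index or ``bar'' slip would break the argument.

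For ergodicity I would argue that $a$ is the hub tying all states into a single recurrent class under every strategy profile. First, the transition out of $a$ is player-independent (both players have a single move) and places probability at least $\frac{1}{4k+4}$ on every state other than $a$, so from $a$ each state is reached in one step with positive probability. Conversely, from every state the hub $a$ is reachable with positive probability against all strategies: $b,\ov{b}$ move to $a$ deterministically; each choice at $c$ (resp.\ $\ov{c}$) leads to $b$, $\ov{b}$, or $s_k$ (resp.\ the barred analogues), all of which reach $a$; and from any $s_y$ the chain $s_y\to s_{y-1}\to\cdots\to a$ has positive probability. Applying Lemma~\ref{lemm:prob to reach} with $T=\{a\}$ then yields a uniform lower bound $(\delta_{\min})^{\rand}>0$ on hitting $a$ within $n$ steps from any state under any strategy profile, whence $a$ is visited infinitely often almost surely. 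Combining this with the fixed, strategy-independent fan-out from $a$ (via the conditional second Borel--Cantelli lemma) shows that every state is visited infinitely often almost surely, i.e.\ $S$ is a single ergodic component.

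Having established that $G^k_{\eta}$ is both skew-symmetric and ergodic, I would simply invoke Corollary~\ref{coro:skew} to conclude that its value is $\frac{1}{2}$. The only boundary subtlety I would flag is the case of $s_1$ and $\ov{s}_1$, whose transitions specialize the ``$s_y\to s_{y-1}$'' rule at $s_0=a$; I would treat them exactly as the generic $s_y$, noting that this is precisely what keeps $a$ reachable from the bottom of the chain and thus drives the ergodicity argument.
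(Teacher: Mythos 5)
Your proposal is correct and follows essentially the same route as the paper's proof: the same involution $f$ (fixing $a$, swapping barred and unbarred states), the same action bijections $\ov{i}^c_1=j^{\ov{c}}_1$, $\ov{i}^c_2=j^{\ov{c}}_2$, $\ov{j}^c_1=i^{\ov{c}}_1$, $\ov{j}^c_2=i^{\ov{c}}_2$ at the choice states, the same ergodicity argument via the hub $a$ (almost-surely reached from everywhere, with positive-probability transitions to all other states), and the same conclusion via Corollary~\ref{coro:skew}. You merely spell out the case-checking and the Borel--Cantelli step that the paper leaves implicit.
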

\begin{proof}
We first argue about ergodicity: from any starting state $s$, the state $a$ is reached almost-surely; and from $a$ there is a transition to all other states with positive probability. 
This ensures that $G^k_{\eta}$ is ergodic.

The following mappings implies that CMPG $G^k_{\eta}$ is skew-symmetric:
(i)~$f(s_i)=\ov{s}_i$ for all $i$; and (ii)~$f(a)=a$; and 
(iii)~$f(b)=\ov{b}$; and (iv)~$f(c)=\ov{c}$. 
The bijective map $f_1^c$ between $\Gamma_1(c)$ and $\Gamma_2(\ov{c})$ is such that $\ov{i}^c_1=j^{\ov{c}}_1$ (and thus also $\ov{i}^c_2=j^{\ov{c}}_2$). 
The bijective map $f^c_2$ is such that $\ov{j}^c_1=i^{\ov{c}}_1$ (and thus also $\ov{j}^c_2=i^{\ov{c}}_2$). 
\end{proof}

\newcommand{\C}{{\mathcal C}}

\begin{lemma}\label{lemm:low-bound1}
For any given $k$ and $\eta$, such that $0<\eta<\frac{1}{4\cdot k+4}$,
consider the set $\C_p$ of stationary strategies for Player~1 in $G^k_{\eta}$, 
with patience at most $\frac{1}{p}$, where $p=2\cdot \eta^{k/2}$.
Consider the stationary strategy $\sigma_1^*$ defined as:
(i)~$\sigma_1^*(c)(i^c_2)=p$ (and $\sigma_1^*(c)(i^c_1)=1-p$);
and
(ii)~ $\sigma_1^*(\ov{c})(i^{\ov{c}}_2)=1-p$ (and $\sigma_1^*(\ov{c})(i^{\ov{c}}_1)=p$).
Then the strategy $\sigma_1^*$ ensures the maximal value among all strategies
in $\C_p$.
\end{lemma}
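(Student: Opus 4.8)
The plan is to reduce the lemma to a one–parameter optimisation at each of the two choice states and then exploit skew-symmetry.

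First I would observe that every state other than $c$ and $\ov{c}$ carries a single action pair, so a stationary Player-1 strategy $\sigma_1$ is completely described by the two numbers $x=\sigma_1(c)(i^c_2)$ and $y=\sigma_1(\ov{c})(i^{\ov{c}}_2)$. The patience bound $\tfrac1p$ says that every action played with positive probability has probability at least $p$, so the feasible set for each coordinate is exactly $\{0\}\cup[p,1-p]\cup\{1\}$, and $\sigma_1^*$ is the point $(x,y)=(p,1-p)$. Fixing $\sigma_1=(x,y)$, Player~2 faces a Markov decision process, so it suffices to consider a positional best response, i.e. a pure choice at $c$ and at $\ov{c}$ (four combinations). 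I would then compute the value guaranteed by $\sigma_1$ by a renewal-reward (first-passage) argument anchored at $a$, which is reached from every state: the guaranteed value is the minimum over the four responses of $\E[\text{reward collected between successive visits to }a]/\E[\text{time between successive visits to }a]$. The only non-elementary ingredient is the expected return time $\ell_k$ to $a$ after play enters the trap entrance $s_k$ (equivalently $\ov{s}_k$); solving the recurrence $\ell_y=1+(1-\eta)\ell_k+\eta\,\ell_{y-1}$ gives $\ell_k=\Theta(\eta^{-(k-1)})$, and this exponential scale is the source of all the asymmetry.

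Second, I would use the skew-symmetry of $G^k_\eta$ (Lemma~\ref{lemm:skew} and Corollary~\ref{coro:skew}, so the value is $\tfrac12$) to avoid analysing $c$ and $\ov{c}$ separately. Under $f$ the state $\ov{c}$ is the mirror of $c$: the action $i^{\ov{c}}_1$ is the one that, on a mismatch, sends play into the good (reward-$1$) trap $\ov{s}_k$, exactly as $i^c_2$ at $c$ is the one that, on a mismatch, sends play into the bad (reward-$0$) trap $s_k$. Consequently the probability $1-y$ of playing $i^{\ov{c}}_1$ enters the value in the same way that $x$ does at $c$, so it is enough to understand the dependence on a single risky probability and then invoke the symmetric statement with $1-y$ in place of $x$.

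Third comes the optimisation. Writing the guaranteed value $v(x,y)$ as the minimum of the four renewal-reward ratios, I would check that Player~2's best response switches from the cheap ``short-detour'' action (keeping play in $b,\ov{b}$ and off the deep trap) to the ``trap'' action exactly when the risky probability crosses a threshold $x^{\dagger}$, and that $v$, as a function of $x$, rises on $[0,x^{\dagger}]$ and falls on $[x^{\dagger},1]$ with $x^{\dagger}=\Theta(1/\ell_k)=\Theta(\eta^{\,k-1})$. Since $k\ge 2$ gives $\eta^{\,k-1}<\eta^{\,k/2}<p$, the whole feasible interval $[p,1-p]$ lies to the right of the peak, where $v$ is decreasing; hence on this interval $v$ is maximised at the left endpoint $x=p$, and by the mirror argument at $1-y=p$, i.e. $y=1-p$, which is exactly $\sigma_1^*$. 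Finally I would dispose of the isolated feasible points $x\in\{0,1\}$ and $y\in\{0,1\}$, which the interval analysis does not cover: at $x=1$ Player~2 forces the bad trap deterministically, and at $x=0$ Player~1 can never match into $\ov{b}$, so Player~2 safely forces the reward-$0$ detour $b$ (mirror statements at $\ov{c}$). The main obstacle is precisely this quantitative step: carrying out the renewal-reward computation cleanly enough to locate both the best-response threshold and the peak $x^{\dagger}$, and to compare the boundary value $v(p,1-p)$ against the four corner values, all of which differ from $\tfrac12$ only through the two competing scales $p$ and $1/\ell_k$.
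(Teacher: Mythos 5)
Your overall route is essentially the paper's: the paper also reduces everything to the two probabilities at $c$ and $\ov{c}$, fixes a positional best response for Player~2, and compares the two excursion ratios from $c$ (resp.\ $\ov{c}$) back to $a$ --- its quantities $m_1,m_2$ (and $m_1',m_2'$) are exactly your renewal--reward ratios, and its three-part case split (probability $p'>p$ on $i^c_2$, probability $0$ on $i^c_2$, mirrored analysis at $\ov{c}$) is your ``interval plus isolated points'' decomposition. However, both of your key quantitative claims are wrong, and they sit exactly at the step you yourself identify as the crux. First, the return time: solving your recurrence gives $L_{s_k}=\frac{1-\eta^k}{(1-\eta)\cdot\eta^k}=\eta^{-k}\cdot(1+\eta+\dots+\eta^{k-1})$, i.e.\ $\Theta(\eta^{-k})$ (the paper brackets it between $\eta^{-k}$ and $k\cdot\eta^{-k}$), not $\Theta(\eta^{-(k-1)})$. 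Second, and more seriously, the peak of the lower envelope is not at $\Theta(1/L_{s_k})$: equating the two response ratios at $c$, $\frac{1-x}{2\cdot(1-x)+x\cdot(1+L_{s_k})}=\frac{x}{2}$, gives $x^{\dagger}=\frac{-2+\sqrt{4+2\cdot(L_{s_k}-1)}}{L_{s_k}-1}\approx\sqrt{2/L_{s_k}}$, which is $\Theta(\eta^{k/2})$ up to a $\sqrt{k}$ factor --- not $\Theta(\eta^{k})$, and certainly not your $\Theta(\eta^{k-1})$.

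This matters because $p=2\cdot\eta^{k/2}$ exceeds the true peak only by a constant factor (roughly $2$ versus $\sqrt{2}$), so the comfortable asymptotic margin your sketch relies on (``$\eta^{k-1}<\eta^{k/2}<p$'') does not exist; the proof must make a tight constant-factor comparison, which is precisely what the paper does when it verifies that Player~2's best response to $\sigma_1^*$ at $c$ is $j^c_1$ via $m_1<\frac{1}{p\cdot\eta^{-k}}=\frac{\eta^{k/2}}{2}<\eta^{k/2}=\frac{p}{2}=m_2$. Worse, if your scales were correct, your argument would prove a false strengthening: it would imply that for any patience bound $1/p'$ with $p'\gg\eta^{k-1}$ (say $p'=\eta^{3k/4}$ with $k>4$) the minimal-probability strategy is still optimal in $\C_{p'}$. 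In fact, once $p'<x^{\dagger}\approx\sqrt{2/L_{s_k}}$, the feasible interval $[p',1-p']$ contains the peak, and the best strategy in $\C_{p'}$ plays $x\approx x^{\dagger}$ at $c$ rather than the endpoint $p'$, achieving a strictly larger value; this threshold at scale $\eta^{k/2}$ is the entire reason the construction yields a patience lower bound of $\frac{1}{2}\cdot\eta^{-k/2}$ and not something larger. So the plan cannot be executed as stated: repairing it means redoing the central computation correctly, at which point it coincides with the paper's proof.
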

\begin{proof}
First, observe that from $s_k$, the probability to reach $a$ in $k$ steps is $\eta^k$. 
If $a$ is not reached in $k$ steps, then in these $k$ steps $s_k$ is reached again. 
Similarly for $\ov{s}_k$. 
Thus, the expected length $L_{s_k}$ of a run from $s_k$ (or $\ov{s}_k$) 
to $a$, is (strictly) more than $\eta^{-k}$, but (strictly) less\footnote{It is also less than $2\cdot \eta^{-k}+k$, since for any state $s_i$, for $i\geq 1$, there is a probability of more than $\frac{1}{2}$ to go to $s_k$ and whenever the play is in $s_k$ there is a probability of $\eta^{k}$ that it is the last time.} than $k\cdot \eta^{-k}$. 

The proof is split in three parts. 
The first part considers strategies in $\C_p$ that plays $i^c_2$ with probability 
greater than $p$; the second part considers strategies in $\C_p$ that plays 
$i^c_2$ with probability~0; and the third part shows that the optimal distribution 
for the actions in $\ov{c}$ is to play as $\sigma_1^*$.

\begin{enumerate}

\item Consider some stationary strategy $\sigma_1' \in \C_p$ such that $\sigma_1'(c)(i^c_2)=p'>p$.
Consider the strategy $\sigma_1$ such that $\sigma_1(c)=\sigma_1^*(c)$ and $\sigma_1(\ov{c})=\sigma_1'(\ov{c})$. 
We show that $\sigma_1$ guarantees a higher expected mean-payoff value for the run between 
$a$ and $c$ than $\sigma_1'$, and thus $\sigma_1$ ensures 
greater mean-payoff value than $\sigma_1'$.  

For $\ell\in\set{1,2}$, let $\sigma_2^{\ell}$ be an arbitrary stationary strategy which plays $j^c_{\ell}$ with probability~1. 
Let $m_{\ell}$ be the mean-payoff of the run from $c$ to $a$, when Player~1 plays $\sigma_1^*$ and 
Player~2 plays $\sigma_2^{\ell}$.  Define $m'_{\ell}$ similarly, except that Player~1 plays $\sigma_1'$ 
instead of $\sigma_1^*$. 
Then, $m_1=\frac{1-p}{p\cdot (L_{s_k}+1)+(1-p)\cdot 2}$ and $m'_1=\frac{1-p'}{p'\cdot (L_{s_k}+1)+(1-p')\cdot 2}$ 
(the expected length of the run is $p'\cdot (L_{s_k}+1)+(1-p')\cdot 2$ and it gets reward~1 only once and only 
with probability $1-p'$). We now argue that $m_1>m'_1$. Consider $m_1-m'_1$:

\begin{align*}
m_1-m'_1
&
= \frac{1-p}{p\cdot (L_{s_k}+1)+(1-p)\cdot 2}-\frac{1-p'}{p'\cdot (L_{s_k}+1)+(1-p')\cdot 2} \\[2ex] 
&
=\frac{(1-p)\cdot (p'\cdot (L_{s_k}+1)+(1-p')\cdot 2)
- (1-p')\cdot (p\cdot (L_{s_k}+1)+(1-p)\cdot 2)}{(p\cdot (L_{s_k}+1)+(1-p)\cdot 2)\cdot (p'\cdot (L_{s_k}+1)+(1-p')\cdot 2)}
\end{align*}
Hence, see that the numerator of the above expression is
\begin{align*}
(1-p)\cdot (p'\cdot (L_{s_k}+1)+&(1-p')\cdot 2)
- (1-p')\cdot (p\cdot (L_{s_k}+1)+(1-p)\cdot 2) \\
&=
(p'-p)\cdot (L_{s_k}+1) >0
\end{align*}
and therefore $m_1>m'_1$.

We now argue that $m_1<m_2$ and $m_1'<m_2'$ (and thus Player~2 plays $j^c_1$ in $c$ against both $\sigma_1$ (and thus also $\sigma_1^*$) and $\sigma_1'$). 
We  have that $m_2=\frac{p}{2}$ and (repeated for convenience) $m_1=\frac{1-p}{p\cdot (L_{s_k}+1)+(1-p)\cdot 2}<\frac{1}{p\cdot (L_{s_k}+1)}<\frac{1}{p\cdot \eta^{-k}}$. 
But $p=2\cdot \eta^{k/2}$ and therefore $\frac{1}{p\cdot \eta^{-k}}\leq \frac{1}{2\cdot \eta^{k/2}\cdot \eta^{-k}}=\frac{1}{2\cdot \eta^{-k/2}}< \frac{1}{\eta^{-k/2}}\leq \frac{p}{2}$. 
Similar for $m_1'<m_2'$, and hence we have the desired result.

\item Consider some stationary strategy $\sigma_1^0 \in \C_p$ such that $\sigma_1^0(c)(i^c_2)=0$.
Now consider the strategy $\sigma_1$ such that $\sigma_1(c)=\sigma_1^*(c)$ and 
$\sigma_1(\ov{c})=\sigma_1^0(\ov{c})$. 
Then, the best response $\sigma_2^0$ for Player~2 against $\sigma_1^0$ 
plays $j^c_2$ with probability 1. 
We see that if Player~1 follows $\sigma_1^0$ and Player~2 follows $\sigma_2^0$, then 
the mean-payoff of the run from $c$ to $a$ is 0.
Thus $\sigma_1$ ensures greater mean-payoff value than $\sigma_1^0$.

\item Similar to the first two parts, it follows that a strategy that plays 
like $\sigma_1^*$ in $\ov{c}$ ensures at least the mean-payoff value of any
other stationary strategy in $\C_p$ for the play between $\ov{c}$ and $a$.
(In this case, the best response for Player~2 plays $j^{\ov{c}}_1$ with probability~$1$ 
and therefore the mean-payoff for the run from $\ov{c}$ to $a$ is 
$\frac{2-p}{2}$ as the length of the run is 2; and  
with probability $1-p$ both rewards are 1, otherwise the first reward is~$1$ 
and the second reward is $0$).

\end{enumerate}
It follows from above that $\sigma_1^*$ ensures the maximal mean-payoff value 
among all strategies in $\C_p$.
\end{proof}

\begin{lemma}\label{lemm:low-bound2}
For any given $k$ and $\eta$, such that $0<\eta<\frac{1}{4\cdot k+4}$,
consider the set $\C_p$ of stationary strategies for Player~1 in $G^k_{\eta}$, 
with patience at most $\frac{1}{p}$, where $p=2\cdot \eta^{k/2}$.
For all strategies in $\C_p$, the mean-payoff value is at most $\frac{23}{48}$; and 
hence no strategy in $\C_p$ is $\frac{1}{48}$-optimal.
\end{lemma}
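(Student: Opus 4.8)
The plan is to reduce the statement to a single mean-payoff computation and then estimate it. By Lemma~\ref{lemm:low-bound1}, the strategy $\sigma_1^*$ attains the largest value guaranteed by any strategy in $\C_p$, so it suffices to show that $\sigma_1^*$ guarantees at most $\frac{23}{48}$; since $G^k_\eta$ has value $\frac12$ and $\frac12-\frac{23}{48}=\frac{1}{48}$, this immediately gives that no strategy in $\C_p$ is $\frac{1}{48}$-optimal. First I would fix a positional best response $\sigma_2$ of Player~2 against $\sigma_1^*$. Reusing the case analysis of Lemma~\ref{lemm:low-bound1} (where it is shown that $m_1<m_2$, and that the reward-$1$ detour at $\bar c$ only helps Player~1), the best response plays $j^c_1$ at $c$ and $j^{\bar c}_1$ at $\bar c$: at $c$ it forces the costly reward-$0$ excursion through $s_k$ exactly on the $p$-fraction of visits on which the patience bound compels Player~1 to play $i^c_2$, and at $\bar c$ it refuses the symmetric reward-$1$ excursion through $\bar s_k$. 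This fixes a Markov chain whose long-run average reward is the value guaranteed by $\sigma_1^*$.

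To evaluate this average I would use the renewal-reward theorem with the almost-surely recurrent state $a$ as the regeneration point, writing the value as $R/\Lambda$, where $R$ and $\Lambda$ are the expected reward and length of one excursion from $a$ back to $a$. I would enumerate excursions by the first state entered from $a$ and express each expected length and reward in terms of $p=2\eta^{k/2}$ and the hitting time $L_{s_k}$, using the bounds $\eta^{-k}<L_{s_k}<k\cdot\eta^{-k}$ established at the start of Lemma~\ref{lemm:low-bound1}. The central structural observation is that by skew-symmetry the excursions group into reward-balanced blocks: each $s_y$/$\bar s_y$ pair, as well as the $b$/$\bar b$ pair, contributes reward-rate exactly $\frac12$, so in the combination $R-\frac12\Lambda$ all of these cancel and the entire deviation from $\frac12$ is carried by the $c$- and $\bar c$-gadgets.

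It then remains to compute $R-\frac12\Lambda$ and $\Lambda$ explicitly. The $\bar c$-excursion contributes a small positive term (its run mean-payoff is $\frac{2-p}{2}$), while the $c$-excursion contributes a negative term proportional to $p\cdot(1+L_{s_k})$ coming from the long reward-$0$ detour; writing $\frac{R}{\Lambda}=\frac12+\frac{R-\frac12\Lambda}{\Lambda}$ and substituting $p=2\eta^{k/2}$ together with the two-sided bound on $L_{s_k}$, I would check that the ratio is at most $-\frac{1}{48}$, yielding $\frac{R}{\Lambda}\le\frac{23}{48}$.

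The main obstacle is precisely this last quantitative step: one must show that the reward deficit generated by the $c$-gadget's long reward-$0$ excursion survives division by $\Lambda$ and stays below $-\frac{1}{48}$ uniformly in $k$ and in $\eta\in(0,\frac{1}{4\cdot k+4})$, i.e.\ that it is not diluted by the (individually enormous but reward-balanced) excursions through the $s$- and $\bar s$-chains. This is where the calibration $p=2\eta^{k/2}$ is essential, since it is chosen to balance the probability $p$ of taking the detour against its length $L_{s_k}\approx\eta^{-k}$, and the verification reduces to elementary but careful inequalities relating $p$, $L_{s_k}$, and $\Lambda$.
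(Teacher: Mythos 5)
Your plan follows the same skeleton as the paper's proof --- reduce to $\sigma_1^*$ via Lemma~\ref{lemm:low-bound1}, fix the positional best response $(j^c_1,j^{\ov{c}}_1)$, and decompose the play into excursions from $a$ --- but it evaluates the mean payoff by the renewal-reward identity $R/\Lambda$ (ratio of expected cycle reward to expected cycle length), whereas the paper evaluates it as a \emph{probability-weighted average of per-excursion mean payoffs}, namely $\frac12\cdot\frac12+\frac14\cdot\frac{R_{\ov{c}}}{\Lambda_{\ov{c}}}+\frac14\cdot\frac{R_{c}}{\Lambda_{c}}$. Your evaluation is the correct one, and precisely for that reason your last step fails: the dilution you flag as ``the main obstacle'' actually happens. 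The cycle length $\Lambda$ is dominated by the reward-balanced excursions through $s_1,\dots,s_k,\ov{s}_1,\dots,\ov{s}_k$: each of these $2k$ states is entered from $a$ with probability $\frac{1}{4k+4}$ and has expected return time to $a$ at least $(1-\eta)\cdot L_{s_k}$, so $\Lambda\geq\frac{k(1-\eta)}{2(k+1)}\cdot L_{s_k}\geq\frac{11}{36}\cdot L_{s_k}$ for $k\geq 2$. Since those excursions satisfy $R_i=\Lambda_i/2$ exactly, the whole deficit comes from the two gadgets:
\[
R-\frac{\Lambda}{2}\;=\;\frac14\cdot\Bigl(\bigl(1-p\bigr)+\bigl(-\tfrac{p}{2}-\tfrac{p\cdot L_{s_k}}{2}\bigr)\Bigr),
\qquad\text{hence}\qquad
\Bigl|R-\frac{\Lambda}{2}\Bigr|\;\leq\;\frac{p\cdot L_{s_k}}{8}.
\]
Dividing, the deviation of $R/\Lambda$ from $\frac12$ is at most $\frac{9p}{22}<\frac{p}{2}=\eta^{k/2}$ --- not at least $\frac{1}{48}$. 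Concretely, for $k\geq 4$ one has $\eta^{k/2}<\frac{1}{400}$, so $\sigma_1^*\in\C_p$ guarantees strictly more than $\frac12-\frac{1}{48}$, which is the opposite of what you need to prove. No sharpening of the inequalities can rescue the plan; on this game the renewal-reward ratio simply does not drop below $\frac{23}{48}$ once $\eta^{k/2}$ is small.

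It is worth seeing why the paper's computation does not meet the same fate: its displayed formula weights each excursion type by its \emph{probability} rather than by its expected \emph{duration}, so the enormous balanced excursions contribute a fixed $\frac12\cdot\frac12$ no matter how long they are, and the gadget deficit is never diluted. But a probability-weighted average of per-cycle ratios is not a valid evaluation of a long-run average when cycle lengths differ (here they differ by a factor of order $\eta^{-k/2}$: the chain excursions have length $\Theta(\eta^{-k})$ while the $c$-excursion has expected length $\Theta(\eta^{-k/2})$). So your correct method does not merely fail to reprove the lemma --- it exposes that the bound $\frac{23}{48}$ is false for $G^k_{\eta}$ as constructed whenever $\eta^{k/2}<\frac{1}{48}$, and hence that the paper's own argument is unsound at exactly the point where it replaces the ratio of expectations by the expectation of ratios. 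The intended phenomenon can be restored only by changing the game, e.g.\ making the transitions from $a$ into the chains $s_y,\ov{s}_y$ have probability much smaller than $p$ (so that time is dominated by the $c$/$\ov{c}$ gadgets, and your renewal-reward calculation then does give a constant gap); but that is a different game from the one in the statement, so as written neither your proposal nor the paper's proof establishes the lemma.
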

\begin{proof}
By Lemma~\ref{lemm:low-bound1} we only need to consider $\sigma_1^*$ as defined
in Lemma~\ref{lemm:low-bound1}.
Now we calculate the expected mean-payoff value for a run from $a$ to $a$ 
given $\sigma_1^*$ and a positional best-response strategy $\sigma_2$ for Player~2, 
(which is then the expected mean-payoff value of the strategies in $G^k_{\eta}$) as follows:
\begin{enumerate}
\item With probability~$\frac{1}{2}$ in the first step, the run goes to some state which is neither $c$ nor $\ov{c}$. 
Since the probability is equally large to go to some state $s$ or to the corresponding skew-symmetric state $\ov{s}$ and no state $s$ can be reached such that $|\Gamma_1(s)|$ or $|\Gamma_2(s)|$ is more than 1, 
such runs has mean-payoff $\frac{1}{2}$. 
\item Otherwise with probability~$\frac{1}{2}$ in the first step we get reward $\frac{1}{2}$ and go to either $c$ or $\ov{c}$ with equal probability 
(that is: the probability to go to $c$ or $\ov{c}$ is $\frac{1}{4}$ each). 
As shown in Lemma~\ref{lemm:low-bound1}, 
(i)~the length of the run from $\ov{c}$ to $a$ is~2; and with probability $1-p$ both rewards are~1, 
otherwise the first reward is~$1$ and the second reward is $0$;
(ii)~the expected length of the run from $c$ to $a$ is $p\cdot (L_{s_k}+1)+(1-p)\cdot 2$ 
and it gets reward~1 only once and only with probability $1-p$ (where $L_{s_k}$ is as 
defined in Lemma~\ref{lemm:low-bound1}).
\end{enumerate}
From the above case analysis we conclude that the mean-payoff of the run from $a$ to $a$ is 
\begin{align*}
&\frac{1}{2}\cdot \frac{1}{2}+\frac{1}{4}\cdot \left(\frac{\frac{1}{2}+1+(1-p)}{3} +\frac{\frac{1}{2}+1-p}{1+p\cdot (L_{s_k}+1)+(1-p)\cdot 2}\right) \\[2ex]
&=\frac{1}{4}+\frac{\frac{1}{2}+1+(1-p)}{12} +\frac{\frac{1}{2}+1-p}{4\cdot (1+p\cdot (L_{s_k}+1)+(1-p)\cdot 2)} \\[2ex]
&<\frac{1}{4}+\frac{\frac{1}{2}+2}{12} +\frac{2}{4\cdot p\cdot L_{s_k}} 
\ \ < \ \ \frac{1}{4}+\frac{5}{24} +\frac{1}{2\cdot 2\cdot \eta^{k/2}\cdot \eta^{-k}} 
\\[2ex] &
=\frac{1}{4}+\frac{5}{24} +\frac{1}{4\cdot \eta^{-k/2}} 
\qquad 
< \ \ \frac{1}{4}+\frac{5}{24} +\frac{1}{48} 
\ \ = \ \ \frac{23}{48} 
\end{align*}
In the first inequality we use that $1+p\cdot (L_{s_k}+1)+(1-p)\cdot 2>p\cdot L_{s_k}$ and that $p>0$. In the second inequality we use that $p=2\cdot \eta^{k/2}$ and that $\eta^{-k}<L_{s_k}$. In the third we use that $\eta^{-k/2}>12$, which comes from $k\geq 2$ and $\eta<\frac{1}{4\cdot k+4}\leq \frac{1}{12}$. Therefore, we see that there is no $\frac{1}{48}$-optimal strategy with patience at most $\frac{\eta^{-k/2}}{2}$ in the game $G^k_{\eta}$.
\end{proof}

\begin{figure}
\begin{tikzpicture}[node distance=3cm]
\tikzstyle{every state}=[fill=white,draw=black,text=black,font=\small , inner sep=-0.05cm]

\matrix (v1) [label=below:$s_k$,minimum height=1.5em,minimum width=1.5em,matrix of math nodes,nodes in empty cells, left delimiter={.},right delimiter={.}]
{
\\
};
\draw[black] (v1-1-1.north west) -- (v1-1-1.north east);
\draw[black] (v1-1-1.south west) -- (v1-1-1.south east);
\draw[black] (v1-1-1.north west) -- (v1-1-1.south west);
\draw[black] (v1-1-1.north east) -- (v1-1-1.south east);

\matrix (v2) [label=right:$s_{k-1}$,right of= v1,minimum height=1.5em,minimum width=1.5em,matrix of math nodes,nodes in empty cells, left delimiter={.},right delimiter={.}]
{
\\
};
\draw[black] (v2-1-1.north west) -- (v2-1-1.north east);
\draw[black] (v2-1-1.south west) -- (v2-1-1.south east);
\draw[black] (v2-1-1.north west) -- (v2-1-1.south west);
\draw[black] (v2-1-1.north east) -- (v2-1-1.south east);

    \node[state,draw=white] (dots1) [right of=v2] {$\dots$};
    
\matrix (v3) [label=below:$s_1$,right of=dots1,node distance=1.5cm, minimum height=1.5em,minimum width=1.5em,matrix of math nodes,nodes in empty cells, left delimiter={.},right delimiter={.}]
{
\\
};
\draw[black] (v3-1-1.north west) -- (v3-1-1.north east);
\draw[black] (v3-1-1.south west) -- (v3-1-1.south east);
\draw[black] (v3-1-1.north west) -- (v3-1-1.south west);
\draw[black] (v3-1-1.north east) -- (v3-1-1.south east);

    \node[state,draw=white] (whitespace) [right of=v3] {};

\matrix (t) [label=below:$a$,below of =whitespace,node distance=1cm,minimum height=1.5em,minimum width=1.5em,matrix of math nodes,nodes in empty cells, left delimiter={.},right delimiter={.}]
{
\\
};
\draw[black] (t-1-1.north west) -- (t-1-1.north east);
\draw[black] (t-1-1.south west) -- (t-1-1.south east);
\draw[black] (t-1-1.north west) -- (t-1-1.south west);
\draw[black] (t-1-1.north east) -- (t-1-1.south east);

\matrix (vb1) [label=below:$\ov{s}_k$,below of=v1,minimum height=1.5em,minimum width=1.5em,matrix of math nodes,nodes in empty cells, left delimiter={.},right delimiter={.}]
{
\\
};
\draw[black] (vb1-1-1.north west) -- (vb1-1-1.north east);
\draw[black] (vb1-1-1.south west) -- (vb1-1-1.south east);
\draw[black] (vb1-1-1.north west) -- (vb1-1-1.south west);
\draw[black] (vb1-1-1.north east) -- (vb1-1-1.south east);

\matrix (vb2) [label=right:$\ov{s}_{k-1}$,below of= v2,minimum height=1.5em,minimum width=1.5em,matrix of math nodes,nodes in empty cells, left delimiter={.},right delimiter={.}]
{
\\
};
\draw[black] (vb2-1-1.north west) -- (vb2-1-1.north east);
\draw[black] (vb2-1-1.south west) -- (vb2-1-1.south east);
\draw[black] (vb2-1-1.north west) -- (vb2-1-1.south west);
\draw[black] (vb2-1-1.north east) -- (vb2-1-1.south east);

    \node[state,draw=white] (dots2) [below of=dots1] {$\dots$};
    
\matrix (vb3) [label=below:$\ov{s}_1$,below of=v3, minimum height=1.5em,minimum width=1.5em,matrix of math nodes,nodes in empty cells, left delimiter={.},right delimiter={.}]
{
\\
};
\draw[black] (vb3-1-1.north west) -- (vb3-1-1.north east);
\draw[black] (vb3-1-1.south west) -- (vb3-1-1.south east);
\draw[black] (vb3-1-1.north west) -- (vb3-1-1.south west);
\draw[black] (vb3-1-1.north east) -- (vb3-1-1.south east);

\matrix (s) [label=above:$b$,right of=v3, minimum height=1.5em,minimum width=1.5em,matrix of math nodes,nodes in empty cells, left delimiter={.},right delimiter={.}]
{
\\
};
\draw[black] (s-1-1.north west) -- (s-1-1.north east);
\draw[black] (s-1-1.south west) -- (s-1-1.south east);
\draw[black] (s-1-1.north west) -- (s-1-1.south west);
\draw[black] (s-1-1.north east) -- (s-1-1.south east);

\matrix (sb) [label=above:$\ov{b}$,below of=s, minimum height=1.5em,minimum width=1.5em,matrix of math nodes,nodes in empty cells, left delimiter={.},right delimiter={.}]
{
\\
};
\draw[black] (sb-1-1.north west) -- (sb-1-1.north east);
\draw[black] (sb-1-1.south west) -- (sb-1-1.south east);
\draw[black] (sb-1-1.north west) -- (sb-1-1.south west);
\draw[black] (sb-1-1.north east) -- (sb-1-1.south east);

\matrix (m) [label=right:$c$,right of=s,minimum height=1.5em,minimum width=1.5em,matrix of math nodes,nodes in empty cells, left delimiter={.},right delimiter={.}]
{
&\\
&\\
};
\draw[black] (m-1-1.north west) -- (m-1-2.north east);
\draw[black] (m-1-1.south west) -- (m-1-2.south east);
\draw[black] (m-2-1.south west) -- (m-2-2.south east);
\draw[black] (m-1-1.north west) -- (m-2-1.south west);
\draw[black] (m-1-2.north west) -- (m-2-2.south west);
\draw[black] (m-1-2.north east) -- (m-2-2.south east);

\matrix (mb) [label=right:$\ov{c}$,right of=sb,minimum height=1.5em,minimum width=1.5em,matrix of math nodes,nodes in empty cells, left delimiter={.},right delimiter={.}]
{
&\\
&\\
};
\draw[black] (mb-1-1.north west) -- (mb-1-2.north east);
\draw[black] (mb-1-1.south west) -- (mb-1-2.south east);
\draw[black] (mb-2-1.south west) -- (mb-2-2.south east);
\draw[black] (mb-1-1.north west) -- (mb-2-1.south west);
\draw[black] (mb-1-2.north west) -- (mb-2-2.south west);
\draw[black] (mb-1-2.north east) -- (mb-2-2.south east);

 \draw[->](m-1-1.center) .. controls +(180:1) ..  (sb-1-1.east);
 \draw[->](m-2-2.center) .. controls +(-90:1) ..  (sb-1-1.south east);
 
  \draw[->](m-2-1.center) .. controls +(180:1) ..  (s-1-1.east);
    \draw[->](m-1-2.center) .. controls +(120:2) ..  (v1-1-1.north east);
    
 \draw[->,dashed](mb-2-1.center) .. controls +(180:1) ..  (s-1-1.south east);
 \draw[->,dashed](mb-1-2.center) .. controls +(90:1) ..  (s-1-1.east);
 
  \draw[->,dashed](mb-1-1.center) .. controls +(180:1) ..  (sb-1-1.east);
    \draw[->,dashed](mb-2-2.center) .. controls +(-120:2) ..  (vb1-1-1.south east);

 \draw[->](v1-1-1.center) .. controls +(0:1) .. node[near end,below] (x) {$\eta$}  (v2-1-1.west);
\draw[->](v1-1-1.center)  .. controls +(45:1) and +(90:1).. node[midway,above] (x) {$1-\eta$} (v1-1-1.north);
 \draw[->](v2-1-1.center) .. controls +(-90:1).. node[ near end,below] (x) {$1-\eta$}  (v1-1-1.south east);
  \draw[->](v2-1-1.center) .. controls +(-90:1).. node[ near end,below] (x) {$\eta$}  (dots1);
  
   \draw[->](v3-1-1.center) .. controls +(90:1).. node[ near end,below] (x) {$1-\eta$}  (v1-1-1.north east);
  \draw[->](v3-1-1.center) .. controls +(90:1).. node[ near end,above] (x) {$\eta$}  (t);
  
   \draw[->,dashed](vb1-1-1.center) .. controls +(0:1) .. node[near end,below] (x) {$\eta$}  (vb2-1-1.west);
\draw[->,dashed](vb1-1-1.center)  .. controls +(45:1) and +(90:1).. node[midway,above] (x) {$1-\eta$} (vb1-1-1.north);
 \draw[->,dashed](vb2-1-1.center) .. controls +(-90:1).. node[ near end,below] (x) {$1-\eta$}  (vb1-1-1.south east);
  \draw[->,dashed](vb2-1-1.center) .. controls +(-90:1).. node[ near end,below] (x) {$\eta$}  (dots2);
  
   \draw[->,dashed](vb3-1-1.center) .. controls +(90:1).. node[ near end,above] (x) {$1-\eta$}  (vb1-1-1.north east);
  \draw[->,dashed](vb3-1-1.center) .. controls +(90:1).. node[ near end,below right] (x) {$\eta$}  (t);
    \draw[->](s-1-1.center) .. controls +(-90:0.1)..   (t.north);
         \draw[->,dashed](sb-1-1.center) .. controls +(45:0.95)..   (t.south east);
\end{tikzpicture}
\caption{For a given $k\in\N$ and $0<\eta<\frac{1}{4\cdot k+4}$, the skew-symmetric ergodic game $G^k_{\eta}$, except that the transitions from $a$ are not drawn. There is a transition from $a$ to each other state. The probability to go to $c$ from $a$ and the probability to go to $\ov{c}$ from $a$ are both $\frac{1}{4}$. For each other state $s'$ (other than $c$, $\ov{c}$ and $a$), the probability to go to $s'$ from $a$ is $\frac{1}{4\cdot k+4}$. The transition from $a$ has reward $\frac{1}{2}$. Dashed edges have reward $1$ and non-dashed edges have reward $0$. 
Actions are annotated with probabilities if the successor is not deterministic.\label{fig:ergodic skew-symmetric}}
\end{figure}

\begin{theorem}[Strategy complexity]
The following assertions hold:
\begin{enumerate}
\item \emph{(Upper bound).} 
For almost-sure ergodic CMPGs, for all $\epsilon>0$, 
there exists an $\epsilon$-optimal strategy of patience at most $\lceil{4\cdot \epsilon^{-1}\cdot m \cdot n^2\cdot (\delta_{\min})^{-\rand}\rceil}$.
\item \emph{(Lower bound).}
There exists a family of ergodic CMPGs $G_n^{\delta_{\min}}$, for each odd $n\geq 9$ and $0<\delta_{\min}<\frac{1}{2\cdot n}$ and $n=\rand +5$, 
such that any $\frac{1}{48}$-optimal strategy in $G_n^{\delta_{\min}}$ has patience at least $\frac{1}{2} \cdot (\delta_{\min})^{-\rand/4}$. 
\end{enumerate}
\end{theorem}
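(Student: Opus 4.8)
The theorem consolidates the machinery built above, so the plan is assembly rather than new argument: the upper bound is read off from Corollary~\ref{cor:q-rounded strategies}, and the lower bound from the family $G^k_\eta$ together with Lemma~\ref{lemm:low-bound2}, after a reparametrization.

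\textbf{Upper bound.} I would take $q=\lceil 4\epsilon^{-1} m n^2 (\delta_{\min})^{-\rand}\rceil$ to be exactly the claimed patience. Since $q$ is an integer no smaller than $4\epsilon^{-1} m n^2 (\delta_{\min})^{-\rand}$, Corollary~\ref{cor:q-rounded strategies} applies with $q'=q$ and yields an $\epsilon$-optimal $q$-rounded strategy $\sigma_1$. A $q$-rounded strategy plays every action with a probability of the form $p/q$, $p\in\N$, so its least positive probability is at least $1/q$ and its patience is at most $q$. This establishes the first assertion; the only points to check---that the ceiling keeps $q$ integral and above the threshold of the corollary---are immediate.

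\textbf{Lower bound.} I would instantiate the constructed family. Given odd $n\ge 9$ and $0<\delta_{\min}<\frac{1}{2n}$, set $\eta=\delta_{\min}$ and $k=\frac{n-5}{2}$, so that $k\ge 2$ is an integer and $G^k_\eta$ has exactly $2k+5=n$ states. First I would discharge the side condition $\eta<\frac{1}{4k+4}$ needed by the construction and by Lemmas~\ref{lemm:low-bound1}--\ref{lemm:low-bound2}: here $4k+4=2n-6<2n$, so $\frac{1}{2n}<\frac{1}{4k+4}$, and hence $\eta<\frac{1}{2n}<\frac{1}{4k+4}$. By the preceding lemmas $G^k_\eta$ is skew-symmetric and ergodic, so it is a bona fide ergodic CMPG, and Lemma~\ref{lemm:low-bound2} gives that every $\frac{1}{48}$-optimal (stationary) strategy has patience at least $\frac{1}{2}\eta^{-k/2}$. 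It then remains to rewrite the exponent in terms of $\rand$: counting the states of $G^k_\eta$ that carry a genuinely probabilistic transition yields $\rand=2k$, equivalently $n=\rand+5$, whence $\frac{1}{2}\eta^{-k/2}=\frac{1}{2}(\delta_{\min})^{-\rand/4}$, as claimed.

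\textbf{Main obstacle.} There is no deep step remaining, since all the probabilistic content lives in Lemma~\ref{lemm:approx optimal strategy} (via the coupling argument) for the upper bound and in Lemmas~\ref{lemm:low-bound1}--\ref{lemm:low-bound2} for the lower bound. The one place demanding care is the parameter bookkeeping for the lower bound---verifying the identity $\rand=2k$ from the description of $G^k_\eta$ so that both $n=\rand+5$ and the exponent $-\rand/4=-k/2$ come out exactly, and confirming that $\eta<\frac{1}{4k+4}$ holds throughout the stated range. I would also remark that patience is a measure defined only for stationary strategies, and that optimal (and near-optimal) stationary strategies exist here by Hoffman--Karp, so that both bounds speak about the same, meaningful class of strategies.
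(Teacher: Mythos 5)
Your proposal takes exactly the paper's route: the paper's own proof is literally the two citations you give (Corollary~\ref{cor:q-rounded strategies} together with the observation that a $q$-rounded strategy has patience at most $q$, and Lemma~\ref{lemm:low-bound2}), and your upper-bound paragraph is a correct, complete account of the first assertion, including the check that $q'=\lceil 4\cdot\epsilon^{-1}\cdot m\cdot n^2\cdot(\delta_{\min})^{-\rand}\rceil$ is an integer at least the corollary's threshold.

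However, the lower-bound bookkeeping that you yourself single out as ``the one place demanding care'' does not survive an actual count. In $G^k_{\eta}$ the probabilistic states are not only $s_1,\dots,s_k,\ov{s}_1,\dots,\ov{s}_k$: the state $a$ is also a random state, since its successor distribution is supported on all $2k+4$ other states (probability $\frac{1}{4}$ each to $c$ and $\ov{c}$, and $\frac{1}{4k+4}$ to every remaining state), and the paper's definition counts a state as random as soon as some action pair gives a transition distribution with support of size at least $2$. Hence $\rand=2k+1$, not $2k$; consequently $n=2k+5=\rand+4$ rather than $\rand+5$, and Lemma~\ref{lemm:low-bound2} only yields patience at least $\frac{1}{2}\cdot(\delta_{\min})^{-(\rand-1)/4}$, which for $\delta_{\min}<1$ is strictly weaker than the claimed $\frac{1}{2}\cdot(\delta_{\min})^{-\rand/4}$. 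To be fair, this off-by-one inconsistency sits in the paper's own theorem statement, and the paper's two-line proof glosses over it silently; but your write-up asserts that you verified $\rand=2k$, which is precisely the step that fails. The honest repair is either to restate the result with $n=\rand+4$ and exponent $-(\rand-1)/4$, or to modify the family so that the mixing done at $a$ is not charged to $\rand$; the qualitative conclusion --- patience exponential in $\rand$ --- is unaffected either way. Everything else in your lower-bound paragraph (the choice $k=\frac{n-5}{2}\geq 2$, $\eta=\delta_{\min}$, and the verification $\eta<\frac{1}{2n}<\frac{1}{2n-6}=\frac{1}{4k+4}$) is correct.
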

\begin{proof}
The upper bound comes from Corollary~\ref{cor:q-rounded strategies}, since all $q$-rounded strategies have patience at most $q$; and the lower
bound follows from Lemma~\ref{lemm:low-bound2}.
\end{proof}

\subsection{Hardness of approximation}\label{subsec:ssg hard}
We present a polynomial reduction from the value problem for SSGs to the 
problem of approximation of values for turn-based stochastic ergodic 
mean-payoff games (TEMPGs).

\smallskip\noindent{\bf The reduction.}
Consider an SSG $G$ with $n$ non-terminal states, and two terminal states ($\top$ and $\bot$).
Given a state $s$ in $G$ we construct a TEMPG $G'=\Red(G,s)$ that has the same states as $G$ 
(including the terminal states) and one additional state $s'$. 
For every transition in $G$, there is a corresponding transition in $G'$, with reward~0. 
The 1~terminal $\top$ (resp. 0~terminal $\bot$) instead of the self-loop, has two outgoing 
transitions that go to $\top$ (resp. $\bot$) with probability $1-\frac{1}{2^{9n}}$
and to $s'$ with probability $\frac{1}{2^{9n}}$. 
The reward of the transitions are~$1$ (resp.~$0$) for $\top$ (resp. $\bot$). 
The additional state $s'$ goes to $s$ with probability $1-\frac{1}{2^{7n}}$ 
and to each other state (including the terminals, but not $s$ and $s'$) 
with probability $\frac{1}{(n+1)\cdot 2^{7n}}$. 
The rewards of the transitions from $s'$ are $0$. 
We first observe that the game $G'$ is ergodic: 
since the SSG $G$ is stopping, from all states and for all strategies in $G$, 
the terminal states are reached with probability~1; and hence in $G'$, 
from all states and for all strategies, the state $s'$ is reached with probability~1;
and from $s'$ there exists a positive transition probability to every state other than
$s'$.
It follows that under all strategy profiles, from all starting states, the state $s'$ is 
visited infinitely often almost-surely, 
and hence every other state is visited infinitely often almost-surely.
Hence $G'$ is ergodic. 
We now show that the value $v$ of $G'$ is ``close'' to the value $v_s$ of $s$ in $G$. 
We then argue that we can obtain $v_s$ from $v$ in polynomial time by rounding.

\begin{lemma}\label{lemm:reduction}
Let $G$ be an SSG, and  consider a state $s$ in $G$ with value $v_s$. 
The value $v$ of $\Red(G,s)$ is in the interval $[v_s-2^{-7n+1};v_s+2^{-7n+1}]$.
\end{lemma}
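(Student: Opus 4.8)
\emph{Proof plan.} The plan is to exploit the regeneration structure of $G'=\Red(G,s)$ at the fresh state $s'$ and to relate the mean-payoff of $G'$ to the probability of reaching $\top$ before $\bot$ in $G$. First I would record the two facts that make the reduction work: since $G'$ is ergodic it has a well-defined value $v$ attained by stationary strategies, and for any stationary strategy profile the induced Markov chain is irreducible, so $\LimInfAvg=\LimSupAvg$ equals almost surely the stationary frequency $\mu(\top)$ of the state $\top$ (the unique state all of whose outgoing transitions carry reward~$1$; every other transition has reward~$0$). Thus the whole argument reduces to estimating $\mu(\top)$ for stationary profiles.

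Fix such a profile and view the excursions between consecutive visits to $s'$ as renewal cycles. A cycle consists of one step $s'\to x$; then a reward-$0$ play through the $G$-part until a terminal $T\in\set{\top,\bot}$ is hit; then a geometric dwell at $T$ of mean $D=2^{9n}$ steps (each step leaves to $s'$ with probability $2^{-9n}$), collecting reward~$1$ per step iff $T=\top$; and finally the return step to $s'$. Writing $q$ for the probability that a cycle reaches $\top$ and $H$ for the expected number of $G$-steps in a cycle, the renewal--reward identity gives
\[
\mu(\top)=\frac{q\cdot D}{1+H+D}\enspace .
\]
By Lemma~\ref{lemm:prob to reach} applied to the stopping game $G$ (where $\delta_{\min}=\frac{1}{2}$ and there are at most $n$ random states) the terminals are hit within $n$ steps with probability at least $2^{-n}$, so $H\le n\cdot 2^{n}$ and hence $(H+1)/D\le 2^{-7n}$. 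This yields $0\le q-\mu(\top)\le 2^{-7n}$.

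Next I would compare $q$ with the value $v_s$ in $G$. Because $s'$ moves to $s$ with probability $1-2^{-7n}$ and otherwise to one of the remaining $n+1$ states, we can write $q=(1-2^{-7n})\,r_s+2^{-7n}\,z$ with $z\in[0,1]$, where $r_s$ is the probability of reaching $\top$ before $\bot$ from $s$ under the restriction of the profile to the $G$-part; hence $|q-r_s|\le 2^{-7n}$, and combining with the previous step, $|\mu(\top)-r_s|\le 2^{-7n+1}$ uniformly over all stationary profiles. Finally I would sandwich $v$. Letting Player~1 fix an optimal positional strategy of the SSG $G$ turns $G'$ into a Player-2 mean-payoff MDP, whose infimum is attained by a stationary response; for that stationary profile $r_s\ge v_s$ (optimality in $G$), so $\mu(\top)\ge v_s-2^{-7n+1}$, giving $v\ge v_s-2^{-7n+1}$. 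Symmetrically, an optimal positional strategy of Player~2 gives $v\le v_s+2^{-7n+1}$, and the two bounds prove the claim.

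The main obstacle is the renewal--reward computation together with the uniform bound on the expected hitting time $H$: one must check that the dwell mean $D=2^{9n}$ genuinely dominates $H\le n\cdot 2^{n}$ so that the $G$-part contributes negligibly, and that the direct-to-terminal transitions out of $s'$ (and the possibility that a cycle never touches a player-controlled state) are absorbed into the $2^{-7n}$ slack. It is precisely the choice of the exponents $9n$ and $7n$ that makes the total error collapse to the claimed $2^{-7n+1}$.
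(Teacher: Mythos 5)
Your proof is correct and follows essentially the same route as the paper's: fix an optimal positional strategy of the SSG for one player, take a stationary best response in $\Red(G,s)$, and bound the mean-payoff by a cycle analysis anchored at $s'$ with expected hitting time at most $n\cdot 2^n$ and expected dwell $2^{9n}$ — your renewal--reward identity $\mu(\top)=\frac{q\cdot D}{1+H+D}$ is just a cleaner packaging of the paper's ratio $\frac{v_s\cdot 2^{9n}\cdot(1-2^{-7n})}{n\cdot 2^n+2^{9n}+1}$. The only cosmetic difference is that you first prove the uniform estimate $|\mu(\top)-r_s|\le 2^{-7n+1}$ over all stationary profiles and then sandwich $v$ by specializing each player in turn, whereas the paper specializes immediately and carries out the arithmetic for one side, invoking symmetry for the other.
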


\begin{proof}
We show that the value of $G'$ is at least $v_s-2^{-7n+1}$; and the other part 
of the proof is symmetric.
Notice that since $G$ is stopping, we reach a terminal in $n$ steps with 
probability at least $\frac{1}{2^{n}}$, from every starting state. 
The expected number of steps required to reach the terminal states is at most
$n\cdot 2^n$ (one can also use a more refined argument similar to~\cite{ESA} 
to show that the expected number of steps is at most $2^{n+1}$). 
By construction this is also the case in $G'$. 
From a terminal state in $G'$ the expected number of steps required to reach $s'$ is $2^{9n}$.
Consider an optimal strategy $\sigma_1$ in $G$ for Player~1. Since $G$ and $G'$ have the same set of states 
where Player~1 has a choice (and the same choices in those states), we can also use $\sigma_1$ in $G'$. 
Now consider the best response strategy $\sigma_2$ against $\sigma_1$ for Player~2 in $G'$. 
We now estimate the value of $G'$. 
The best $\sigma_2$ can ensure for Player~2 is the following: 
\begin{itemize}
\item By the argument above, for the plays from any starting state in $G$, 
the expected number of steps required to reach a terminal state is (at most) 
$n \cdot 2^n$.
\item For a state $t$ different from $s$, the plays from $t$ reach the 0~terminal with probability~1.
\item The plays from $s$ reach the 0~terminal with probability~$1-v_s$ and the 1~terminal with probability~$v_s$.
\end{itemize}  
Notice that for plays starting from any state $t\neq s'$, the expected number of steps to reach $s'$ is at most 
$n\cdot 2^n+2^{9n}$.
Hence the expected number of steps required to reach $s'$ again from itself is at most $n\cdot 2^n+2^{9n}+1$.
We now argue that the mean-payoff value is at least $v_s-2^{-7n+1}$.
With probability $1-\frac{1}{2^{7n}}$, the successor of $s'$ is $s$.
From $s$ the play reaches $s'$ after being in the 1~terminal for $v_s\cdot 2^{9n}$ steps in expectation. 
Each reward obtained in the 1~terminal is~1. All remaining rewards are~0. 
Hence, the mean-payoff value is at least 
\begin{align*}
\frac{v_s\cdot 2^{9n}\cdot (1-\frac{1}{2^{7n}})}{n\cdot 2^n+2^{9n}+1} &=\frac{v_s\cdot 2^{9n}}{n\cdot 2^n+2^{9n}+1}-\frac{v_s\cdot 2^{9n} \cdot \frac{1}{2^{7n}}}{n\cdot 2^n+2^{9n}+1}\\[2ex]
&\geq \frac{v_s\cdot 2^{9n}}{(1+2^{-7n})2^{9n}}-\frac{v_s\cdot 2^{9n}\cdot 2^{-7n}}{2^{9n}}\\[2ex]
&> (1-2^{-7n})\cdot v_s-v_s\cdot 2^{-7n}\\[2ex]
&= v_s-v_s\cdot 2^{-7n+1}\\[2ex]
&\geq v_s-2^{-7n+1} \enspace .
\end{align*}
The first inequality comes from $n\cdot 2^n=2^{n+\log n}<2^{2n}$;
the second inequality comes from $1-2^{-14n}=(1-2^{-7n})(1+2^{-7n})<1\Rightarrow 1-2^{-7n}<\frac{1}{1+2^{-7n}}$;
and the last inequality comes from $v_s\leq 1$.

Using a similar argument for Player~2, we obtain that the mean-payoff value is at most $v_s+2^{-7n+1}$, 
by using that the expected path-length from a state $t$ in $G$ to a terminal is at least 0. 
Therefore $v$, the value of $G'$, is in the interval $[v_s-2^{-7n+1};v_s+2^{-7n+1}]$.
\end{proof}

Observe that if the value $v$ of $G'$ can be approximated within $2^{-6n}$, 
then Lemma~\ref{lemm:reduction} implies that the approximation $a$ is 
in $[v_s-2^{-7n+1}-2^{-6n};v_s+2^{-7n+1}+2^{-6n}]$; which shows that $a$ is in 
$[v_s-2^{-5n};v_s+2^{-5n}]$. 
Hence we see that $a-2^{-5n}$ is in $[v_s-2^{-4n};v_s]$.
As observed by Ibsen-Jensen and Miltersen~\cite{ESA}, 
if the value of a state of an SSG can be approximated from below 
within $2^{-4n}$, then one can use the Kwek-Mehlhorn algorithm~\cite{kwek} 
to round the approximated value to obtain the correct value, 
in polynomial time. We therefore get the following lemma.

\begin{lemma}\label{lemm:ssg hard}
The problem of finding the value of a state in an SSG is polynomial time Turing reducible to the problem of 
approximating the value of a TEMPG (turn-based stochastic ergodic mean-payoff game) within $2^{-6n}$.
\end{lemma}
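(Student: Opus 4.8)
The plan is to combine the reduction $\Red$ with Lemma~\ref{lemm:reduction}, the elementary arithmetic recorded immediately after it, and the exact-rounding result of Ibsen-Jensen and Miltersen~\cite{ESA}, so as to exhibit a polynomial-time Turing reduction that makes a single call to the approximation oracle. Let $G$ be an SSG with $n$ non-terminal states and let $s$ be the query state. First I would construct $G'=\Red(G,s)$; this takes polynomial time, since $G'$ has $n+3$ states and all its transition probabilities ($1-2^{-9n}$, $2^{-9n}$, $1-2^{-7n}$, $(n+1)^{-1}2^{-7n}$, and so on) are rationals representable with $O(n)$ bits. Moreover $G'$ is turn-based and, as argued in the construction, ergodic, so it is a legitimate TEMPG and a valid input to the oracle.

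Next I would call the oracle to obtain a number $a$ that approximates the value $v$ of $G'$ within $2^{-6n}$, that is $|a-v|\le 2^{-6n}$. By Lemma~\ref{lemm:reduction} we have $|v-v_s|\le 2^{-7n+1}$, and the triangle inequality together with the estimate $2^{-6n}+2^{-7n+1}\le 2^{-5n}$ (valid for every $n\ge 1$) yields $a\in[v_s-2^{-5n};v_s+2^{-5n}]$. Consequently $a-2^{-5n}$ lies in $[v_s-2^{-5n+1};v_s]\subseteq[v_s-2^{-4n};v_s]$, so $a-2^{-5n}$ is an approximation of $v_s$ from below with additive error at most $2^{-4n}$. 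This is exactly the conclusion already recorded in the observation following Lemma~\ref{lemm:reduction}.

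Finally I would invoke the rounding step: as observed in~\cite{ESA}, because the value of any SSG state is a rational whose denominator is bounded by $2^{O(n)}$, an approximation from below accurate to $2^{-4n}$ determines $v_s$ uniquely, and the Kwek--Mehlhorn continued-fraction algorithm~\cite{kwek} recovers $v_s$ exactly in time polynomial in $n$ and the bit-size of the input. Stringing these steps together produces a polynomial-time Turing reduction with a single oracle query, which is the claim.

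The step I expect to need the most care is making the three magnitude estimates fit together simultaneously: the reduction error $2^{-7n+1}$ from Lemma~\ref{lemm:reduction}, the oracle error $2^{-6n}$, and the correction $-2^{-5n}$ must compose so that the resulting below-approximation falls within the $2^{-4n}$ window that the rounding requires for the worst-case denominator of an SSG value. All three inequalities hold for $n\ge 1$, but one must confirm that the denominator bound for SSG values (the source of the $2^{-4n}$ threshold) is indeed the one guaranteed by~\cite{ESA}; everything else is bookkeeping that the displayed observation has already carried out.
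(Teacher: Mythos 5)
Your proposal is correct and takes essentially the same route as the paper: construct $\Red(G,s)$, make a single oracle call to approximate the value of the TEMPG within $2^{-6n}$, shift the answer down by $2^{-5n}$ to obtain a below-approximation of $v_s$ within $2^{-4n}$, and round it to the exact value with the Kwek--Mehlhorn algorithm as observed in~\cite{ESA}. The arithmetic you flag as delicate ($2^{-6n}+2^{-7n+1}\le 2^{-5n}$ and $2\cdot 2^{-5n}\le 2^{-4n}$ for $n\ge 1$) is precisely the bookkeeping the paper carries out in the observation immediately preceding the lemma.
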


\subsection{Approximation complexity}\label{subsec:approx}
In this section we establish the approximation complexity 
for almost-sure ergodic CMPGs.
We first recall the definition of the decision problem for approximation.

\smallskip\noindent{\bf Approximation decision problem.}
Given an almost-sure ergodic CMPG $G$ (with rational transition probabilities given in 
binary), a state $s$, an $\epsilon>0$ (in binary), and a rational number $\lambda$ (in binary), 
the promise problem $\PromValExtErg$ (i)~accepts if the value of $s$ is at least $\lambda$, 
(ii)~rejects if the value of $s$ is at most $\lambda-\epsilon$, and 
(iii)~if the value is in the interval $(\lambda-\epsilon;\lambda)$, then it may both 
accept or reject. 


\begin{theorem}[Approximation complexity]
For almost-sure ergodic CMPGs, the following assertions hold:
\begin{enumerate}
\item \emph{(Upper bound).} The problem $\PromValExtErg$ is in $\FNP$. 
\item \emph{(Hardness).} The problem of finding the value of a state in an SSG is polynomial time Turing reducible to the problem $\PromValExtErg$,
even for the special case of turn-based stochastic ergodic mean-payoff games (TEMPGs).
\end{enumerate}
\end{theorem}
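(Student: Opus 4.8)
The plan is to handle the two parts by quite different means: the hardness claim is essentially a restatement of the reduction already built, whereas the $\FNP$ upper bound requires pairing the strategy-complexity result with a polynomial-time best-response computation. For the hardness part I would invoke Lemma~\ref{lemm:ssg hard}, which reduces the SSG value problem to approximating the value of a TEMPG within $2^{-6n}$. It then remains only to simulate one call to such an approximation oracle by polynomially many calls to $\PromValExtErg$. Fixing $\epsilon = 2^{-8n}$ (still of polynomial bit-length in binary) and binary searching over the rational threshold $\lambda\in[0,1]$, each accepting call certifies $v_s>\lambda-\epsilon$ and each rejecting call certifies $v_s<\lambda$; maintaining an interval known to contain $v_s$ and halving it at each step, after $O(n)$ queries the accumulated slack $O(n\cdot 2^{-8n})$ is negligible and $v_s$ is localized within $2^{-6n}$. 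Since the games produced by $\Red$ are turn-based, this yields the hardness even for TEMPGs.

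For the upper bound the witness I would use is a $q$-rounded stationary strategy $\sigma_1$ for Player~1, where $q=\lceil 8\cdot \epsilon^{-1}\cdot m\cdot n^2\cdot(\delta_{\min})^{-\rand}\rceil$ is the bound of Corollary~\ref{cor:q-rounded strategies} instantiated for tolerance $\tfrac{\epsilon}{2}$. The key observation is that, although $q$ is exponentially large, its binary representation is of polynomial size: $\epsilon^{-1}$ and $(\delta_{\min})^{-1}$ are given in binary and contribute only polynomially many bits, and raising to the power $\rand\le n$ multiplies the bit-length by at most $n$. Hence each of the $nm$ probabilities $\sigma_1(s)(a)=p/q$ with $p\le q$ is encodable in polynomially many bits, so $\sigma_1$ is a polynomial-size witness. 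The verifier then fixes $\sigma_1$, turning $G$ into a Player-2 mean-payoff MDP in which Player~2 minimizes, computes $w=\inf_{\sigma_2\in\bigstra_2}\Exp_s^{\sigma_1,\sigma_2}[\LimInfAvg]$ from the start state $s$, and accepts iff $w\geq \lambda-\tfrac{\epsilon}{2}$.

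Correctness of the verifier splits into completeness and soundness. For completeness, if $v_s\geq \lambda$ then Corollary~\ref{cor:q-rounded strategies} supplies a $\tfrac{\epsilon}{2}$-optimal $q$-rounded strategy $\sigma_1$, which by the definition of $\epsilon$-optimality satisfies $w\geq v_s-\tfrac{\epsilon}{2}\geq \lambda-\tfrac{\epsilon}{2}$, so this witness is accepted. For soundness, if $v_s\leq \lambda-\epsilon$ then for every Player-1 strategy, and in particular every $q$-rounded one, we have $w\leq \underline{v}_s=v_s\leq \lambda-\epsilon<\lambda-\tfrac{\epsilon}{2}$, so no witness is accepted; on the promise gap $(\lambda-\epsilon,\lambda)$ the behaviour is unconstrained, as required. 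Thus a polynomial-size witness checkable in polynomial time exists exactly on accept-instances, placing $\PromValExtErg$ in $\FNP$.

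The step I expect to be the main obstacle, and would carry out most carefully, is the polynomial-time computation of $w$ for a fixed $\sigma_1$. After substitution, the effective reward and transition at $s$ under a Player-2 action $a_2$ are $\sum_{a_1}\sigma_1(s)(a_1)\cost(s,a_1,a_2)$ and $\sum_{a_1}\sigma_1(s)(a_1)\trans(s,a_1,a_2)$; since $\sigma_1$ is $q$-rounded with $q$ of polynomial bit-length and the original data are rationals of polynomial bit-length, these effective quantities are again rationals of polynomial bit-length. The value of the resulting mean-payoff MDP is then computable in polynomial time by the standard linear-programming characterization, for which the minimizing controller has an optimal pure stationary strategy and the $\LimInfAvg$ and $\LimSupAvg$ values coincide; consequently the infimum over all Player-2 strategies equals the LP optimum, namely $w$. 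Because the LP has polynomial-size rational coefficients, its optimum is a rational of polynomial bit-length, so the exact comparison $w\geq \lambda-\tfrac{\epsilon}{2}$ is decidable in polynomial time. Combining the polynomial witness, the polynomial-time verifier, and the completeness/soundness argument establishes the upper bound, while Lemma~\ref{lemm:ssg hard} together with the binary-search simulation establishes the hardness.
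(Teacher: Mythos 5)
Your proposal is correct, but both halves take a genuinely different route from the paper's own proof, and the comparison is instructive. For the $\FNP$ upper bound, the paper's verifier guesses $\frac{\epsilon}{4}$-optimal $q$-rounded strategies for \emph{both} players, solves the two resulting MDPs to get best-response values $\val$ and $\val'$, rejects if $\val'-\val>\frac{\epsilon}{2}$, and otherwise decides by comparing $\lambda$ with $\val'$; the point of the symmetric guess is that the pair $[\val,\val']$ is a \emph{certified} interval of width at most $\frac{\epsilon}{2}$ containing the value, i.e.\ the witness actually yields a two-sided approximation of the value, in the function-problem spirit of $\FNP$. You instead guess only Player~1's $\frac{\epsilon}{2}$-optimal $q$-rounded strategy and accept iff the best-response value $w$ satisfies $w\geq\lambda-\frac{\epsilon}{2}$; your completeness argument (via Corollary~\ref{cor:q-rounded strategies}) and soundness argument (via $w\leq \underline{v}_s=v_s$ for \emph{every} guessed strategy) are both valid, and this one-sided verifier is simpler because it exploits the asymmetry of the promise problem -- it just certifies less. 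For the hardness part, both you and the paper reduce through Lemma~\ref{lemm:ssg hard}; the paper then implements the $2^{-6n}$-approximation by running the Kwek--Mehlhorn algorithm with $\PromValExtErg$ (at tolerance $\epsilon=2^{-16n}$) as comparison oracle, which forces a dedicated argument that at most one queried fraction can lie in the promise gap and hence at most one answer can be corrupted. Your binary search sidesteps that subtlety entirely: every oracle answer yields a valid one-sided certificate (accept gives $v_s>\lambda-\epsilon$, reject gives $v_s<\lambda$) even when the query lands in the gap, and the interval-width recurrence $w_{k+1}\leq \frac{w_k}{2}+\epsilon$ gives width at most $2^{-k}+2\epsilon$, which is below $2^{-6n}$ for, say, $k=8n$ queries at $\epsilon=2^{-8n}$. (Your cruder $O(n\cdot 2^{-8n})$ slack bound also suffices, though with only $7n$ queries the constants are tight for small $n$, so it is cleaner to spend a few extra queries.) Both routes use polynomially many calls, and since $\Red$ produces turn-based games the hardness indeed holds for TEMPGs, so your argument establishes the theorem in full.
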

\begin{proof}
We present the proof for both the items.
\begin{enumerate}
\item 
We first present an $\FNP$ algorithm for $\PromValExtErg$ as follows: 
Guess an $\frac{\epsilon}{4}$-optimal, $q'$-rounded strategy $\sigma_1$ for Player~1, where $q'=\lceil q\rceil$ such that $q$ is as in Corollary~\ref{cor:q-rounded strategies}
(also such a strategy exists by Corollary~\ref{cor:q-rounded strategies}).  
The strategy is then described using at most $O(n\cdot m\cdot \log q')$ many bits. 
Since $\epsilon$ and $\delta_{\min}$ is given in binary, $\log q'$ uses at most polynomial many bits. 
Now compute the best response strategy for Player~2. Since $\sigma_1$ is a stationary strategy (because it is $q'$-rounded), 
when Player~1 restricted to follow $\sigma_1$, the game becomes an MDP for Player~2, and the size of the MDP 
is also polynomial in the size of $G$ and $\log q'$. 
Hence there exists a positional best response strategy $\sigma_2$, which we can find in polynomial time using linear programming~\cite{FV97,Puterman,Karmarkar}. 
When Player~1 follows $\sigma_1$ and Player~2 follows $\sigma_2$ some expected mean-payoff $\val$ is achieved.  
Similarly guess an $\frac{\epsilon}{4}$-optimal, $q$-rounded strategy $\sigma'_2$ for Player~2. Again there exists a 
positional best response strategy $\sigma'_1$ for Player~1 which can again be computed in polynomial time. 
When Player~1 follows $\sigma'_1$ and Player~2 follows 
$\sigma'_2$ some expected mean-payoff $\val'$ is achieved. 
If $\val'-\val>\frac{\epsilon}{2}$, then reject, because then not both $\sigma_1$ and $\sigma'_2$ 
can be $\frac{\epsilon}{4}$ optimal. 
Clearly the value of $G$ must be in $[\val;\val']$.  
Notice that both $\lambda-\epsilon$ and $\lambda$ cannot be in $[\val;\val']$, 
since $\val'-\val\leq \frac{\epsilon}{2}$. Therefore if $\lambda\leq \val'$, then accept, otherwise reject.
This establishes that $\PromValExtErg$ is in $\FNP$.

\item 
We now show that the problem of finding the value of a state in an SSG is polynomial time Turing reducible 
to the problem $\PromValExtErg$ for TEMPGs.  By Lemma~\ref{lemm:ssg hard}, we just need to approximate the value $v$ of a 
TEMPG $G$ within $2^{-6n}$. For any number $0<a<1$ and integer $b$, let $\Proc^b$ be a procedure, 
that takes $\frac{p}{q}$ as an input and returns if $a\geq \frac{p}{q}$, where $0\leq p\leq q\leq b$. 
For any integer $b$, given procedure $\Proc^b$, the Kwek-Mehlhorn algorithm~\cite{kwek}, 
finds integers $0\leq p\leq q\leq b$, such that $a-\frac{p}{q}< \frac{1}{b}$ in $O(\log b)$ time and 
$O(\log b)$ calls to $\Proc^b$. 
We argue how to use the Kwek-Mehlhorn algorithm~\cite{kwek} to find the value of $G$ within $2^{-6n}$ 
using polynomially many calls to $\PromValExtErg$. Let $b$ be $2^{8n}$. Let $\Proc^b$ be  
$\PromValExtErg$ with $\epsilon=2^{-16n}$. Notice that the choice of $\epsilon$ ensures that there can be 
at most one pair $p,q$ such that $\frac{p}{q}\in [v-\epsilon;v]$, where $0\leq p\leq q\leq 2^{8n}$, 
because all such numbers are at least $2^{-16n}$ apart. On such an input $\PromValExtErg$ answers arbitrarily, 
but on all other inputs it accurately answers if $\frac{p}{q}\geq v$. The Kwek-Mehlhorn algorithm  
queries a pair of variables only once, and finds a fraction $\frac{p}{q}$ 
such that $0\leq p\leq q\leq 2^{8n}$. But the four best such fractions must be within 
$2^{-6n}$ of $v$.
\end{enumerate}
The desired result follows.
\end{proof}

\section{Strategy-iteration Algorithm for Almost-sure Ergodic CMPGs}\label{sec:strategy iteration ergodic}
The classic algorithm for solving ergodic CMPGs was given by Hoffman and Karp~\cite{HK}.
We present a variant of the algorithm, and show that for every $\epsilon>0$ it runs in 
exponential time for $\epsilon$ approximation.
Also observe that even for the value problem for SSGs the strategy-iteration algorithms 
require exponential time~\cite{Fr11,Fe10}, and hence our exponential upper bound is optimal 
(given our reduction of the value problem of SSGs to the approximation problem for 
TEMPGs).

\smallskip\noindent{\bf The variant of Hoffman-Karp algorithm.} 
For an almost-sure ergodic CMPG $G$, an $\epsilon>0$, and a state $t$, 
we present an algorithm to compute a $q$-rounded $\epsilon$-optimal strategy in $O(q^{n\cdot m})$ iterations, 
and each iteration requires $O\big(2^{\poly(m)} \cdot \poly(n,\log(\epsilon^{-1}), \log(\delta_{\min}^{-1})) \big)$ 
time, 
where  \[q=\left\lceil 4\cdot \epsilon^{-1}\cdot m \cdot n^2\cdot (\delta_{\min})^{-\rand}\right\rceil\enspace .\]
Note that in all typical cases, $n$ is large and $m$ is constant, and every iteration takes 
polynomial time if $m$ is constant.
The basic informal description of the algorithm is as follows.
In every iteration $i$, the algorithm considers a $q$-rounded strategy $\sigma_1^i$, 
and then improves the strategy locally as follows: first it computes the \emph{potential} $v^{\sigma_1^i}_s$
given $\sigma_1^i$ as in the Hoffman-Karp algorithm, and then for every state $s$, 
the algorithm locally computes the best $q$-rounded distribution at $s$ to improve
the potential.
The intuitive description of the potential is as follows: 
Fix the specific state $t$ as a target state (where the potential must be~0); 
and given a stationary strategy $\sigma$, consider a modified reward function that assigns 
the original reward minus the value ensured by $\sigma$.
Then the potential for every state $s$ other than the specified state $t$ is the expected sum of rewards under 
the modified reward function for the random walk from $s$ to $t$.
The local improvement step is achieved by playing a matrix game with potentials.
Our variant differs from the Hoffman-Karp algorithm that while solving
the matrix game we restrict Player~1 to only $q$-rounded distributions.
The formal description of the algorithm is given in Figure~\ref{alg-main},
and the formal definition of the expected one-step reward $\ExpRew(s,d_1,d_2)$ for 
distributions $d_1$ over $\Gamma_1(s)$ and $d_2$ over $\Gamma_2(s)$ is as follows:
$\ExpRew(s,d_1,d_2)=\sum_{a_1 \in \Gamma_1(s), a_2 \in \Gamma_2(s)} \cost(s,a_1,a_2) \cdot d_1(a_1) \cdot d_2(a_2)$.

\begin{figure}
\begin{center}
\begin{function}[H]
Let $q\leftarrow \left\lceil 4\cdot \epsilon^{-1}\cdot m \cdot n^2\cdot (\delta_{\min})^{-\rand}\right\rceil$\;
Let $\sigma_1^0$ be a $q$-rounded strategy\;
\For{$(i\in \Z_+)$}{
Compute $g^i,(v_s^i)_{s\in S}$ as the unique solution of 
\begin{align*}
\forall s\in S: g^i+v^i_s & = \min_{a_2\in \mov_2(s)}(\ExpRew(s,\sigma_1^{i-1}(s),a_2)+\sum_{s'\in S}\trans(s,\sigma_1^{i-1}(s),a_2)(s') \cdot v_{s'}^i) \\
v^i_t & =0;
\end{align*}
\For{$(s\in S)$}
{
Let $M_s$ be the matrix game defined as follows:
$M_s[a_1,a_2]\leftarrow \cost(s,a_1,a_2) + \sum_{s'\in S}\trans(s,a_1,a_2)(s') \cdot v_{s'}^i$, for all $a_1\in \Gamma_1(s)$ and $a_2\in \Gamma_2(s)$\;
\If{{\em $(\sigma_1^{i-1}(s)$ is a best $q$ rounded distribution for the matrix game $M_s)$}}
{
Let $\sigma_1^i(s)\leftarrow \sigma^{i-1}_1(s)$\;
}
\Else
{
Let $\sigma_1^i(s)$ be an arbitrary best $q$-rounded distribution over $\Gamma_1(s)$ for the matrix game $M_s$\;
}
}
\If{$(\sigma_1^i= \sigma_1^{i-1})$}{\Return{$\sigma_1^i$}\;}
}
\caption{VarHoffmanKarp($G$,$\epsilon$,$t$)}
\end{function}

\end{center}
\vspace*{-0.5cm}
\caption{\label{alg-main}Algorithm for solving ergodic games}
\vspace*{-0.3cm}
\end{figure}

\smallskip\noindent{\bf Computation of every iteration.} 
The computation of every iteration is as follows. 
The computation of the unique solution $g^i$ and $(v^i_s)_{s\in S}$ is obtained 
in polynomial time using linear programming. 
The fact that the solution is unique follows from the fact that once a strategy for Player~1 is
fixed, we obtain an MDP for Player~$2$, and then the MDP solution is unique.
For a state $s$, let $\D^q(s)$ denote the set of all $q$-rounded distributions over $\Gamma_1(s)$.
A $q$-rounded distribution $d$ is \emph{best} for the matrix game $M_s$ iff 
$d\in \arg\max_{d_1 \in \D^q(s)} \min_{a_2 \in \Gamma_2(s)} \sum_{a_1 \in \Gamma_1(s)} d_1(a_1) \cdot M_s[a_1,a_2]$.
The computation of a best $q$-rounded strategy is achieved as follows:
given an $(m_1 \times m_2)$-matrix game $M$, solve the following integer linear program for 
$v$ and $(x_i)_{1\leq i \leq m_1}$:
\begin{align*}
 \max \ & v \\
\text{subject to } \quad & v  \leq \sum_{i=1}^{m_1} M[i,j] \cdot x_i; \qquad 1\leq j \leq m_2,\\
 & \sum_{i=1}^{m_1} x_i  =1; \\
& x_i\cdot q \in \N; \qquad 1 \leq i \leq m_1\\
& v\cdot q \cdot \ell \in \Z;
\end{align*}
where $\ell$ is the gcd of all the entries of $M$. 
It was shown by Lenstra~\cite{Lenstra}, that any integer linear programming problem on an integer $(m_1\times m_2)$-matrix 
(that is, with $m_1$ variables) can be solved in time $2^{\poly(m_1)}\cdot \poly(m_2,\log a)$, where $a$ is an upper 
bound on the greatest integer in the matrix and associated vectors. Notice that we can simply scale our matrix with $q\cdot \ell$ and 
obtain our optimization problem in the required form. Since the entries in the original game was defined from a 
solution to an MDP (which can be represented using polynomially many bits, because the Player-1 strategy is 
$q$-rounded), we know that only polynomially many bits are needed to represent $M_s$ (also after scaling). 
Thus, such an integer linear programming problem can be solved in time 
$O(2^{\poly(m)}\cdot \poly(n,\log (\epsilon^{-1}),\log (\delta_{\min}^{-1})))$.
This gives us the desired time bound for every iteration.


\smallskip\noindent{\bf Turn-based game for correctness.}
For the correctness analysis, we consider a turn-based stochastic version 
of the game (which is not ergodic), and refer to the turn-based game as $G'=\RR(G)$.
The game $G'=\RR(G)$ is a bipartite game of exponential size.
For a state $s$ in $G$, let $S^q_s=\{(s\times d_1)\mid d_1 \in \D^q_s\}$.
The state space in $G'$ is $S'= (\bigcup_{s\in S}S^q_s) \cup S$. 
Whenever we mention $S$ in the rest of this paragraph it should be clear 
from the context if we refer to $S$ as a part of $G$ or $G'$. 
In $G'$, Player~1 controls the states in $S$ and Player~2 the ones in $\bigcup_{s\in S}S^q_s$. 
From state $s\in S$, for every $d_1 \in \D_s^q$, there is a transition from $s$ to $(s,d_1)\in S^q_s$ with reward~0;
and from each state $(s,d_1)\in S^q_s$ there are $|\mov_2(s)|$ actions. 
For an action $a_2 \in \Gamma_2(s)$, the probability distribution over the next 
state is given by $\trans(s,d_1,a_2)$, and the reward is given by $\ExpRew(s,d_1,a_2)$.
Given a $q$-rounded strategy $\sigma_1$ for Player~1 in $G$ and a positional
strategy $\sigma_2$, if we interpret the strategies in $G'$, then the mean-payoff
value in $G'$ is exactly half of the mean-payoff value in $G$. 

\smallskip\noindent{\bf Correctness analysis and bound on iterations.}
We now present the correctness analysis, and the bound on the number of iterations follows.
The classic strategy-iteration algorithm computes the same series of strategies for Player~1 on $\RR(G)$ as our 
modified Hoffman-Karp algorithm does on the original game\footnote{The proof that the strategy-iteration algorithm works for turn-based mean-payoff games seems to be folk-lore, and also see~\cite{RCN} for the related class of discounted games.
Moreover, though $\RR(G)$ is not almost-sure ergodic, if we consider an ergodic component $C$ in $G$, and consider the corresponding set of states in $\RR(G)$, 
then from all states in $C$ every other state in $C$ is visited infinitely often with probability~1 in $\RR(G)$.
Thus for the concrete game $\RR(G)$, the proof can also be done similarly to the proof by Hoffman and Karp~\cite{HK} for ergodic games, 
by picking $t$ as a state in $C$ in $\RR(G)$.}. 
This is because, if we consider a fixed strategy for Player~1 in $\RR(G)$ and the corresponding strategy in $G$, 
then the best response positional strategy for Player~2 in $\RR(G)$ and $G$ resp. must correspond to each other. 
Then, by the way the potentials are calculated by the two algorithms, we get the same potential for a given state $s\in S$
for Player~1 in $\RR(G)$ as we do for the corresponding state in $G$ (they are precisely the same, since the value in $\RR(G)$, for any given strategy profile, is half the value of $G$, and thus, when we have taken two steps in $\RR(G)$ we have subtracted precisely the value of $G$). 
For $d_1 \in \D^q_s$, the potential of state $(s,d_1)$ in $\RR(G)$ is the same as the value ensured for 
Player~1 in $M_s$, if Player~1 plays $d_1$. 
Thus, also the next strategy for Player~1 is the same.
Thus, since the turn-based algorithm correctly finds the optimal strategy for Player~1, 
our modified Hoffman-Karp algorithm also correctly finds the $q$-rounded strategy 
that guarantees the highest value in $G$ for all states, among all $q$-rounded strategies. 
Since the best $q$-rounded strategy in $G$ is $\epsilon$-optimal for $G$ (by Corollary~\ref{cor:q-rounded strategies}), 
we have thus found an $\epsilon$-optimal strategy.
It is well known that the classic strategy-iteration algorithm only considers each strategy for Player~1 once (because the potential of the strategies picked by Player~1 are monotonically increasing in every iteration of the loop) . 
Therefore our $\VarHoffmanKarp$ algorithm requires at most $q^{m\cdot n}$ iterations, since there are most $q^{m\cdot n}$ strategies 
that are $q$-rounded.

\smallskip\noindent{\bf Inefficiency in reduction to $\RR(G)$.}
Observe that we only use $\RR(G)$ for the correctness analysis, and do not 
explicitly construct $\RR(G)$ in our algorithm.
Constructing $\RR(G)$ and then solving $\RR(G)$ using strategy iteration 
could also be used to compute $\epsilon$-optimal $q$-rounded strategies.
However, as compared to our algorithm there are two drawbacks in constructing 
$\RR(G)$ explicitly.
First, then every iteration would take time polynomial in $q$ (which is exponential in $n$), 
whereas every iteration of our algorithm requires only polynomial time in $n$ and $\log q$.
Second, our algorithm only requires polynomial space, whereas the construction 
of $\RR(G)$ would require space polynomial in $q$ (which is exponential in the input size).

\begin{theorem}
For an almost-sure ergodic CMPG, for all $\epsilon>0$, $\VarHoffmanKarp$ correctly computes an 
$\epsilon$-optimal strategy, and 
(i)~requires at most  
$O\left(\big(\epsilon^{-1}\cdot m \cdot n^2\cdot (\delta_{\min})^{-\rand}\big)^{n \cdot m}\right)$ 
iterations, and each iteration requires at most $O(2^{\poly(m)}\cdot \poly(n,\log (\epsilon^{-1}),\log (\delta_{\min}^{-1})))$ 
time; and
(ii)~requires polynomial space. 
\end{theorem}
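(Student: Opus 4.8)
The plan is to assemble the four assertions---correctness, the iteration bound, the per-iteration time bound, and the space bound---from the structural facts already established, letting the reduction to the turn-based game $\RR(G)$ carry the correctness argument. Throughout, $q = \lceil 4 \cdot \epsilon^{-1} \cdot m \cdot n^2 \cdot (\delta_{\min})^{-\rand}\rceil$ denotes the rounding parameter used by the algorithm.

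First I would prove correctness by exhibiting a step-by-step correspondence between one run of $\VarHoffmanKarp$ on $G$ and the classical strategy-iteration algorithm on the bipartite turn-based game $\RR(G)$. The key points are: (i)~a fixed $q$-rounded Player-1 strategy induces an MDP for Player~2 in both games, so the positional best responses correspond; (ii)~because traversing two levels of $\RR(G)$ subtracts exactly the value of $G$, the potentials $(v^i_s)_{s\in S}$ computed by the linear program in the algorithm coincide with the turn-based potentials of the corresponding states of $\RR(G)$, and the matrix game $M_s$ (with Player~1 restricted to $\D^q(s)$) encodes precisely the local improvement made by strategy iteration at the states $S^q_s$. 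Since strategy iteration on a turn-based mean-payoff game converges to a Player-1 strategy optimal among all $q$-rounded strategies---here one invokes the convergence of strategy iteration for turn-based games, applied with target $t$ chosen inside an ergodic component so that the relevant states of $\RR(G)$ are recurrent---and since Corollary~\ref{cor:q-rounded strategies} guarantees that the best $q$-rounded strategy is $\epsilon$-optimal in $G$, the returned strategy is $\epsilon$-optimal.

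Next I would bound the iterations using that strategy iteration inspects each Player-1 strategy at most once: the potential vector is monotonically non-decreasing across iterations and strictly increases whenever $\sigma_1^i \neq \sigma_1^{i-1}$, so no $q$-rounded strategy recurs; as there are at most $q^m$ $q$-rounded distributions per state and $n$ states, at most $q^{n \cdot m}$ strategies exist, and substituting the value of $q$ yields the stated $O\!\left(\big(\epsilon^{-1}\cdot m \cdot n^2\cdot (\delta_{\min})^{-\rand}\big)^{n \cdot m}\right)$ bound. For the per-iteration time, the potential system has a unique solution (the induced Player-2 MDP has a unique value) computable in polynomial time by linear programming, while each best $q$-rounded distribution for $M_s$ is found via the integer linear program described before the theorem; by Lenstra's algorithm this runs in $2^{\poly(m)} \cdot \poly(\cdot)$ time once one observes that the entries of $M_s$, arising from a $q$-rounded MDP solution, need only polynomially many bits, giving the claimed $O(2^{\poly(m)}\cdot \poly(n,\log(\epsilon^{-1}),\log(\delta_{\min}^{-1})))$ bound. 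Polynomial space follows since the algorithm never materializes $\RR(G)$: it stores only a single $q$-rounded strategy, using $O(n \cdot m \cdot \log q)$ bits (polynomial because $\epsilon$ and $\delta_{\min}$ are given in binary), together with the polynomially sized programs solved within an iteration.

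The main obstacle I anticipate is making the correctness correspondence fully rigorous despite $\RR(G)$ not being ergodic: one must verify that strategy iteration remains correct there by restricting attention to each ergodic component (within which all corresponding states of $\RR(G)$ are recurrent) and selecting the target $t$ accordingly, and must check that the factor-of-two scaling between values in $G$ and $\RR(G)$ threads correctly through the two algorithms' potential computations so that identical improvement choices are made at every state.
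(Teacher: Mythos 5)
Your proposal matches the paper's own proof in all essentials: correctness via the step-by-step correspondence with classical strategy iteration on the bipartite turn-based game $\RR(G)$ (including the factor-of-two potential scaling, the choice of target $t$ inside an ergodic component to handle the non-ergodicity of $\RR(G)$, and the appeal to Corollary~\ref{cor:q-rounded strategies}), the $q^{n\cdot m}$ iteration bound from monotone potentials, the per-iteration bound via linear programming plus Lenstra's algorithm on the integer program, and polynomial space from never materializing $\RR(G)$. The argument is correct and takes essentially the same route as the paper.
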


\section{Analysis of the Value-iteration Algorithm}
In this section we show that the classical value-iteration algorithm 
requires at most exponentially many steps to approximate the value 
of ergodic concurrent mean-payoff games (ECMPGs).
We first start with a few notations and a basic lemma.

\smallskip\noindent{\bf Notations.}
Given an ECMPG $G$, let $v^*$ denote the value of the game (recall that all 
states in an ECMPG have the same value).
Let $v_s^T =\sup_{\stra_1 \in \bigstra_1 } \inf_{\stra_2 \in \bigstra_2} 
\Exp_s^{\stra_1,\stra_2}[\Avg_T]$ denote the value function for the objective
$\Avg_T$, i.e., playing the game for $T$ steps.
For an ECMPG $G$ we call the game with the objective $\Avg_T$ as $G_T$.
A \emph{Markov} strategy only depends on the length of the play and the current state.
A strategy $\stra_1$ is optimal for the objective $\Avg_T$ if
$v_s^T =\inf_{\stra_2 \in \bigstra_2} \Exp_s^{\stra_1,\stra_2}[\Avg_T]$,
and a strategy $\stra_2$ is optimal for the objective $\Avg_T$ if 
$v_s^T =\sup_{\stra_1 \in \bigstra_1 } \Exp_s^{\stra_1,\stra_2}[\Avg_T]$.
For the objective $\Avg_T$, optimal Markov strategies exist for both the players. 
The function $v_s^T$ is computed iteratively in $T$: initially $v^0_s=0$ for all $s$, 
and in every iteration $j \geq 1$ compute the following one-step operator for all $s$:
consider a matrix $M_s^j$ such that for all $a_1 \in \mov_1(s)$ and 
$a_2\in \mov_2(s)$ we have 
\[
M_s^j(a_1,a_2)= \frac{1}{j}\cdot  \left( \cost(s,a_1,a_2) + (j-1)\cdot \sum_{t \in S} v^{j-1}_t \cdot \trans(s,a_1,a_2)(t) \right);
\]
and then obtain $v_s^j$ as the solution of the matrix games, i.e., 
\[
v_s^j= \sup_{d_1 \in \distr(\mov_1(s))} \inf_{d_2 \in \distr(\mov_2(s))} 
\sum_{a_1 \in \mov_1(s),a_2\in \mov_2(s)} d_1(a_1) \cdot d_2(a_2) \cdot M_s^j(a_1,a_2). 
\]
The above algorithm is refered to as the \emph{value-iteration} algorithm.
It is well-known that $v_s = \lim\inf_{T \to\infty} v_s^T =\lim\sup_{T \to\infty} v_s^T$~\cite{MN81},
i.e., the value of the finite-horizon games converge to the value of the game.
We first establish a result that shows that for all $T$ there exist $s$ and $s'$ such that 
$v^*$ is bounded by $v_s^T$ and $v_{s'}^T$. 

\begin{lemma}\label{lemm:finite-horizion}
For all ECMPGs $G$ and for all $T>0$, there exists a pair of states $s',s$, 
such that $v^T_{s'}\leq v^* \leq v^T_{s}$.
\end{lemma}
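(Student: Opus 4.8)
The plan is to work with the total-reward form of the value-iteration recursion and exploit two structural features of the one-step value operator, together with the ergodic optimality equation. Write $\Phi\colon \R^S \to \R^S$ for the operator $(\Phi u)_s = \mathrm{val}\big[\cost(s,a_1,a_2) + \sum_{t\in S}\trans(s,a_1,a_2)(t)\cdot u_t\big]$, where $\mathrm{val}[\cdot]$ denotes the value of the matrix game at $s$ with the indicated entries. Setting $W^T_s = T\cdot v^T_s$, the recursion defining $v^T_s$ rewrites as $W^T = \Phi(W^{T-1})$ with $W^0 = 0$, so that $W^T = \Phi^T(0)$. First I would record the two elementary properties of $\Phi$: (i) \emph{monotonicity}, i.e.\ $u\le u'$ coordinatewise implies $\Phi u \le \Phi u'$, which holds because $\trans(\cdots)\ge 0$ makes each matrix entry monotone in $u$ and the value of a matrix game is monotone in its entries; and (ii) \emph{translation invariance}, i.e.\ $\Phi(u + c\mathbf 1) = \Phi u + c\mathbf 1$ for every constant $c$, which holds because $\sum_{t}\trans(s,a_1,a_2)(t)=1$ adds the same constant $c$ to every entry, and the value of a matrix game increases by exactly $c$ under such a shift.

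Next I would invoke the ergodic optimality equation of Hoffman and Karp~\cite{HK}: since $G$ is ergodic there is a scalar equal to the value $v^*$ and a bias vector $h\in\R^S$ with $v^* + h_s = (\Phi h)_s$ for all $s$, i.e.\ $\Phi h = h + v^*\mathbf 1$. Iterating and using translation invariance gives $\Phi^T h = h + T\,v^*\,\mathbf 1$ by a one-line induction.

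The final step sandwiches the starting vector $0$ between two shifts of $h$. Let $c_- = \min_s h_s$ and $c_+ = \max_s h_s$, so that $h - c_+\mathbf 1 \le 0 \le h - c_-\mathbf 1$ coordinatewise. Applying the monotone map $\Phi^T$ and then translation invariance yields $h + T v^*\mathbf 1 - c_+\mathbf 1 \le \Phi^T(0) = W^T \le h + T v^*\mathbf 1 - c_-\mathbf 1$. Dividing by $T$ gives, for every state $s$, the bound $v^* + \tfrac{h_s - c_+}{T} \le v^T_s \le v^* + \tfrac{h_s - c_-}{T}$. Choosing $s$ to attain $h_s = c_+$ makes the lower bound exactly $v^*$, so $v^T_s \ge v^*$; choosing $s'$ to attain $h_{s'} = c_-$ makes the upper bound exactly $v^*$, so $v^T_{s'} \le v^*$. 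This produces the required pair with $v^T_{s'} \le v^* \le v^T_s$.

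The only nontrivial ingredient is the ergodic optimality equation, i.e.\ the existence of the bias vector $h$ together with the identification of the additive constant as the game value $v^*$; this is exactly the content of the Hoffman--Karp analysis and may be quoted. Everything else---rewriting the recursion in total-reward form, and the monotonicity and translation invariance of $\Phi$---is routine. The main thing to be careful about is that the \emph{same} vector $h$ drives both the upper and the lower bound, which is precisely what forces the two witnessing states to be an argmax and an argmin of $h$ (and which allows $s'=s$ harmlessly when $h$ is constant).
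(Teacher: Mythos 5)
Your proof is correct, but it takes a genuinely different route from the paper. The paper argues by contradiction with a probabilistic/strategy-composition argument: assuming $v^T_s < v^*$ for all $s$, it lets Player~2 cycle an optimal finite-horizon Markov strategy for $G_T$ against an $\epsilon/4$-optimal mean-payoff strategy of Player~1, splits the play into blocks of length $T$, bounds each block's expected average by $\max_s v^T_s \le v^*-\epsilon$, and derives a contradiction via Fatou's lemma. You instead work with the total-reward Bellman operator $\Phi$, use its monotonicity and translation invariance, and sandwich $0$ between the two shifts $h-(\max_s h_s)\mathbf 1$ and $h-(\min_s h_s)\mathbf 1$ of the bias vector $h$ from the Hoffman--Karp optimality equation $\Phi h = h + v^*\mathbf 1$; this is sound, and the existence of $(v^*,h)$ is a legitimate citation (the paper itself relies on exactly this fixpoint characterization in its existential-theory encoding in Section~\ref{sec:precise ergodic}). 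The trade-offs are instructive: your argument actually proves more than the lemma, namely the uniform rate $|v^T_s - v^*| \le (\max_t h_t - \min_t h_t)/T$ for every state $s$, i.e.\ $O(1/T)$ convergence of value iteration with the span of the bias as the constant; on the other hand, it uses ergodicity essentially (through the constant-gain optimality equation), whereas the paper's proof, as remarked immediately after the lemma, does not require ergodicity and extends to general CMPGs by taking $s$ to be a state of highest value.
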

\smallskip\noindent{\bf Proof overview:} 
The proof is by contradiction, that is, we assume that for all $s$ we have $v^T_s < v^*$ (the other case follows 
from the same game where the players have exchanged roles). 
The idea is that we can consider plays of $G$, defined by an optimal strategy for the objective 
$\LimInfAvg$ for Player~1 in $G$ and a Markov strategy for Player~2 that plays an optimal Markov 
strategy for objective $\Avg_T$ in $G_T$ for $T$ steps and then starts over. 
We then split the plays into sub-plays of length $T$. 
Since for all $s$ we have $v^T_s < v^*$  and because Player~2 plays optimally in the sub-plays,
in every segment of length $T$ the expected mean-payoff is strictly less than $v^*$. 
But then also the expected mean-payoff of the plays is strictly less than $v^*$. 
This contradicts that Player~1 played optimally 
(which ensures that the expected mean-payoff is at least $v^*$).
We now present the formal proof of the lemma.

\begin{proof}
We argue explicitly about $v^*\leq v^T_{s}$ and the other inequality follows by considering the same game, 
but where the players have exchanged roles.
Assume towards contradition that there exists an ECMPG $G$, a time-bound $T>0$, 
and an $\epsilon>0$, such that for all $s$ we have $v^* \geq  v^T_{s}+\epsilon$.

Let $\sigma_2'$ be an optimal Markov strategy for $\Avg_T$ in the finite-horizion game $G_T$, 
and let $\sigma_2$ be the Markov strategy in $G$ defined as follows: 
for plays of length $T'$ with last state $s'$ we have 
$\sigma_2(T',s')=\sigma_2'(T'\mod T,s')$, for all $T' \geq 0$ and states $s'$. 
Let $\sigma_1$ be an $\epsilon/4$-optimal strategy for the objective $\LimInfAvg$ for Player~1.
Then for all $s$ we have 
\[
v^* -\epsilon/4\leq \Exp_{s}^{\stra_1,\stra_2}[\LimInfAvg]=\Exp_{s}^{\stra_1,\stra_2}[\liminf_{T'\rightarrow \infty}\Avg_{T'}]\leq\liminf_{T'\rightarrow \infty}\Exp_{s}^{\stra_1,\stra_2}[\Avg_{T'}] \enspace ,
\]
where we used that $\stra_1$ is $\epsilon/4$-optimal in the first inequality and Fatou's lemma in the second inequality.
Let  $T'$ be such that $\Exp_{s}^{\stra_1,\stra_2}[\Avg_{T'}] > v^* -\epsilon/2$ and $T' \mod T \equiv 0$; 
by the preceding expression there exists $T_0$ such that for all $T_1 \geq T_0$ we have 
$\Exp_{s}^{\stra_1,\stra_2}[\Avg_{T_1}] > v^* -\epsilon/2$ and hence such $T'$ always exists.
Let $\Theta_i$ be the random variable denoting the $i$-th state and action pairs $(s_i,a_1^i, a_2^i)$.
From the definition of $T'$ we get that 
\begin{align*}
v^* -\epsilon/2 <\Exp_{s}^{\stra_1,\stra_2} [\Avg_{T'}]
=\Exp_{s}^{\stra_1,\stra_2}\left[ \frac{1}{T'} \cdot \sum_{i=0}^{T'-1} \cost(\Theta_i)\right]
& = 
\Exp_{s}^{\stra_1,\stra_2}\left[\frac{T}{T'} \cdot \sum_{j=0}^{T'/T-1}\frac{1}{T} \cdot \sum_{i=0}^{T-1} \cost(\Theta_{i+j\cdot T})\right] \\[3ex]
& = \frac{T}{T'} \cdot \sum_{j=0}^{T'/T-1} \Exp_{s}^{\stra_1,\stra_2}\left[\frac{1}{T}  \cdot \sum_{i=0}^{T-1} \cost(\Theta_{i+j\cdot T})\right]
\enspace ,
\end{align*}
where the first equality is expanding the definition of $\Avg_T$;
the second equality is obtained by splitting the sum into sub-sums of length $T$; 
and the third equality is by linearity of expectation.
For any $j$, the number 
\[c_j=  \Exp_{s}^{\stra_1,\stra_2}\left[\frac{1}{T}\sum_{i=0}^{T-1} \cost(\Theta_{i+j\cdot T})\right]\] 
is at most $v^T_{s_{j\cdot T}}$, because $\sigma_2$  in round $i+j\cdot T$, for $0\leq i < T$, played as 
$\sigma_2'(i,s')$ which is optimal for $\Avg_T$ in $G_T$ 
(note that $c_j$ would be precisely $v^T_{s_{j\cdot T}}$ if also $\sigma_1$ in round $i+j\cdot T$, for $0\leq i < T$, 
played as an optimal strategy for $\Avg_T$ in $G_T$, by definition of the value in $G_T$). 
Note that by the assumption (towards contradiction) we have 
$v^T_{s_{j\cdot T}} \leq v^* -\epsilon$, and hence $c_j \leq v^* -\epsilon$.
Therefore, \begin{align*}
v^*-\epsilon/2 &<\frac{T}{T'} \cdot \sum_{j=0}^{T'/T-1}c_j
\leq \frac{T}{T'} \cdot \sum_{j=0}^{T'/T-1}(v^*-\epsilon)= v^*-\epsilon
\enspace ,
\end{align*}
which is a contradiction.
The desired result follows.
\end{proof}

Note that the proof of the above lemma does not require that the game is ergodic.
The lemma is easily extended to general CMPGs by considering $s$ in the proof 
to be the state of the highest value $v_s$. 

\smallskip\noindent{\bf The numbers $\UH$ and $\OH$.} 
Given an ECMPG $G$, strategies $\stra_1$ and $\stra_2$ for the players,
and two states $s$ and $t$, let $H_{s,t}^{\stra_1,\stra_2}$ denote the 
expected hitting time from $s$ to $t$, given the strategies.
Let $H_{\stra_1} = \sup_{\stra_2 \in \bigstra_2} \max_{s,t \in S} H_{s,t}^{\stra_1,\stra_2}$;
and $\UH = \inf_{\stra_1 \in \bigstra_1} H_{\stra_1}$ and 
$\OH = \sup_{\stra_1 \in \bigstra_1} H_{\stra_1}$.
Intuitively, $\UH$ is the minimum expected hitting time between all 
state pairs that Player~1 can ensure against all strategies of Player~2.

\begin{lemma}
For all ECMPGs $G$ we have $\UH \leq \OH \leq n \cdot (\delta_{\min})^{-\rand}$.
\end{lemma}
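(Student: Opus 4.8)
The plan is to establish the two inequalities separately. The bound $\UH \leq \OH$ is immediate from the definitions, since $\UH = \inf_{\stra_1 \in \bigstra_1} H_{\stra_1}$ and $\OH = \sup_{\stra_1 \in \bigstra_1} H_{\stra_1}$ are an infimum and a supremum of the same quantity $H_{\stra_1}$ over the same set $\bigstra_1$. Hence all the work lies in proving $\OH \leq n \cdot (\delta_{\min})^{-\rand}$, and for this it suffices to show that $H_{s,t}^{\stra_1,\stra_2} \leq n \cdot (\delta_{\min})^{-\rand}$ for \emph{every} strategy pair $(\stra_1,\stra_2)$ and \emph{every} pair of states $s,t$; the bound on $\OH$ then follows by taking the supremum over $\stra_2$, the maximum over $s,t$, and finally the supremum over $\stra_1$.

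First I would fix $\stra_1,\stra_2,s,t$ and exploit ergodicity. Since $G$ is an ECMPG, from every state the state $t$ is visited infinitely often almost surely under every strategy profile; in particular the singleton $T=\set{t}$ is reachable with positive probability from every state under every strategy profile. Therefore Lemma~\ref{lemm:prob to reach} applies \emph{from any state}, yielding that from any state the probability of reaching $t$ within $n$ steps is at least $p := (\delta_{\min})^{\rand}$. The point I would emphasise is that this bound is uniform over all strategy profiles, so it also holds for every continuation of the play determined by an arbitrary finite history. Consequently, conditioned on any history of length $kn$ along which $t$ has not yet been reached, the conditional probability of failing to reach $t$ during the next block of $n$ steps is at most $1-p$.

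Next I would run a standard geometric-tail argument. Writing $X$ for the first hitting time of $t$ starting from $s$, the uniform block bound gives $\Pr[X > kn] \leq (1-p)^k$ for all $k \geq 0$, by induction on $k$ using the tower property. Summing the tail formula for the expectation and grouping the indices $j$ into blocks of length $n$ (using that $X > j$ with $j \geq kn$ implies $X > kn$) then gives
\[
H_{s,t}^{\stra_1,\stra_2} = \E[X] = \sum_{j \geq 0} \Pr[X > j] \leq n \cdot \sum_{k \geq 0} \Pr[X > kn] \leq n \cdot \sum_{k \geq 0} (1-p)^k = \frac{n}{p} = n \cdot (\delta_{\min})^{-\rand},
\]
which is the desired per-profile bound.

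The only delicate point, and the one I would take care over, is the uniformity across blocks when the strategies are history-dependent: the conditional one-block success probability must be at least $p$ irrespective of the past. This is exactly what the ``for all strategy profiles'' phrasing of Lemma~\ref{lemm:prob to reach} supplies, since the behaviour on the block starting at time $kn$ is governed by the continuation strategies, which again constitute a strategy profile to which the lemma applies from the current state (reachability of $\set{t}$ being guaranteed by ergodicity). Everything else is a routine summation, and passing to the relevant suprema concludes $\OH \leq n \cdot (\delta_{\min})^{-\rand}$, completing the chain $\UH \leq \OH \leq n \cdot (\delta_{\min})^{-\rand}$.
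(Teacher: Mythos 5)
Your proof is correct and follows essentially the same route as the paper: the paper's (much terser) proof likewise notes that ergodicity guarantees positive reachability of $t$ from every state under every strategy profile, invokes Lemma~\ref{lemm:prob to reach} for the $(\delta_{\min})^{\rand}$ bound on each $n$-step block, and leaves the geometric-tail expectation bound implicit. Your write-up simply makes explicit the details the paper omits, in particular the uniformity of the block bound over histories and the summation $\E[X] \leq n/p$.
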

\begin{proof}
Since $G$ is ergodic for all strategy profiles and for all state pairs $s$ and $t$,
the state $t$ is reached from $s$ with positive probability (by definition of 
ergodicity).
Hence the desired result follows from Lemma~\ref{lemm:prob to reach}. 
\end{proof}

We now present our main result for the bounds required for approximation by 
the value-iteration algorithm.

\begin{theorem}\label{thrm:val-iter}
For all ECMPGs, for all $0< \epsilon<1$, and all $T\geq 4\cdot \UH \cdot c\cdot \log c$, 
for $c= 2\cdot \epsilon^{-1}$, 
we have that $v^* -\epsilon \leq \min_s v^T_s \leq v^* \leq \max_s v^T_s\leq v^* +\epsilon$.
\end{theorem}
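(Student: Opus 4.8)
The two middle inequalities $\min_s v^T_s \le v^* \le \max_s v^T_s$ are exactly Lemma~\ref{lemm:finite-horizion}, so the whole statement reduces to the two outer inequalities $\max_s v^T_s \le v^*+\epsilon$ and $\min_s v^T_s \ge v^*-\epsilon$. I would deduce both from a single \emph{spread bound}: for every ordered pair of states $s,s'$ one has $v^T_{s} - v^T_{s'} \le \UH/T$. Indeed, taking $s$ to attain $\max_s v^T_s$ and $s'$ to attain $\min_s v^T_s$ gives $\max_s v^T_s - \min_s v^T_s \le \UH/T$; combined with the sandwich $\min_s v^T_s \le v^* \le \max_s v^T_s$ this yields $\max_s v^T_s \le \min_s v^T_s + \UH/T \le v^*+\UH/T$ and symmetrically $\min_s v^T_s \ge v^*-\UH/T$. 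Finally, since $c=2\epsilon^{-1}\ge 2$ we have $4\UH c\log c \ge \UH\epsilon^{-1}$, so the hypothesis $T\ge 4\UH c\log c$ forces $\UH/T\le \epsilon$, completing the argument.

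It remains to prove the spread bound, and here I would work with total rewards $V^T_s := T\cdot v^T_s$, the optimal $T$-step accumulated reward from $s$; equivalently I must show $V^T_{s'}\ge V^T_s-\UH$. The idea is a \emph{strategy transfer} exploiting that Player~1 can navigate the ergodic graph cheaply. Let $\stra_1^H$ be a strategy witnessing $\UH$, i.e.\ $\sup_{\stra_2\in\bigstra_2}\max_{u,t}H^{\stra_1^H,\stra_2}_{u,t}\le \UH$, and let $\stra_1^s$ be an optimal Markov strategy for $\Avg_T$ started at $s$, so that against every Player~2 strategy the play from $s$ accumulates expected reward at least $V^T_s$ over $T$ steps. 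From $s'$ Player~1 plays in two phases: in phase~1 it follows $\stra_1^H$ until the first visit to $s$, at a random time $\tau$; from that moment it follows $\stra_1^s$ with its clock reset to $0$. Since all rewards are nonnegative the phase-1 contribution is at least $0$, and by the definition of $\UH$ we have $\Exp[\tau]\le \UH$ against the (adaptive) best response of Player~2, because $\tau$ is determined by phase~1 alone.

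The one genuinely delicate point --- and the step I expect to be the main obstacle --- is the horizon mismatch: once $s$ is reached at time $\tau$ only $T-\tau$ steps remain, whereas $\stra_1^s$ is tuned to the horizon $T$. I would handle this by truncation. Fix any Player~2 strategy and condition on $\tau\le T$: apply the guarantee of $\stra_1^s$ to the Player~2 behaviour that first minimises the reward of the initial $T-\tau$ steps of phase~2 and then continues arbitrarily; the full $T$-step sum is at least $V^T_s$, its last $\tau$ rewards are at most $\tau$, so the first $T-\tau$ rewards (the ones actually collected in phase~2) are at least $V^T_s-\tau$. When $\tau>T$ the collected reward is at least $0\ge V^T_s-\tau$ trivially. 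Taking expectations over $\tau$ gives $V^T_{s'}\ge \Exp[V^T_s-\tau]=V^T_s-\Exp[\tau]\ge V^T_s-\UH$, which is the spread bound. If one prefers to avoid this truncation step, the same $\UH/T$-type estimate can be obtained more crudely by making the navigation phase succeed within $O(\UH\log c)$ steps with probability $1-\epsilon/2$ via Markov's inequality together with geometric retries, which is exactly where the $c\log c$ factor in the stated bound on $T$ originates.
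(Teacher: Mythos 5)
Your proof is correct, but it takes a genuinely different route from the paper's. The paper also starts from Lemma~\ref{lemm:finite-horizion}, but it never compares two horizon-$T$ values to each other: instead it reserves a navigation budget of $c'=4\cdot\UH\cdot\log c$ steps, uses Markov's inequality with geometric retries to reach, with probability at least $1-c^{-1}$, a state $s''$ satisfying $v^*\leq v^{T'}_{s''}$ at the \emph{shorter} horizon $T'=T-c'$ (i.e., Lemma~\ref{lemm:finite-horizion} applied a second time, at horizon $T'$), and then runs the $T'$-horizon optimal Markov strategy from $s''$; the failure probability and the discarded $c'/T$ fraction of the horizon each cost roughly $\epsilon/2$. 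Your spread bound $\max_s v^T_s-\min_s v^T_s\leq O(\UH)/T$, proved by strategy transfer plus truncation (the last $\tau$ rewards of the hypothetical full $T$-step run cost at most $\tau$ in expectation because rewards lie in $[0,1]$), eliminates exactly the two ingredients that force the paper's $\log c$ factor: you need no high-probability hitting bound (a bound on $\Exp[\tau]$ suffices) and no second horizon. Consequently your argument proves a quantitatively stronger statement --- $T\geq O(\UH\cdot\epsilon^{-1})$ already suffices, improving the theorem's threshold $T\geq 8\cdot\UH\cdot\epsilon^{-1}\cdot\log(2\cdot\epsilon^{-1})$ by the logarithmic factor. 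One small repair is needed in your argument: $\UH$ is defined as an infimum over Player-1 strategies, and that infimum need not be attained, so a strategy ``witnessing $\UH$'' with $H_{\stra_1^H}\leq\UH$ is not guaranteed to exist. Do what the paper does and take $\stra_1^H$ with $H_{\stra_1^H}\leq 2\cdot\UH$ (or $\UH+\delta$ for arbitrary $\delta>0$). This only doubles your spread bound to $2\cdot\UH/T$, and since $T\geq 4\cdot\UH\cdot c\cdot\log c\geq 8\cdot\UH\cdot\epsilon^{-1}$ you still get $2\cdot\UH/T\leq\epsilon/4<\epsilon$, so nothing downstream changes.
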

\begin{proof}
Let $T\geq 4\cdot \UH \cdot c\cdot \log c$, for $c= 2\cdot \epsilon^{-1}$. 
Also, let $c'=4\cdot \UH\cdot \log c$ and $T'=T-c'$. 
By Lemma~\ref{lemm:finite-horizion} we have  $\min_s v^{T}_s \leq v^* \leq \max_s v^{T}_s$. 
We now argue that $v^*-\epsilon \leq \min_s v^{T}_s$, and then $\max_s v^T_s\leq v^* + \epsilon$ 
follows by considering the game where the players have exchanged roles.

Let $s'$ be some state in $\arg\min_{s'} v^T_{s'}$ and let $s''$ be some state such that $v^* \leq v^{T'}_{s''}$ 
(such a state exists by Lemma~\ref{lemm:finite-horizion}). 
Let $\sigma_1'$ be an optimal Markov strategy for the objective $\Avg_{T'}$ in $G_{T'}$, and 
let $\sigma_1^*$ be a strategy that ensures that the hitting time from $s'$ to $s''$ is at most $2\cdot \UH$, i.e.,
$H_{\stra_1^*}\leq 2\cdot \UH$ (such a strategy exists by definition of $\UH$).
Let  $\sigma_1$ be the strategy for Player~1 that plays as $\sigma_1^*$ until $s''$ is reached, 
and then switches to $\sigma_1'$.
Formally, until $s''$ is reached it plays as $\sigma_1^*$, and 
in the $i$-th round after reaching $s''$ the first time, 
if the play is in state $s$, the strategy $\sigma_1$ uses the distribution $\sigma_1'(i,s)$, 
for each $0\leq i<T'$, and after $T'$ steps  since the first visit to $s''$ the strategy $\sigma_1$ plays arbitrarily. 
Let $\sigma_2$ be an arbitrary strategy for Player~2. 

At any point before reaching $s''$, we have that the probability that we do not reach $s''$ within the next  $2\cdot H_{\stra_1^*}\leq 4\cdot \UH$ steps is at most $\frac{1}{2}$ by Markov's inequality. 
Therefore, the probability that we do not reach $s''$ within the first $c'=4\cdot \UH\cdot \log c$ steps is at most $c^{-1}$. 
We now consider two cases, either (1)~we do not reach $s''$ within $c'$ steps; or (2)~we do reach $s''$ within $c'$ steps.  
In case~(1) we get a mean-payoff of at least 0 (since all payoffs are at least 0). 
In case~(2) we split plays up in three parts: (i)~before reaching $s''$; (ii)~the first $T'$ steps after reaching $s''$; (iii)~the rest. 
The expected mean-payoff of the first and the last part is at least 0 and the expected mean-payoff of part (ii) is at least $v^{T'}_{s''}\geq v^*$, 
by definition of $s''$ and $\sigma_1$. 
We now conclude that the expected mean-payoff is at least 
\begin{align*}
(1-c^{-1}) \cdot \frac{T'\cdot v^{T'}_{s''}}{T} &\geq (1-c^{-1}) \cdot \frac{(T-c')\cdot v^*}{T} \\
&= v^*-c^{-1}\cdot v^*-(1-c^{-1})\cdot\frac{c'}{T}\cdot v^*\\
&\geq  v^* -c^{-1}- c^{-1} \qquad \text{(since $v^* \leq 1$ and $T=c' \cdot c$)}; \\
&=  v^*-\epsilon \enspace .
\end{align*}
Therefore against any strategy $\sigma_2$, the strategy $\sigma_1$ ensures at least $v^*-\epsilon$ 
for $\Avg_T$ in $G_T$. 
This is then also true for all optimal strategies for Player~1 for $\Avg_T$ in $G_T$ and thus the result follows.
\end{proof}

\begin{remark}\label{remark:val-iter}
Theorem~\ref{thrm:val-iter} presents the bound for value-iteration when the rewards are in 
the interval $[0,1]$. 
If the rewards are in the interval $[0,W]$, for some positive integer $W$, then for 
$\epsilon$-approximation we first divide all rewards by $W$, and then apply the results
of Theorem~\ref{thrm:val-iter} in the resulting game for $\epsilon/W$-approximation. 
We have shown that in the worst case $\UH$ is at most $n \cdot (\delta_{\min})^{-\rand}$.
If $\UH,W,\epsilon^{-1}$ are bounded by a polynomial, then the 
value-iteration algorithm requires polynomial-time to approximate;
and hence if $\UH$ and $W$ are bounded by polynomial, then the value-iteration 
algorithm is a $\FPTAS$. 
In particular, if either (i)~$\rand$ is constant and $(\delta_{\min})^{-1}$ is bounded by a polynomial, 
or (ii)~$(\delta_{\min})^{-1}$ is bounded by a constant and $\rand$ is logarithmic in $n$,
then $\UH$ is polynomial; and if $W$ is polynomial as well, then the value-iteration algorithm 
is a $\FPTAS$.
There could also be other cases where $\UH$ is polynomial, and then the value-iteration 
is a pseudo-polynomial time algorithm for constant-factor approximation.
\end{remark}

\section{Exact Value Problem for Almost-sure Ergodic Games}\label{sec:precise ergodic}
We present two results related to the exact value problem:
(1)~First we show that for almost-sure ergodic CMPGs the exact value 
can be expressed in the existential theory of the reals;
and (2)~we establish that the value problem for sure ergodic CMPGs 
is square-root sum hard.

\subsection{Value problem in existential theory of the reals}
We show how to express the value problem for almost-sure ergodic CMPGs 
in the existential theory of the reals (with addition and multiplication) 
in three steps (for details about the existential theory of the reals 
see~\cite{Canny88,BPR}).
In the general case of CMPGs the current known solution for the value problem 
in the theory of reals uses three quantifier alternations~\cite{CMH08}, and 
in the theory of reals one of the computationally expensive step is the 
quantifier alternation elimination.

\smallskip\noindent{\em Step~1: Ergodic decomposition computation.}
First we compute the ergodic decomposition of an almost-sure ergodic CMPG
in polynomial time,
and let $C_1, C_2,\ldots, C_\ell$, be the $\ell$ ergodic components.
The polynomial time algorithm is as follows: construct a graph with state 
space $S$, and put an edge $(s,t)$ iff $t$ is reachable from $s$ in the 
CMPG.
The bottom scc's of the graph are the ergodic components, where a bottom scc
is an scc with no out-going edges leaving the scc.

\smallskip\noindent{\em Step~2: Existential theory of the reals sentence
for an ergodic component.}
For an ergodic CMPG $G$, Hoffman-Karp~\cite{HK} shows that the value is the unique fixpoint 
of the strategy-iteration algorithm. 
The algorithm iteratively takes a strategy $\sigma_1$ for Player~1, 
computes the optimal best response strategy $\sigma_2$ for Player~2, and 
computes the potentials of each state $v_s^{\sigma_1}$ and the value $g^{\sigma_1}$ 
guaranteed by $\sigma_1$. A strategy for Player~1 that ensures a higher value than 
$g^{\sigma_1}$ is then, for every state $u$, to use an optimal distribution in the matrix game defined 
by $M[a_1,a_2]=\cost(u,a_1,a_2)+ \sum_{s \in S} \trans(u,a_1,a_2)(s) \cdot v^{\sigma_1}_s$.
We quantify over stationary strategies in the existential theory of the reals, and use
the following notation: for a set $\set{x_1,x_2,\ldots,x_k}$ of variables we write
$\ProbDist(x_1,x_2,\ldots,x_k)$ to denote the constraints (i)~$x_i \geq 0$ for $1 \leq i \leq k$, 
and (ii)~$\sum_{i=1}^k x_i=1$; which specifies that the set of variables forms 
a probability distribution. 
We can formulate the fixpoint of the Hoffman-Karp algorithm (and thus the value $g$) 
using existential first order theory as follows. 
Fix a specific state $s^*$, and then consider the following sentence 
where we quantify existentially over the variables 
$g,(x_{s,i})_{s\in S,i\in \Gamma_1(s)},(y_{s,j})_{s\in S,j\in \Gamma_2(s)},(v_s)_{s\in S}$, 
have the following constraints:
\begin{align}
& \Phi(g,(x_{s,i})_{s\in S,i\in \Gamma_1(s)},(y_{s,j})_{s\in S,j\in \Gamma_2(s)},(v_s)_{s\in S}) = \\
& \qquad 
\bigwedge_{s\in S} \bigwedge_{j\in \Gamma_2(s)} (g +v_s \leq \sum_{i\in \Gamma_1(s)} (x_{s,i} \cdot (\cost(s,i,j) + \sum_{t\in S}(\trans(s,i,j)(t) \cdot v_t))) \ \ \wedge \label{exp:value is less than} \\[2ex]
& \qquad \bigwedge_{s\in S} \bigwedge_{i\in \Gamma_1(s)} (g +v_s \geq \sum_{j\in \Gamma_2(s)} (y_{s,j} \cdot (\cost(s,i,j) + \sum_{t\in S}(\trans(s,i,j)(t) \cdot v_t))) \ \ \wedge \label{exp:value is greater than} \\[2ex]
& \qquad \bigwedge_{s\in S} \ProbDist(x_{s,1},x_{s,2} \ldots,x_{s,|\Gamma_1(s)|}) \land \bigwedge_{s\in S} \ProbDist(y_{s,1},y_{s,2} \ldots,y_{s,|\Gamma_2(s)|}) \ \ \wedge \\[2ex]
& \qquad (v_{s^*}=0) \enspace .
\end{align} 
Notice that $(\ref{exp:value is less than})$ and the fact that the variables $x_{s,i}$ gives a probability distribution, ensures that 
$x_{s,i}$ gives an optimal strategy in the matrix game of potentials, similar for $(\ref{exp:value is greater than})$ and $y_{s,j}$. 
Also, $(\ref{exp:value is less than})$ and $(\ref{exp:value is greater than})$ implies that 
\[\bigwedge_{s\in S} (g+v_s= \sum_{j\in \Gamma_2(s)}\sum_{i\in \Gamma_1(s)}(y_{s,j}\cdot x_{s,i}\cdot (\cost(s,i,j)+ \sum_{t\in S}(\trans(s,i,j)(t)\cdot v_t)))\enspace ,\] 
which together with $(\ref{exp:value is less than})$ ensures that 
\[\forall s: g +v_s = \max_{j \in \Gamma_2(s)} \sum_{i\in \Gamma_1(s)} (x_{s,i} \cdot (\cost(s,i,j) + \sum_{t\in S}(\trans(s,i,j)(t) \cdot v_t)))\enspace .\] 
The preceding equality together with $(v_{s^*}=0)$ ensures that $(v_s)_{s\in S}$ is the potential associated with the stationary strategy $x$, 
and hence, $g$ is the value of the game. 
The sentence $\Phi$ in the existential theory of the reals for the value is
\[
\exists g,(x_{s,i})_{s\in S,i\in \Gamma_1(s)},(y_{s,j})_{s\in S,j\in \Gamma_2(s)},(v_s)_{s\in S}:
\Phi(g,(x_{s,i})_{s\in S,i\in \Gamma_1(s)},(y_{s,j})_{s\in S,j\in \Gamma_2(s)},(v_s)_{s\in S}); \] 
and $g$ denotes the value of the component.

\smallskip\noindent{\em Step~3: Existential theory of the reals sentence
for an almost-sure ergodic CMPG.}
Given a real number $\lambda$ and an almost-sure ergodic CMPG $G$, 
we now give an existential theory of the reals sentence, which can be satisfied iff $G$ has value at most $\lambda$.
Let $C_1, C_2,\ldots,C_\ell$ be the ergodic components, and let $C=\bigcup_{i=1}^\ell C_i$.
We denote by $\Phi_{C_i}$ the existential theory of the reals sentence for the value 
in component $C_i$ (as described in Step~2) and the variable $g_i$ is the value.
The existential theory sentence for other states is given using the formula 
for reachability games.
We quantify existential over the variables $((z_s)_{s \in S},(x_{i,s})_{s\in (S \setminus C),i\in \Gamma_1(s)},(y_{s,j})_{s\in (S\setminus C),j\in \Gamma_2(s)})$
and have the following constraints:
\[
\begin{array}{lll}
&\bigwedge_{1\leq i\leq \ell} &\Phi_{C_i} \ \ \wedge \\[2ex]
\bigwedge_{s\in (S\setminus C)}&\bigwedge_{j\in \Gamma_2(s)}&(z_s\leq \sum_{i\in \Gamma_1(s)} \sum_{t\in S} x_{s,i}\cdot\trans(s,i,j)(t)\cdot z_t) \ \ \wedge  \\[2ex]
\bigwedge_{s\in (S\setminus C)}&\bigwedge_{i\in \Gamma_1(s)}&(z_s\geq \sum_{j\in \Gamma_2(s)} \sum_{t\in S} y_{s,j}\cdot\trans(s,i,j)(t)\cdot z_t) \ \ \wedge  \\[2ex]
\bigwedge_{1 \leq i \leq \ell} &\bigwedge_{s \in C_i} & (z_s= g_i)  \ \ \wedge \\[2ex]
&\bigwedge_{s\in (S \setminus C)} &\ProbDist(x_{s,1},x_{s,2} \ldots,x_{s,|\Gamma_1(s)|}) \land \bigwedge_{s\in S} \ProbDist(y_{s,1},y_{s,2} \ldots,y_{s,|\Gamma_2(s)|}) \ \ \wedge \\[2ex]
&&(z_s\leq \lambda)  \enspace .
\end{array} 
\]
The idea is as follows: 
First note that the constraint $z_s=g_i$, for $s \in C$, ensures that for all states in the ergodic
component the variable $z_s$ denotes the value of $s$ (by the correctness of the formula $\Phi_{C_i}$ for 
an ergodic component $C_i$).
If the value of state $s \in (S\setminus C)$ in $G$ is $z_s$, for all $s$,  
then \[\bigwedge_{s\in (S\setminus C)}\bigwedge_{j\in \Gamma_2(s)}(z_s\leq \sum_{i\in \Gamma_1(s)} \sum_{t\in S} x_{s,i}\cdot\trans(s,i,j)(t)\cdot z_t)\] 
ensures that $x$ is an optimal strategy in the game. 
Also, similar to the ergodic part, 
\[\bigwedge_{s\in (S\setminus C)}\bigwedge_{j\in \Gamma_2(s)}(z_s\leq \sum_{i\in \Gamma_1(s)} \sum_{t\in S} x_{s,i}\cdot\trans(s,i,j)(t)\cdot z_t)\ ; \  \bigwedge_{s\in (S\setminus C)}\bigwedge_{i\in \Gamma_1(s)}(z_s\geq \sum_{j\in \Gamma_2(s)} \sum_{t\in S} y_{s,j}\cdot\trans(s,i,j)(t)\cdot z_t)\] 
implies that for all $s$:\[(z_s=\max_{j\in \Gamma_2(s)}\sum_{i\in \Gamma_1(s)}\sum_{t\in S} x_{s,i}\cdot \trans(s,i,j)(t)\cdot z_t) \enspace .\]
Therefore, the vector $\bar{z}$, such that $\bar{z}_s=z_s$ is a fixpoint for the value-iteration algorithm for 
reachability objectives. 
Hence, the fact that $z_s\leq \lambda$, implies that the least fixpoint $\tilde{z}$ of the value-iteration algorithm (which is the value of the game) 
is such that $\tilde{z}_s\leq \lambda$. 
Thus, we get the following theorem.

\begin{theorem}
The value problem for almost-sure ergodic CMPGs can be expressed in the existential
theory of the reals.
\end{theorem}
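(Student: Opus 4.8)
The plan is to exploit the fact that, unlike general concurrent mean-payoff games, almost-sure ergodic CMPGs admit \emph{optimal stationary} strategies for both players (as recorded in the Remark following the definition of the game classes). This is exactly what lets us stay inside the purely existential fragment: I would existentially quantify a candidate stationary strategy together with a candidate value/potential vector, and then assert \emph{optimality} through Bellman-type (in)equalities. Since optimality of a stationary strategy need only be checked against the finitely many pure opponent actions $a_2\in\Gamma_2(s)$, the ``$\max$ over the opponent'' collapses into a finite conjunction of polynomial inequalities rather than an inner quantifier. This removes the quantifier alternations that the general-case reduction of~\cite{CMH08} requires, and is the conceptual crux of the whole argument.

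Concretely I would proceed in three steps. \emph{First}, compute the ergodic decomposition $C_1,\dots,C_\ell$ by a polynomial-time graph computation (the bottom SCCs of the reachability graph on $S$); this is preprocessing and introduces no real variables. \emph{Second}, for each ergodic component $C_i$ I would write the existential sentence $\Phi_{C_i}$ encoding the Hoffman--Karp fixpoint: existentially quantify the value $g_i$, a Player-1 distribution $(x_{s,\cdot})$ and Player-2 distribution $(y_{s,\cdot})$ at each $s\in C_i$, and potentials $(v_s)$, with one anchor $v_{s^*}=0$. The two families of inequalities $(\ref{exp:value is less than})$ and $(\ref{exp:value is greater than})$, together with the $\ProbDist(\cdot)$ constraints, force $x$ (resp.\ $y$) to be optimal in the matrix game of potentials and hence force $g_i+v_s=\max_{j\in\Gamma_2(s)}\sum_{i\in\Gamma_1(s)} x_{s,i}\big(\cost(s,i,j)+\sum_{t}\trans(s,i,j)(t)\,v_t\big)$; by uniqueness of the Hoffman--Karp fixpoint this pins $g_i$ down as the true value of $C_i$.

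\emph{Third}, I would treat the transient states $S\setminus C$ as an almost-sure reachability game in which each component $C_i$ is a terminal of value $g_i$ (justified by the Remark, since mean-payoff ignores finite prefixes). Quantifying a reachability value vector $(z_s)$ and local strategies, the paired inequalities force $z_s=\max_{j}\sum_{i}\sum_t x_{s,i}\,\trans(s,i,j)(t)\,z_t$ on $S\setminus C$, with boundary conditions $z_s=g_i$ on each $C_i$, so that $\bar z$ is \emph{some} fixpoint of the reachability value-iteration operator; conjoining $z_s\le\lambda$ then expresses $v_s\le\lambda$. The step I expect to be the main obstacle is precisely this pinning of the \emph{correct} fixpoint without alternation: on the transient part the value is the \emph{least} fixpoint, yet the sentence only asserts the existence of \emph{a} fixpoint below $\lambda$. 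I would resolve this by a monotonicity argument — the least fixpoint $\tilde z$ satisfies $\tilde z\le\bar z$ coordinatewise, so a witnessing fixpoint with $z_s\le\lambda$ exists if and only if $\tilde z_s\le\lambda$ — which is why threshold-testing against $\lambda$, rather than computing the value outright, is what keeps the entire statement existential.
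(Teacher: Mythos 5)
Your proposal is correct and follows essentially the same route as the paper's proof: polynomial-time ergodic decomposition via bottom SCCs, the existential Hoffman--Karp fixpoint sentence (with anchored potentials and both players' Bellman inequalities collapsing the opponent's maximization into a finite conjunction over pure actions) for each ergodic component, and a reachability-style fixpoint on the transient states with boundary conditions $z_s=g_i$ and the threshold constraint $z_s\le\lambda$. Your resolution of the least-fixpoint issue by monotonicity --- a fixpoint below $\lambda$ exists iff the least fixpoint (the value) is below $\lambda$ --- is exactly the argument the paper invokes, stated slightly more explicitly in both directions.
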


\subsection{Square-root sum hardness}
In this section we show that the value problem for sure ergodic CMPGs is at least
as hard as the square-root sum problem by generalizing the example we presented 
in Figure~\ref{fig:ergodic sqrt intro}.

\smallskip\noindent{\bf Square-root sum problem.}
The \emph{square-root sum problem} is the following decision problem: 
Given a positive integer $v$ and a set of positive integers $\{n_1,\dots,n_\ell\}$, is $\sum_{i=1}^\ell \sqrt{n_i}\geq v$? 
The problem is known to be in the the fourth level of the counting hierarchy~\cite{ABKPM09}, 
but it is a long-standing open problem if it is in $\NP$.

\smallskip\noindent{\bf Reduction to sure ergodic CMPGs.} 
The reduction is similar to~\cite{EY,EY10}. 
First we define a family of ergodic 
CMPGs $\{G_b\mid b\in \N\}$, such that $G_b$ has value $\sqrt{b}$.  
Given an instance of the square-root sum problem, $(v,\{n_1,\dots,n_\ell\})$, 
we use our family to get an ergodic CMPG $G_{n_i}$ for each number $n_i$. 
We use one more state $s^*$, with one action for each player. 
The successor of $s^*$ is $G_{n_i}$ with probability $\frac{1}{\ell}$ for every $i$. 
This ensures that the value of $s^*$ is $\frac{\sum_i \sqrt{n_i}}{\ell}$. 
Thus, the value of $s^*$ is at least $\frac{v}{\ell}$ iff $\sum_i \sqrt{n_i}\geq v$. 
Notice that we reach an ergodic component in precisely one step from $s^*$, 
and thus the game is sure ergodic.

\smallskip\noindent{\em The numbers $k_b$ and $d_b$.}
First we  define $G_b$, for $b\not \in \{1,2,4\}$. We define $G_b$ for $b\in \{1,2,4\}$ afterwards. 
To define $G_b$ for $b\not\in \{1,2,4\}$, we use two numbers $k_b$ and $d_b$, such that $k_b>d_b>0$, defined as follows: 
Let $k_b$ be the smallest positive integer such that $k_b^2>b$. 
Let $d_b=2\cdot k_b-\frac{2\cdot b}{k_b}$, implying that $b=k_b^2-\frac{d_b\cdot k_b}{2}$. 
This gives us directly that $d_b>0$  (and hence also $\frac{d_b\cdot k_b}{2}\in \N$).  
We show that $k_b>d_b$. First, for $b=3$, we see that $k_3$ is $2$ and $3=2^2-\frac{1\cdot 2}{2}$ and thus  $d_3=1$, 
implying that $k_3>d_3$. For $9>b\geq 5$, we see that $k_b=3$ and $d_b\in [\frac{2}{3};\frac{8}{3}]$ and again have that $k_b>d_b$. 
For $b\geq 9$, we show the statement using contradiction. Assume therefore that $d_b\geq k_b$. 
We then get that $b=k_b^2-\frac{d_b\cdot k_b}{2}\Rightarrow b\leq \frac{k_b^2}{2}$. 
By definition of $k_b$ we know that $b\geq (k_b-1)^2=k_b^2+1-2\cdot k_b\geq k_b^2+1-\frac{k_b}{2}\cdot k_b>\frac{k_b^2}{2}$. 
That is a contradiction. The second to last inequality is because for $b\geq 9$, we have that $k_b\geq 4$. Thus, $k_b>d_b$ for $b\not\in \{1,2,4\}$.

\smallskip\noindent{\em Construction of $G_b$.}
For a positive integer $b\not \in \{1,2,4\}$, we define $G_b$ as follows. There are two states in $G_b$, $u$ and $w$. The state $w$ has a single action for Player~1 and a single action for Player~2, $a_w$ and $b_w$ respectively, and the successor of $w$ is always $u$. Also $\cost(w,a_w,b_w)=k_b$. The state $u$ has two actions for each of the two players. Player~1 has actions $a_u^1$ and $a_u^2$. Player~2 has actions $b_u^1$ and $b_u^2$. For any pair of actions $a_u^i$ and $b_u^j$ we have that the successor, $\delta(u,a_u^i,b_u^j)$ is $w$, except for $a_u^1$ and $b_u^1$ for which the successor is $u$ with probability $\frac{d_b}{k_b}$ and $w$ with probability $1-\frac{d_b}{k_b}$. Note that $\frac{d_b}{k_b}$ is a number in $(0,1)$, since $k_b>d_b>0$. The rewards $\cost(u,a_u^1,b_u^2)=\cost(u,a_u^2,b_u^1)$ are $k_b-d_b$. The rewards $\cost(u,a_u^1,b_u^1)=\cost(u,a_u^2,b_u^2)$ are $k_b$. The game is ergodic, since $\frac{d_b}{k_b}<1$, and thus there is a positive probability to change to the other state in every step, no matter the choice of the players. There is an illustration of $G_b$ in Figure~\ref{fig:ergodic sqrt}.

\begin{figure}
\begin{center}
\begin{tikzpicture}[node distance=3cm]
\matrix (v1) [label=right:$w$,minimum height=1.5em,minimum width=1.5em,matrix of math nodes,nodes in empty cells, left delimiter={.},right delimiter={.}]
{
\\
};
\draw[black] (v1-1-1.north west) -- (v1-1-1.north east);
\draw[black] (v1-1-1.south west) -- (v1-1-1.south east);
\draw[black] (v1-1-1.north west) -- (v1-1-1.south west);
\draw[black] (v1-1-1.north east) -- (v1-1-1.south east);
\matrix (s) [label=left:$u$,left of=v1,minimum height=1.5em,minimum width=1.5em,matrix of math nodes,nodes in empty cells, left delimiter={.},right delimiter={.}]
{
&\\
&\\
};
\draw[black] (s-1-1.north west) -- (s-1-2.north east);
\draw[black] (s-1-1.south west) -- (s-1-2.south east);
\draw[black] (s-2-1.south west) -- (s-2-2.south east);
\draw[black] (s-1-1.north west) -- (s-2-1.south west);
\draw[black] (s-1-2.north west) -- (s-2-2.south west);
\draw[black] (s-1-2.north east) -- (s-2-2.south east);

\draw[->](s-1-1.center) .. controls +(45:1) and +(90:1).. node[midway,above] (x) {$\frac{d_b}{k_b}$} (s-1-1.north);
\draw[->,out=45,in=130](s-1-1.center) 
 to node[midway,above] (x) {$1-\frac{d_b}{k_b}$} (v1);
\draw[->,out=65,in=155,dashed](s-1-2.center) to (v1);
\draw[->,out=-65,in=-155](s-2-2.center) to (v1);
\draw[->,out=-45,in=-130,dashed](s-2-1.center) to (v1);
\draw[->](v1.center) to (s);
\end{tikzpicture}
\caption{The game $G_b$, such that $b=k_b^2-\frac{k_b\cdot d_b}{2}$.  Dashed edges has reward $k_b-d_b$ and non-dashed edges has reward $k_b$. 
Actions are annotated with probabilities if the probability is not 1.\label{fig:ergodic sqrt}}
\end{center}
\end{figure}
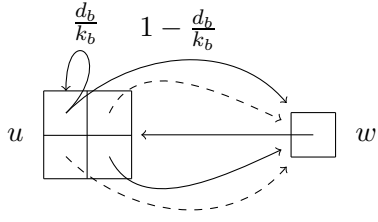

\begin{remark}
For $b\not \in \{1,2,4\}$, the numbers $k_b$ and $d_b$ have short binary descriptions. The number $k_b>0$ cannot be larger than $\sqrt{2\cdot b}$, because otherwise $k_b^2-\frac{d_b\cdot k_b}{2}\geq \frac{k_b^2}{2}>b$. It must also be a  positive integer and thus has a binary representation of length at most $\frac{1+\log b }{2}$. Also $k_b>d_b>0$ and $\frac{d_b\cdot k_b}{2}$ is a positive integer and thus, $d_b$ has a binary representation of length at most $\frac{1+\log b }{2}+\frac{1+\log b }{2}=1+\log b$. 
\end{remark}

\smallskip\noindent{\em $G_b$ for $b\in \set{1,2,4}$.}
One can, using the preceding, define $G_b$ for all  positive integers $b$ which is not in $\{1,2,4\}$. It is also easy to construct games, which has value $\sqrt{1}$ and $\sqrt{4}$, since they are integers. Let $G_1$ be an arbitrary ergodic CMPG of value $1$ and $G_4$ be an arbitrary ergodic CMPG of value $2$. One can also construct a ergodic CMPG, which has value $\sqrt{2}$,  similar to our construction of $G_b$ for $b\not \in \{1,2,4\}$, using fractional\footnote{We do not use fractional $k_b$ in general only because it becomes harder to argue that the games has  a polynomial length binary representation.} $k_2$ and $d_2$. We see that $k_2=\frac{3}{2}$ and $d_2=\frac{1}{3}$ gives us that $2=k_2^2-\frac{d_2\cdot k_2}{2}$, while ensuring that $k_2>d_2>0$. Let $G_2$ be the game defined analogous to $G_b$ for $b\not \in\{1,2,4\}$ using $k_2=\frac{3}{2}$ and $d_2=\frac{1}{3}$.

\smallskip\noindent{\em The value in $G_b$ is $\sqrt{b}$.}
We now argue that for a fixed $b\not \in \{1,4\}$, the game $G_b$ has value $\sqrt{b}$ (by definition, the CMPGs $G_1$ and $G_4$ had value $1$ and $2$ resp.). 
We use that $b=k_b^2-\frac{d_b\cdot k_b}{2}$ and that $k_b>d_b>0$. Let $\sigma_1$ be some arbitrary stationary optimal strategy for Player~1.  
Let $p$ be the probability that $\sigma_1$ plays $a_u ^1$. Let $a$ be the optimal potential of state $u$, then the potential of $w$ is 0. 
Let $v$ be the value of $G_b$. Then as shown by Hoffman-Karp~\cite{HK} the strategy $\sigma_1$ must satisfy the equation system
\begin{align*}
a &=p\cdot (k_b-d_b)+(1-p)\cdot k_b-v\\
a &=(1-p)\cdot (k_b-d_b)+p\cdot k_b+\frac{d_b\cdot p \cdot a}{k_b}-v\\
0 &=a+k_b-v\\
\end{align*}
From the third equation we obtain $a=v-k_b$, and substituting in the first
equation we obtain that 
\[
2\cdot k_b =p\cdot d_b+2\cdot v \quad \Rightarrow  p=\frac{2\cdot k_b-2\cdot v}{d_b}
\]
Substituting $a$ and $p$ from above into the second equation we obtain
\begin{align*}
0 &=2\cdot k_b-2\cdot v-d_b+2\cdot k_b-2\cdot v+\frac{d_b\cdot (2\cdot k_b-2\cdot v) \cdot (v-k_b)}{k_b\cdot d_b}\\
\Rightarrow 
0 & =2\cdot k_b-d_b-\frac{2\cdot v^2}{k_b}\\
\Rightarrow 
0 &=\frac{k_b^2}{2}-\frac{d_b\cdot k_b}{4}-\frac{v^2}{2} \qquad \text{(Multiply by $k_b$ and divide by $4$).}
\end{align*}
Solving the above second degree equation for $v$ we obtain that 
\[
v =\frac{-0\pm\sqrt{-4\cdot (\frac{k_b^2}{2}-\frac{d_b\cdot k_b}{4})\cdot\frac{-1}{2}}}{2\cdot\frac{-1}{2}}
\qquad \Rightarrow
v = \pm\sqrt{b}
\]
Since we know that the value is positive (since all rewards are positive, because $k_b>d_b>0$), 
we see that $v=\sqrt{b}$.
Thus the desired property is established.

\begin{theorem}
The value problem for sure ergodic CMPGs is square-root sum hard.
\end{theorem}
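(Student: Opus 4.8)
The plan is to assemble the pieces already established into a polynomial reduction from the square-root sum problem to the value problem for sure ergodic CMPGs. Given an instance $(v,\{n_1,\dots,n_\ell\})$, I would first build, for each $i$, the ergodic CMPG $G_{n_i}$ according to the construction above; we have already shown that each $G_{n_i}$ has value $\sqrt{n_i}$. I would then form the combined game by adding the fresh state $s^*$, giving each player a single action there, and letting the successor distribution of $s^*$ place probability $\frac{1}{\ell}$ on (the initial state of) each $G_{n_i}$. This combined game together with the threshold $\lambda=\frac{v}{\ell}$ is the instance of the value problem produced by the reduction.

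Next I would verify the two structural properties needed for correctness. First, the construction has polynomial size: by the Remark preceding this subsection, for $b\notin\{1,2,4\}$ the numbers $k_b$ and $d_b$ admit binary descriptions of length $O(\log b)$, while the three exceptional games $G_1,G_2,G_4$ are of constant size; hence the total description of the combined game is polynomial in the input. Second, the game is sure ergodic: from $s^*$ every play enters one of the components $G_{n_i}$ in exactly one step and never leaves it, so every play reaches an ergodic component certainly, which is precisely the definition of sure ergodicity.

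The key quantitative step is to establish $v_{s^*}=\frac{\sum_{i=1}^\ell \sqrt{n_i}}{\ell}$. Here I would use three facts: each $G_{n_i}$ is an ergodic component whose value is $\sqrt{n_i}$ at all of its states; optimal stationary strategies exist for sure ergodic CMPGs (by the Remark following the game-class definitions); and the limit-average payoff is independent of the single-step prefix spent in $s^*$. Since from $s^*$ the play lands in $G_{n_i}$ with probability $\frac{1}{\ell}$ and the players can then realise the value $\sqrt{n_i}$ independently within each component, the value of the limit-average objective at $s^*$ is exactly the weighted average $\sum_{i}\frac{1}{\ell}\sqrt{n_i}$.

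Finally I would conclude the reduction: by the previous step, $v_{s^*}\le\frac{v}{\ell}$ holds if and only if $\sum_{i=1}^\ell\sqrt{n_i}\le v$, so a single query to the value problem with threshold $\frac{v}{\ell}$ decides the square-root sum question (up to complementation and the measure-zero boundary $\sum_i\sqrt{n_i}=v$, which is handled exactly as in the reductions of~\cite{EY,EY10}), yielding square-root sum hardness. I do not expect a genuine obstacle at this stage, since essentially all technical work lives in the construction of the family $\{G_b\}$ and the verification that its value is $\sqrt{b}$; the only points demanding care are the argument that routing through $s^*$ produces exactly the weighted average of the component values, and the bookkeeping guaranteeing that the assembled game has polynomial size.
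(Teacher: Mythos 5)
Your proposal is correct and follows essentially the same route as the paper: the paper's proof of this theorem is precisely the assembly you describe — the family $G_b$ with value $\sqrt{b}$, a fresh state $s^*$ with a single action per player and uniform transition probability $\frac{1}{\ell}$ to the components $G_{n_i}$, sure ergodicity because an ergodic component is reached in one step, polynomial size via the remark on $k_b$ and $d_b$, and the threshold $\frac{v}{\ell}$ comparison. Your explicit attention to the boundary case $\sum_i \sqrt{n_i}=v$ and to the direction of the inequality in the value problem is, if anything, slightly more careful than the paper's own presentation.
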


\section{Conclusion}
In this work we established the strategy complexity and the approximation 
complexity for ergodic, sure ergodic, and almost-sure ergodic mean-payoff games.
Our results also show that the approximation problem for turn-based stochastic
ergodic mean-payoff games is at least as hard as the value problem for SSGs.
In contrast, for concurrent deterministic almost-sure ergodic games, the 
value problem can be solved in polynomial time.
In concurrent deterministic games, in every ergodic component all states have
an unique successor, and hence an optimal strategy and the value can be 
computed in polynomial time. 
In any given concurrent deterministic almost-sure ergodic game, once the values of the ergodic components have been computed, the value iteration algorithm 
computes the values for the remaining states in $n$ 
iterations.
Moreover, we established that the value problem for sure ergodic games is 
square-root sum hard.
Note that for sure ergodic games with reachability objectives, the values can 
be computed in polynomial time by value iteration for $n$ iterations.
This shows informally that the hardness of sure ergodic games is due to 
mean-payoff objectives.
Since we have shown that values of ergodic games can be irrational, we 
conjecture that the value problem for ergodic games itself is sqaure-root 
sum hard, but an explicit reduction is likely to be cumbersome.

\end{document}